\newcommand\numberthis{\addtocounter{equation}{1}\tag{\theequation}}
\newtheorem{theorem}{Theorem}[section]
\newtheorem{proposition}[theorem]{Proposition}
\newtheorem{lemma}[theorem]{Lemma}
\newtheorem{Assumption}{Assumption}
\theoremstyle{definition}
\newtheorem{Definition}[theorem]{Definition}
\newtheorem{example}[theorem]{Example}
\newtheorem{Remark}[theorem]{Remark}
\newtheorem{remark}[theorem]{Remark}
\xdef\csname m\x\endcsname{\noexpand\mathbf{\x}}
\xdef\csname m\x\endcsname{\noexpand\mathbf{\x}}
\newcommand{\0}{\mathbf{0}}
\newcommand{\pr}[2]{\left\langle #1, #2 \right\rangle}
\newcommand{\R}{\mathbb{R}}
\newcommand{\Rp}{\mathbb{R}_{\ge0}}
\newcommand{\Rpp}{\mathbb{R}_{>0}}
\newcommand{\defeq}{\stackrel{\scriptscriptstyle{\mathrm{def}}}{=}}
\newcommand{\supp}{\operatorname{supp}}
\newcommand{\subg}{\partial^{\vee}}
\newcommand{\supg}{\partial^{\wedge}}
\newcommand{\agents}{{\mathcal{A}}}
\newcommand{\dom}{\operatorname{dom}}
\newcommand{\din}{\partial^{\mathrm{in}}}
\newcommand{\dout}{\partial^{\mathrm{out}}}
\newcommand{\Primal}{\operatorname{Primal}}
\newcommand{\Dual}{\operatorname{Dual}}
\newcommand{\Genflow}{\operatorname{GenflowLP}}
\newcommand{\goods}{{\mathcal{M}}}
\newcommand{\net}[2]{\operatorname{net}_{#1}(#2)}
\newcommand{\demand}[1]{{\mathcal{D}_{#1}}}
\newcommand{\galedemand}[1]{\mathcal{GD}_{#1}}
\newcommand{\EGdual}{\varphi}
\newcommand{\Gaux}{G^{\textrm{aux}}}
\newcommand{\Eaux}{E^{\textrm{aux}}}
\newcommand{\rev}[1]{\overleftarrow{#1}}
\newcommand{\cp}{h}
\newcommand{\mcp}{\mh}
\newcommand{\FQ}[1]{\operatorname{pdom}(#1)}
\newcommand{\approxi}{\alpha}
\newcommand{\ee}{\mathrm{e}}
\renewcommand{\cite}[1]{\citep{#1}}
\title{Approximating Competitive Equilibrium by Nash Welfare\thanks{A preliminary version of this paper was presented at the \emph{2025 Annual ACM-SIAM Symposium on Discrete Algorithms (SODA)~\cite{GargTV25}}.
This work was supported by the European Research Council (ERC) under the European Union's Horizon 2020 research and innovation programme (grant agreement no.~ScaleOpt--757481). J.~Garg was supported by NSF Grants CCF-1942321 and CCF-2334461. Y. Tao was supported by Grant 2023110522 from SUFE, National Key R\&D Program of China (2023YFA1009500), NSFC grant 61932002, Innovation Program of Shanghai Municipal Education Commission, and Fundamental Research Funds for Central Universities. Part of the work was done while L.~V\'egh was affiliated with the London School of Economics and Political Science, UK. \\ The authors are grateful to the anonymous referees for their comments and suggestions that helped to improve the presentation of the paper.}}
\author[1]{Jugal Garg}
\author[2]{Yixin Tao} 
\author[3,4]{L{\'{a}}szl{\'{o}} A. V{\'{e}}gh}
\affil[1]{University of Illinois at Urbana-Champaign, USA}
\affil[2]{Key Laboratory of Interdisciplinary Research of Computation and Economics\\ Shanghai University of Finance and Economics, China }
\affil[3]{Hertz Chair for Algorithms and Optimization, University of Bonn, Germany}
\affil[4]{Corvinus Institute for Advanced Studies, Corvinus University, Budapest, Hungary} 
\date{}
\begin{document}
\maketitle
\thispagestyle{empty}
\begin{abstract}
We study the relationship between two central concepts in the allocation of divisible goods: competitive equilibrium (CE) and allocations that maximize Nash welfare, i.e., allocations where the weighted geometric mean of the utilities is maximal. When agents have homogeneous concave utility functions, these concepts coincide: the classical Eisenberg--Gale convex program that maximizes Nash welfare over feasible allocations yields a competitive equilibrium. However, they diverge for non--homogeneous utilities. From a computational perspective, maximizing Nash welfare amounts to solving a convex program for any concave utility functions, whereas computing CE becomes PPAD-hard already for separable piecewise linear concave (SPLC) utilities. 

We introduce the concept of \emph{Gale-substitute} utility functions, an analogue of the weak gross substitutes (WGS) property for the so-called Gale demand system. For Gale-substitutes utilities, we show that any allocation maximizing Nash welfare provides an approximate-CE with surprisingly strong guarantees, where every agent gets at least half the \emph{maximum} utility they can get at any CE, and is approximately envy-free.
Gale-substitutes include utility functions where computing CE is PPAD hard, such as all separable concave utilities and the previously studied non-separable class of Leontief-free utilities. We introduce a broad new class of utility functions called \emph{generalized network utilities} based on the generalized flow model. This class includes SPLC and Leontief-free utilities, and we show that all such utilities are Gale-substitutes.

Conversely, although some agents may get much higher utility at a Nash welfare maximizing allocation than at a CE, we show a `price of anarchy' type result: for general concave utilities, every CE achieves at least $(1/\ee)^{1/\ee} > 0.69$ fraction of the maximum Nash welfare, and this factor is tight. 
\end{abstract}

\newpage
\setcounter{page}{1}

\section{Introduction}
We investigate a setting where a set $\goods$ of divisible goods needs to be fairly allocated among a set $\agents$ of $n$ agents in a cardinal utility framework. Each agent $i\in \agents$ has preferences given by a utility function $u_i(.)\,:\, \Rp^\goods\to \Rp$. We make the standard assumptions that $u_i(.)$'s are monotone non-decreasing, concave, and satisfy $u_i(\0)=0$.\footnote{All concepts discussed in the Introduction will be formally defined in Section~\ref{sec:prelim}.} Two classical allocation concepts are Nash welfare and competitive equilibrium from equal incomes (CEEI). In Nash welfare, we select an allocation  $(\mx_i)_{i\in \agents}$ that maximizes $\prod_{i\in\agents} u_i(\mx_i)$. In CEEI, each agent is initially given an equal share of all goods and allowed to trade with others; the resulting allocation $(\mx_i)_{i\in \agents}$ corresponds to a competitive equilibrium (CE) outcome.

Both these concepts have desirable properties that have made them widely popular in fair division research. Nash welfare derives its name from its roots in cooperative bargaining, introduced by Nash~(\citeyear{nash1950bargaining}). In this model, a convex set $K\subseteq \R^n$ of feasible outcomes is given, along with a disagreement point $\md\in K$ and concave utility functions $u_i\,:\, K\to \Rp$. Nash showed that the unique bargaining solution satisfying natural axioms---Pareto optimality, invariance under affine transformations, symmetry, and independence of irrelevant alternatives---is the one that maximizes the geometric mean of the agents' gains over the disagreement point. Taking $K$ as the set of feasible allocations and $\md=\0$, the Nash bargaining solution exactly corresponds to the Nash welfare maximizing allocation in this setting (see also \cite{vazirani2012notion}). 

CEEI, introduced by \citet{Varian74}, also satisfies desirable properties such as Pareto optimality, scale invariance, and symmetry. In addition, it guarantees envy-freeness, which may not be the case for Nash welfare. For further discussion and references on these concepts, see \citet[Chapters 3 and 7]{Moulin03}. 

Nash welfare maximization and CEEI remarkably coincide for degree one positively homogeneous utilities, i.e., if $u(\alpha \mx)=\alpha u(\mx)$ for any $\alpha>0$.
This was first shown by \citet{eisenberg1959consensus} for linear utilities (i.e., $u(\mx) = \sum_{j\in\goods} v_{j}x_j$), 
and later extended by \citet{Eisenberg61} to the homogeneous case that includes classical examples of constant elasticity of substitution (CES), Cobb--Douglas, and Leontief utility functions. 

In general, Nash welfare maximization and CEEI can lead to different outcomes. Moreover, while the Nash welfare solution yields a unique utility profile for the agents, CEEI may admit multiple equilibria with significantly different utilities across agents. This raises a nontrivial \emph{equilibrium selection} problem~\cite{ArrowH71,harsanyi1998general}.

The problem of efficiently computing equilibria has been an important topic in economics, optimization, and computer science over the past decades. The existence of CEEI follows from classical results in general equilibrium theory by Arrow and Debreu~(\citeyear{arrow1954existence}) and by McKenzie~(\citeyear{Mckenzie1954}), based on fixed point theorems. While these results apply to more general exchange market models, they do not yield efficient computational methods. In fact, the computational complexity of finding equilibria is generally negative: even for the seemingly simple case of separable piecewise concave (SPLC) utilities, computing a CE falls into the class of PPAD-hard~\cite{ChenT09} problems, suggesting that an efficient algorithm is unlikely to exist. The hardness persists even under bounded linear utilities, where the utility from each good is capped, i.e., 
$u(\mx) = \sum_{j\in \goods} \min\{v_{j}x_j, \ell_{j}\}$~\citep{bei2019earning}.
Examples~\ref{eg:1} and~\ref{eg:non-thrifty} 
show that the set of CE can be non-convex and disconnected even under such preferences.  In contrast, Nash welfare maximization corresponds to solving a convex program and can therefore be computed efficiently for any concave valuations.

Besides homogeneous utilities, another important class where CEEI can be computed in polynomial time is \emph{weak gross substitutes (WGS)} utility functions.\footnote{A utility function satisfies the weak gross substitutes property when raising the price of one good does not decrease the demand for any other good.} Several fast algorithms have been designed for computing equilibria with WGS utilities~\cite{bei2019ascending,codenotti2005market,codenotti2005polynomial, garg2023auction}. While the WGS class includes many homogeneous utilities, there are also non-homogeneous examples where the Eisenberg--Gale solution may not correspond to a CE. Together, WGS and homogeneous utility functions cover the majority of known settings in which a CE can be computed efficiently.

In this paper, we study the relationship between two fundamental concepts in the allocation of divisible goods: competitive equilibrium and maximum Nash welfare allocations. We reveal a strong connection between these solution concepts that extend well beyond the classical homogeneous case.

While the classical fair division concepts are CEEI and Nash welfare maximizing allocations, our results apply more broadly to the \emph{Fisher market model} and weighted Nash welfare maximization. In the Fisher model, a fixed supply of goods $\goods$ is to be allocated among agents $\agents$, where each agent $i\in \agents$ has cardinal preferences given by a utility function $u_i(.)$. Each agent also has a fixed budget $b_i>0$. At competitive equilibrium (CE), prices $\mp\in\Rp^\goods$ of goods and fractional allocation $(\mx_i)_{i\in \agents}$ satisfy two conditions: {\em (i)} each agent $i$ gets an optimal bundle $\mx_i$ that maximizes $u_i(\mx_i)$ subject to the budget constraint $\langle \mp, \mx_i\rangle\le b_i$, and {\em (ii)} the market clears, meaning demand meets supply for every good. The CEEI setting corresponds to the special case where all budgets are equal, i.e., $b_i=1$ for all $i\in\agents$. Similarly, for given weights $b_i>0$, the weighted Nash welfare maximizing allocation is a feasible allocation of goods that maximizes $\prod_i u_i(\mx_i)^{b_i}$, or equivalently, $\sum_i b_i\log{u_i(\mx_i)}$. Classical results in \cite{eisenberg1959consensus,Eisenberg61} establish that these allocations coincide with CE outcomes for degree one positively homogeneous utilities and arbitrary budgets $b_i>0$. Henceforth, by Nash welfare maximizing allocation we refer to this weighted concept.

\subsection{Our contributions}
Our first result is a \emph{`price of anarchy'} type bound \cite{dubey,koutsoupias1999worst}.
One can interpret Nash welfare as a global measure of social utility. In contrast, a CE can be seen as an outcome reached by autonomous agents trying to improve their utilities individually. It turns out that, in the general setting of concave utilities, any CE recovers at least $(1/\ee)^{1/\ee} > 0.69$ fraction of the maximum Nash welfare, and this factor is tight (see Example~\ref{ex:poa}).

\begin{restatable}{theorem}{thmfour}\label{thm:4}
For Fisher markets under concave utility functions, the Nash welfare at any CE is at least $(1/\ee)^{1/\ee}$ times the maximum Nash welfare.
\end{restatable}

We then study guarantees of the maximum Nash welfare solution in the context of CE. As noted above, a major advantage of maximum Nash welfare is efficient computability: it can be arbitrarily well approximated for any concave utility functions, whereas computing CE becomes PPAD-hard already in very simple cases.
In this context, our first simple result shows that the maximum Nash welfare yields an approximate CE for all concave utility functions. As in the case of the Eisenberg--Gale program, one can interpret the optimal Lagrangian multipliers as prices.
By an $\alpha$-demand-approximate CE, we mean allocations and prices such that every agent receives at least $1/\alpha$-fraction of the maximum possible utility at those prices (see Definition~\ref{def:approx-market-eq}).  The result also implies approximate envy-freeness.

\begin{restatable}{theorem}{thmfirst}\label{thm:1}
For Fisher markets under concave utility functions, any Nash welfare maximizing allocation $(\my_i)_{i\in \agents}$, together with the dual prices $\mq$, form a 2-demand-approximate CE.
This also implies that $(\my_i)_{i\in \agents}$ is half envy-free when $b_i$'s are identical, i.e., $u_i(\my_i) \ge \tfrac12u_i(\my_k)$, for all $i,k\in\agents$. 
\end{restatable}

Theorem~\ref{thm:1} already shows that maximum Nash welfare provides remarkable fairness and efficiency guarantees for all concave utility functions. However,  there can be multiple equilibria, giving vastly different utilities to an agent. In Example~\ref{eg:2}, we show that there could be a gap of $\Omega(n)$ in the utility of some agent between two different competitive equilibria. Hence, some agent could be much better off in some CE than in the maximum Nash welfare solution. 

\medskip

As the main contribution of the paper, we show that maximum Nash welfare provides surprisingly strong guarantees for a broad class of utility functions. For this, we explore the \emph{Gale demand correspondence}~\cite{garg2023auction,nesterov2018computation}.
For an agent $i\in\agents$ with budget $b_i$ and utility function $u_i(.)$ at prices $\mq$, the Gale demand correspondence\footnote{In contrast, the standard demand correspondence comprises of the allocations $\mx_i$ that maximize $\{u_i(\mx_i)\ |\  \langle \mq, \mx_i\rangle \le b_i\,\}.$}
comprises of the allocations $\my_i$ that maximize 
\[b_i\log u_i(\my_i) - \langle \mq, \my_i\rangle\, .\]
 It follows from Lagrangian duality that in a Nash welfare maximizing solution, every agent receives an allocation from their Gale demand correspondence.

The reason behind the equivalence of CE and maximum Nash welfare for homogeneous utilities is that, for this class, the standard demand correspondence and the Gale demand correspondence coincide. Nevertheless, these two demands can behave very differently for non-homogeneous utility functions, e.g., the utility value from the standard demand is always monotone non-increasing, whereas the utility under Gale demand can be non-monotonic
(see Example~\ref{eg:mono}).

We identify a critical property of the Gale demand correspondence that we call \emph{Gale-substitutes property} (Definition~\ref{def:gale-subs}). This is analogous to the weak gross substitutes (WGS) property of the standard demand correspondence. In particular, it asserts that if some prices decrease then the demand for goods with unchanged prices may only decrease. 

Competitive equilibrium becomes more subtle when the utility functions are \emph{satiable}: there is an absolute maximum value of the utility that can be reached, regardless of the prices---bounded linear utilities mentioned above are such an example. To accommodate satiable utilities, we also introduce a stronger variant of the definition, called the \emph{$\Sigma$-Gale-substitutes property}. We note that for \emph{non-satiable} utilities, the Gale-substitutes and $\Sigma$-Gale-substitutes properties coincide. Figure~\ref{fig:containment} illustrates the containment relationships among various utility classes discussed in this paper.

\begin{figure}[t]
\includegraphics[width=8cm]{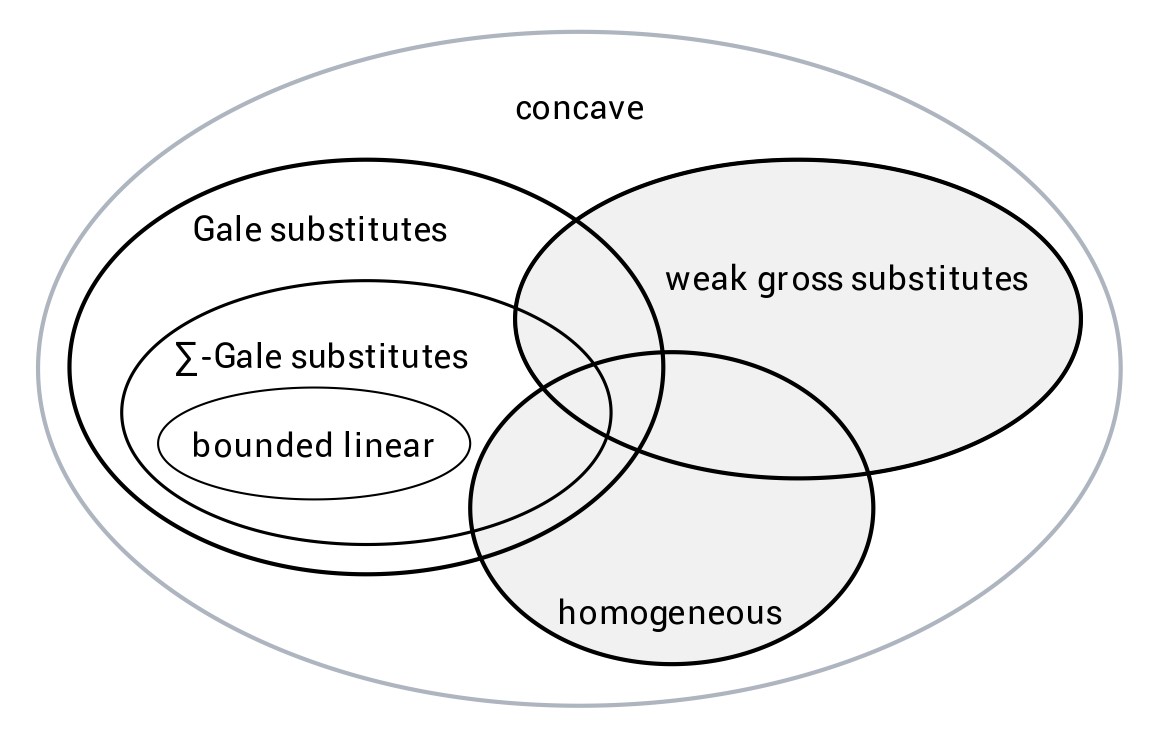}
\centering
\caption{The graph illustrates the containment relationships among different utility classes, with the classes where the Fisher market equilibrium can be computed efficiently highlighted in grey.}\label{fig:containment}
\end{figure}

Our main result shows that for this class of utility functions, the utility of any agent at the Nash welfare maximizing allocation is at least half the \emph{maximum} utility they may get at \emph{any} CE.

\begin{restatable}{theorem}{mainresult} \label{thm::main-result}
    Consider a Fisher market instance where all utility functions are  $\Sigma$-Gale-substitutes. 
    Then, for any competitive equilibrium $((\mx_i)_{i\in \agents}, \mathbf{p})$ and any  Nash welfare maximizing allocation $(\my_i)_{i\in \agents}$, $u_i(\my_i) \geq \frac{1}{2} u_i(\mx_i)$ holds for every $i\in\agents$.
\end{restatable}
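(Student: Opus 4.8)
Write $\mq\in\Rp^{\goods}$ for the optimal dual prices of the Nash-welfare program; by the Lagrangian characterization noted above, each $y_i$ lies in the Gale demand at $\mq$, i.e.\ it maximizes $b_i\log u_i(z)-\pr{\mq}{z}$ over $z\in\Rp^{\goods}$. I would first invoke Theorem~\ref{thm:1}, by which $(\my,\mq)$ is a $2$-demand-approximate CE, so that
\[
u_i(y_i)\ \ge\ \tfrac12\,D_i(\mq),\qquad\text{where }\ D_i(\mathbf r)\defeq\max\{\,u_i(z)\ :\ \pr{\mathbf r}{z}\le b_i\,\}
\]
is the best utility agent $i$ can buy at prices $\mathbf r$ with its budget $b_i$. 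Since $u_i(x_i)=D_i(\mp)$ by the definition of a CE, the theorem reduces to the single inequality $D_i(\mq)\ge D_i(\mp)$ for every $i\in\agents$: \emph{no agent has more purchasing power at the CE prices $\mp$ than at the Nash-welfare dual prices $\mq$.} (Applying the Gale-optimality of $y_i$ directly to the scaled bundles $\lambda x_i$, $\lambda\in[0,1]$, together with $u_i(\lambda x_i)\ge\lambda u_i(x_i)$, already yields $u_i(y_i)\ge(1/\ee)u_i(x_i)$ once the purchasing-power inequality is known; Theorem~\ref{thm:1} is what sharpens $1/\ee$ to $1/2$.) For positively homogeneous utilities $\mq=\mp$ and there is nothing to prove, so the whole content lies in exploiting the $\Sigma$-Gale-substitutes hypothesis.

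To prove $D_i(\mq)\ge D_i(\mp)$, the clean target is the componentwise bound $q_j\le p_j$ on every good (it immediately gives $\{z:\pr{\mq}{z}\le b_i\}\supseteq\{z:\pr{\mp}{z}\le b_i\}$ and hence the inequality). The mechanism: the Gale demand is ``thrifty'' --- concavity and $u_i(\0)=0$ give $\pr{\nabla u_i(z)}{z}\le u_i(z)$, so the Gale-optimality of $y_i$ yields $\pr{\mq}{y_i}=b_i\pr{\nabla u_i(y_i)}{y_i}/u_i(y_i)\le b_i$, whereas at a CE each agent spends its whole budget. Consequently, evaluating every agent's \emph{Gale} demand at the \emph{CE} prices $\mp$ gives an aggregate bundle of total value at most $\sum_i b_i$; in the non-satiable case this total budget is exactly what the CE spends to exhaust the supply, so the Gale market is in aggregate excess supply at $\mp$. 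Since the prices that clear the Gale market are exactly $\mq$, a substitutes-driven t\^atonnement argument then pushes them below $\mp$: assuming $q_j>p_j$ for some good, pick $j^\ast$ maximizing $q_j/p_j$ among goods with $p_j>0$, lower the prices of all other goods from $\mq$ down to their $\mp$-proportional level, use the $\Sigma$-Gale-substitutes property to conclude that the Gale demand for $j^\ast$ can only \emph{decrease}, and combine this with Gale-market clearing and the aggregate-spending bound to contradict $q_{j^\ast}>p_{j^\ast}$. A monotonicity fact about the ordinary demand --- that it is nondecreasing in the budget, so the Gale demand at $\mp$ lies componentwise below $x_i$ --- is a convenient way to upgrade the aggregate excess-supply bound to a per-good one.

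The step I expect to be the real obstacle, and the reason the $\Sigma$-variant is invoked rather than plain Gale-substitutes, is the interplay between \emph{free goods} ($p_j=0$ at the CE) and \emph{satiation}: a good abundant (hence free) at the CE may carry a positive Nash-welfare price, so $q_j\le p_j$ can genuinely fail, and an agent relying on a free good might \emph{a priori} lose purchasing power. The remedy is not to prove $\mq\le\mp$ globally but to establish $D_i(\mq)\ge D_i(\mp)$ directly: first show the Gale demand never over-demands a free good, so the free goods can be priced at $0$ in the Gale equilibrium as well and then dropped; next restrict to the subspace of goods with $p_j>0$ and run the t\^atonnement argument there. Controlling the \emph{sum} of demands under such partial price changes is precisely what $\Sigma$-Gale-substitutes delivers and what makes this bookkeeping consistent. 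Once $D_i(\mq)\ge D_i(\mp)=u_i(x_i)$ is in hand, the claim follows at once: $u_i(y_i)\ge\tfrac12 D_i(\mq)\ge\tfrac12 u_i(x_i)$.
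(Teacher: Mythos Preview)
Your high-level plan is the right one and matches the paper: reduce to finding \emph{some} optimal Gale price vector $\mq^\star$ with $\mq^\star\le\mp$ componentwise, then conclude via the $2$-demand bound. You also correctly identify the key intermediate fact, that for each agent the Gale demand at $\mp$ can be taken componentwise $\le x_i$ (hence aggregate Gale demand at $\mp$ is $\le$ supply). This is exactly the paper's Lemma~\ref{lem:gale-demands} and Lemma~\ref{lem::bounds-aggregate}.

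There are, however, two genuine gaps. First, you have misread what the $\Sigma$ in ``$\Sigma$-Gale-substitutes'' buys. It has nothing to do with \emph{sums} of demands under partial price changes; property~\eqref{prop:satiate} concerns only \emph{satiable} utilities and says that any utility-maximising bundle (i.e., any $y\in\galedemand{}^u(\mathbf 0,b)$) dominates some Gale-optimal bundle at arbitrary prices. Its sole use is inside the proof of the key lemma ``Gale demand at $\mp$ is $\le x_i$'': in the KKT conditions for $x_i\in\demand{i}(\mp,b_i)$ one may have multiplier $\lambda=0$, meaning the budget is slack and $x_i$ maximises $u_i$ outright; precisely here property~\eqref{prop:satiate} is invoked to get a Gale demand below $x_i$. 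Your sketch of that lemma (``Gale demand spends $\le b_i$, ordinary demand spends $=b_i$, so\dots'') handles only the $\lambda>0$ case and misses the satiation case entirely. This is also why your proposed free-good workaround is a detour: once the key lemma holds in full strength, the paper shows that one can take $\mq^\star\le\mp$ \emph{globally}, including $q^\star_j=0$ whenever $p_j=0$.

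Second, your t\^atonnement sketch runs in the wrong direction and does not close. You start at $\mq$ and lower other prices toward ``$\mp$-proportional level''; but the Gale-substitutes axiom controls demand on \emph{unchanged} coordinates when prices go \emph{down}, so to land below $\mp$ you should start at $\mp$ (where aggregate Gale demand is already $\le$ supply by the key lemma) and decrease. The paper indeed describes this downward t\^atonnement as the intuition, but replaces it by a clean non-iterative argument: take any dual optimum $\mq^\star$, let $\bar\mq=\min(\mq^\star,\mp)$ coordinatewise, and use the subgradient $\mathbf 1-\sum_i y_i\in\subg\EGdual(\bar\mq)$ (nonnegative on every coordinate where $\bar q_j=p_j$, by the key lemma) to conclude $\EGdual(\bar\mq)\le\EGdual(\mq^\star)$, so $\bar\mq$ is also optimal and $\bar\mq\le\mp$. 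That single projection step replaces your entire t\^atonnement and handles free goods automatically.
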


As a counterpart to the \emph{`price of anarchy'} result in Theorem~\ref{thm:4}, one could call this a \emph{`price of authority'} result. The agents can either accept an allocation proposed by a central authority on the principle of maximizing Nash welfare or converge to a market solution (using some dynamics). By accepting the centrally proposed solution, each agent may be worse off by at most a factor 2 compared to the best possible equilibrium outcome. On the other hand, outcomes at different competitive equilibria can be very different, and the possible outcomes can be difficult to predict given the computational hardness of equilibria. Hence, for ($\Sigma$-)Gale-substitutes utility functions, the Nash welfare maximizing allocation provides a guaranteed good outcome to all agents simultaneously while also being efficiently computable. Thus, Theorem~\ref{thm::main-result} establishes a strong relationship between CE and maximum Nash welfare beyond homogeneous functions and provides a natural alternative to the \emph{equilibrium selection} problem~\cite{ArrowH71,harsanyi1998general} in this setting where multiple equilibria are possible.

We note that the statement of Theorem~\ref{thm::main-result} does not hold for arbitrary concave utilities. In Example~\ref{example:non-gale}, we show that without the Gale-substitutes property, agents may get much higher utilities in CE than in the maximum Nash welfare allocations.

\paragraph{Examples of Gale-substitute utilities}
It turns out that ($\Sigma$-)Gale-substitutes capture a wide range of natural utility functions.
We call a utility function $u(.)$ \emph{separable} if $u(\mx)=\sum_{j\in\goods}v_j(x_j)$. This class gives an important example of  $\Sigma$-Gale-substitutes. Note that such utilities, in general, are neither homogeneous nor WGS. As mentioned above, competitive equilibrium computation is already PPAD-hard for the separable piecewise linear concave (SPLC) case when each $v_j$ is piecewise linear. This class has attracted significant attention on its own \cite{AnariMGV18,ChenDDT09,ChenPY13,GargMSV15,VaziraniY11}.
\begin{restatable}{theorem}{sepgale}\label{thm:sep-gale}
Every separable utility function is $\Sigma$-Gale-substitutes.
\end{restatable}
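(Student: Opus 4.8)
The plan is to reduce the Gale demand problem for a separable utility to a one-parameter family of \emph{decoupled} one-dimensional problems, and then extract the substitutes property by monotone comparative statics in that single parameter. Write $u(x)=\sum_{j\in\goods}v_j(x_j)$, where each $v_j\colon\Rp\to\Rp$ is monotone non-decreasing and concave with $v_j(0)=0$, and fix a budget $b>0$.

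The first step is a \emph{scalar characterization} of the Gale demand: $y\in\Rp^\goods$ maximizes $b\log u(y)-\pr{\mq}{y}$ if and only if there is a scalar $\lambda>0$ with (i)~$y_j\in\operatorname*{arg\,max}_{z\ge0}\bigl(\lambda v_j(z)-q_j z\bigr)$ for every $j\in\goods$, and (ii)~$\lambda\,u(y)=b$. The ``if'' direction is a one-line convexity estimate: by concavity of $\log$, $b\log u(y')-b\log u(y)\le \tfrac{b}{u(y)}(u(y')-u(y))=\lambda(u(y')-u(y))$ for any $y'$, using (ii); and since, by (i), $y$ maximizes the separable concave function $\lambda u(\cdot)-\pr{\mq}{\cdot}=\sum_j(\lambda v_j(\,\cdot\,)-q_j\,\cdot\,)$, we have $\lambda(u(y')-u(y))\le \pr{\mq}{y'}-\pr{\mq}{y}$; adding the two inequalities gives $b\log u(y')-\pr{\mq}{y'}\le b\log u(y)-\pr{\mq}{y}$. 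For the ``only if'' direction, at an optimal $y$ (which has $u(y)>0$ unless no good is desirable) I would set $\lambda:=b/u(y)$ and apply the subdifferential chain rule $\partial(b\log u)(y)=\lambda\,\partial u(y)$, so that the first-order optimality condition of $b\log u(\cdot)-\pr{\mq}{\cdot}$ over $\Rp^\goods$ becomes exactly that of $\lambda u(\cdot)-\pr{\mq}{\cdot}$, which by separability and concavity splits into the coordinatewise conditions~(i). (Equivalently, one can make the decoupling explicit by writing $b\log t=\min_{\mu>0}(b\mu t-b\log\mu-b)$ and swapping the resulting min and max via a minimax theorem.)

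The second step is one-good monotone comparative statics: the objective $\lambda v_j(z)-q_j z$ has increasing differences in $(z,\lambda)$ (since $v_j$ is non-decreasing) and decreasing differences in $(z,q_j)$, so by Topkis's theorem the minimal and maximal selections of $\operatorname*{arg\,max}_{z\ge0}(\lambda v_j(z)-q_j z)$ are non-decreasing in $\lambda$ and non-increasing in $q_j$. Fixing the maximal selections $y_j(\lambda;\mq):=\overline\xi_j(\lambda,q_j)$, condition (ii) reads $\lambda\sum_j v_j(y_j(\lambda;\mq))=b$, whose left side is non-decreasing in $\lambda$, eventually strictly increasing, and therefore passes through $b$ at some $\lambda(\mq)$. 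Now lower the prices on a set $S$ to $\mq'\le\mq$ with $q'_j=q_j$ for all $j\notin S$. By the $q$-monotonicity above, $y_j(\lambda;\mq')\ge y_j(\lambda;\mq)$ for all $j$ and all $\lambda$, hence $\lambda\,u(y(\lambda;\mq'))\ge\lambda\,u(y(\lambda;\mq))$ for all $\lambda$; since both functions increase through $b$, this forces $\lambda(\mq')\le\lambda(\mq)$. Finally, for each $j\notin S$ we have $q'_j=q_j$ together with $\lambda(\mq')\le\lambda(\mq)$, so $y_j(\lambda(\mq');\mq')=\overline\xi_j(\lambda(\mq'),q_j)\le\overline\xi_j(\lambda(\mq),q_j)=y_j(\lambda(\mq);\mq)$ by $\lambda$-monotonicity. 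Thus lowering prices on $S$ weakly decreases the Gale demand of every good whose price did not change --- precisely the Gale-substitutes conclusion; summing over $j\notin S$ (weighted by $q_j$, if the definition is phrased via expenditure on the unchanged-price goods) yields the aggregate ``$\Sigma$'' form.

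I expect the main obstacle to be the \emph{satiable} case. When some $v_j$ is eventually constant, or has a linear piece, the one-dimensional argmax correspondences are fat, the Gale demand is highly non-unique, and both $\lambda(\mq)$ and the selections above involve genuine choices. The work is to check that the monotone selections we fixed realize the $\Sigma$-Gale-substitutes inequality for \emph{every} pair of Gale demands, not just the extremal ones; I would do this by sandwiching an arbitrary Gale demand at $\mq$ (respectively $\mq'$) between the minimal and maximal selections at a compatible scalar, re-running the crossing argument above with matched selections on the two sides, and using that $\sum_{j\notin S}q_j y_j$ is then pinned down even when individual coordinates are not. Lining these selection and tie-breaking details up exactly with Definition~\ref{def:gale-subs}, and ruling out the degenerate unbounded-$\lambda$ situations in purely linear cases, is the only truly delicate part; for non-satiable utilities the Gale demand is effectively unique and the argument above already finishes the proof.
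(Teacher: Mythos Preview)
Your proposal has a genuine misreading of what ``$\Sigma$-Gale-substitutes'' means. You write that ``summing over $j\notin S$ (weighted by $q_j$\ldots) yields the aggregate `$\Sigma$' form,'' but the $\Sigma$ in Definition~\ref{def:gale-subs} is not an aggregate expenditure inequality. It is the separate condition~\eqref{prop:satiate}: when $u$ is satiable (so $\0\in\FQ{u}$), for every utility-maximizing bundle $y\in\galedemand{}^u(\0,b)$ and \emph{every} price vector $\mq\ge\0$, there must exist $y'\in\galedemand{}^u(\mq,b)$ with $y'\le y$ coordinatewise. Note that this runs in the \emph{opposite} direction to property~(i): prices go up from $\0$ to $\mq$, yet one must exhibit a smaller Gale demand. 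It therefore does not fall out of your crossing/Topkis argument, and your discussion of the satiable case (which treats satiability purely as a tie-breaking issue inside property~(i)) does not address it. For separable utilities the verification is a one-liner --- since $v_j(y_j)=\sup v_j$ for every $j$, any $y'\in\galedemand{}^u(\mq,b)$ minimizing $\sum_j y'_j$ must have $y'_j\le y_j$ --- and this is exactly the paper's Lemma~\ref{lem::sep-def-3}. You also never mention property~(ii) (budget decrease); it would follow by the same mechanism as your treatment of~(i), but the proof should say so (cf.\ Lemma~\ref{lem::sep-def-2}).

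There is a second, smaller gap inside your argument for property~(i). Your monotone-selection step shows that the \emph{maximal} Gale demand at $\mq$ dominates, on unchanged coordinates, the maximal-selection Gale demand at $\mq'$. But Definition~\ref{def:gale-subs}(i) quantifies over \emph{every} $y\in\galedemand{}^u(\mq,b)$, and the Gale demand correspondence is not simply the product $\prod_j\arg\max_{z\ge0}(\lambda v_j(z)-q_j z)$: one must additionally hit the utility level $\sum_j v_j(y'_j)=b/\lambda'$, and on a nondegenerate argmax interval with $q_j>0$ the value $v_j$ genuinely varies. So freely taking minimal selections on the unchanged-price goods may fail to be a Gale demand at $\mq'$. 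The paper handles this by choosing $y'\in\galedemand{}^u(\mq',b)$ minimizing the potential $\sum_{j\in J}(y'_j-y_j)^+$ and then swapping mass between goods in $J$ whenever the potential is positive (Lemma~\ref{lem::sep-def-1}); your ``sandwiching'' sketch would need to reproduce this swapping step to be complete.

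Your scalar characterization in Step~1 is correct and is exactly the paper's Lemma~\ref{lem::kkt-separable}, and the Topkis viewpoint is a clean way to obtain $u(y')\ge u(y)$ (equivalently $\lambda'\le\lambda$), which is also the first move in the paper's proof. So the two approaches agree on the core monotonicity step; the paper's potential-minimization argument just packages the non-uniqueness and all three defining properties more uniformly.
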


We consider two further examples of non-separable, piecewise linear Gale-substitutes utilities: 
\begin{itemize}
\item \emph{Leontief-free (LF) utilities}~\cite{FHHH23,garg2014dichotomies,GoyalSG23}: 
Consider a set $K$ of segments, and let
$u(\mx) \defeq \max \sum_{k\in K} u_k$: $u_k = \sum_{j\in \goods} a_{jk}z_{jk},\ \ u_k \le \ell_k;\ \forall k\in K$; $\sum_{k\in K} z_{jk} \le x_j\ \forall j\in \goods$; $\mz\ge0$. That is, an LF function is given by 
a set of linear functions $u_k$,  with a maximum utility limit $\ell_k$ on each $k\in K$. LF captures interesting non-separable PLC functions, and includes SPLC  as a special case where each segment $k\in K$ has exactly one $j$ with $a_{jk} >0$.   
\item \emph{Generalized network utilities}: We introduce a new and broad class of utility functions defined by generalized flow instances. Here, the goods correspond to source nodes in a flow network with supply $x_j$, $j\in\goods$. Each arc has a capacity and a gain factor, and the utility $u(\mx)$ is the maximum flow value that can reach a designated sink node $t$ from supplies $\mx$. This value can be computed in polynomial time. The precise definitions are given in Section~\ref{sec:utility}. Additionally,
LF can be formulated as a special case of a network with two layers, the middle layer between the sources and the sink corresponding to the segments $K$. 
\end{itemize}

We note that generalized flows is a classical network flow model that can model commodity transportation with losses, as well as conversions between different types of entities such as in financial networks; see \citet[Chapter 15]{amo} for a description and various applications.

We show the following:
\begin{restatable}{theorem}{networkgale}\label{thm:network-gale}
Generalized network utility functions are Gale-substitutes.
\end{restatable}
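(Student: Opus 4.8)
The plan is to work directly with the generalized flow LP that defines the utility, folding the Gale objective into it. For an agent with budget $b$ and a generalized network utility $u$, the Gale demand at prices $\mq$ consists exactly of the vectors $\my$ with $y_j$ equal to the outflow of source $j$ in some capacity-feasible generalized flow $f$ that maximizes $b\log\operatorname{val}(f)-\pr{\mq}{\operatorname{out}(f)}$, where $\operatorname{val}(f)$ is the flow value arriving at the sink $t$ and $\operatorname{out}(f)$ the vector of source outflows; monotonicity of $u$ lets us identify the purchased bundle with the supply actually routed. This is a concave maximization over a polyhedron, so I would first record its KKT/LP optimality conditions: at an optimal $f$ with value $F^*$ there is a node potential $\pi$ with $\pi_t$ equal to the marginal sink value $b/F^*$ (an inequality in the satiable case), reduced-cost complementary slackness on every arc, and, on the import arc of source $j$, the condition that $y_j>0$ forces the implied price $\gamma_j\pi_j$ of good $j$ to equal $q_j$ while $y_j=0$ forces $\gamma_j\pi_j\le q_j$. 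These conditions encode the ``canonical'' min-cost structure of the optimal flow.

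Next I would reduce to a single price change: exactly as for gross substitutes, the Gale-substitutes property for an arbitrary set $S$ of decreasing prices follows by iterating the one-coordinate version, lowering the prices in $S$ one at a time and noting that at each step the complementary set still contains the set $T=\goods\setminus S$ whose demand we must control. So assume $q'_\ell<q_\ell$ and $q'_j=q_j$ for $j\ne\ell$, with optimal flows $f$ at $\mq$ (demand $\my$) and $f'$ at $\mq'$ (demand $\my'$); the goal is $y'_j\le y_j$ for every $j\ne\ell$. I would first establish the two monotonicities forced by the structure: (i) the optimal value increases, $\operatorname{val}(f')\ge\operatorname{val}(f)$ --- which follows either because cutting $q_\ell$ lowers the min-cost-to-route-value function $C_{\mq}(\cdot)$ by an amount non-decreasing in the routed value, so the concave objective $b\log F-C_{\mq}(F)$ has its maximizer pushed right, or by adding the two LP optimality inequalities for $f$ and $f'$; and (ii) the cheapened source's own outflow increases, $y'_\ell\ge y_\ell$, which follows from the same pair of inequalities together with $\operatorname{val}(f)=F^*$ and $\operatorname{val}(f')=F^{*\prime}$.

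The core of the argument is the cross statement. I would decompose the flow difference $f'-f$ using the generalized-flow decomposition (into source-to-$t$ paths, cycles, and bicycles) and argue component by component: using reduced-cost optimality of $f$ at $\mq$ and of $f'$ at $\mq'$, and the fact that the only discrepancy between the two instances is the drop in $q_\ell$ and the drop of the marginal sink value from $b/F^*$ to $b/F^{*\prime}$, no component of the decomposition can produce a net increase of the outflow at a source $j\ne\ell$. Intuitively, any additional flow created by the price cut is ``absorbed'' by source $\ell$ (or merely rerouted along residual arcs), while every other source faces the same per-unit price, a weakly lower marginal value at $t$, and weakly more congestion on shared arcs, hence is used no more than before. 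Equivalently, one can phrase this as a monotonicity $\pi'\le\pi$ of the node potentials combined with a primal conservation argument along the decomposition to pin down the flow quantities.

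The main obstacle is precisely this last step: unlike ordinary network flows, the difference of two generalized flows does not decompose into simple augmenting paths, and one must handle gains together with flow-generating/absorbing cycles and bicycles, all while disentangling the ``substitution'' effect (more $\ell$, less of the others) from the ``scale'' effect of the $\log$ term (total flow grows, which by itself could naively be feared to raise some other source's use). I would isolate this by working on the residual network of $f$ with the reduced costs induced by $\pi$, showing that every residual augmenting structure raising $y_j$ for some $j\ne\ell$ either violates optimality of $f$ at $\mq$ or, when carried over to the $\mq'$-instance, violates optimality of $f'$ at $\mq'$. Finally I would dispatch the degenerate cases --- zero-capacity or parallel arcs, satiable utilities where $\pi_t<b/F^*$ is possible, goods with zero price, and non-uniqueness of the optimal flow (so that the Gale demand is genuinely a correspondence) --- by stating the conclusion in the robust set-valued form required by Definition~\ref{def:gale-subs} and verifying it for arbitrary optimal flows on both sides. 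The two-layer specialization recovers Leontief-free and SPLC utilities, and in the SPLC case the conclusion is consistent with Theorem~\ref{thm:sep-gale}; but the sharing of middle-layer capacity among several sources is exactly what breaks separability and forces the flow argument rather than a reduction to the separable case.
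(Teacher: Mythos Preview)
Your plan has a structural gap: the Gale-substitutes property (Definition~\ref{def:gale-subs}) is \emph{existential} in the new bundle --- for every $y\in\galedemand{}^u(\mq,b)$ there must \emph{exist} some $y'\in\galedemand{}^u(\mq',b)$ with the required inequalities --- whereas you take an arbitrary optimum $f'$ at $\mq'$ and attempt to prove $y'_j\le y_j$ for all $j\ne\ell$, and you explicitly write that you will ``verify it for arbitrary optimal flows on both sides''. That stronger universal statement is already false for the simplest generalized network utility $u(y)=y_1+y_2+y_3$ with $b=1$: at $\mq=(1,1,2)$ the Gale demand is $\{(y_1,y_2,0):y_1+y_2=1\}$, so take $y=(1,0,0)$; lowering $q_3$ to $\mq'=(1,1,1)$ makes the Gale demand the whole simplex $\{y\ge 0:y_1+y_2+y_3=1\}$, and $y'=(0,1,0)$ is a perfectly valid Gale-optimal bundle with $y'_2=1>0=y_2$ although $q'_2=q_2$. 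The definition is satisfied (choose $y'=(0,0,1)$), but no amount of analysis of the difference $f'-f$ can succeed, because the conclusion you are aiming at is simply not true for the $f'$ you start from. The decomposition machinery you sketch (generalized paths, cycles, bicycles, potential monotonicity $\pi'\le\pi$) is therefore aimed at the wrong target.

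The paper's proof is accordingly not comparative but \emph{constructive}. Starting from the given fitting primal--dual pair $(f,\pi)$ certifying $y\in\galedemand{}^u(\mq,b)$, it runs a finite algorithm that alternates two steps: (a) solve an auxiliary LP (\textsc{Reroute-Flow}) that pushes flow in the residual network so as to increase $y_\ell$ and $\net{f}{t}$ while only \emph{decreasing} every other $y_j$, keeping $(f',\pi)$ a fitting pair; and (b) multiplicatively scale the duals $\pi$ on the set reachable from $\ell$ in the auxiliary graph until a new tight arc appears or the KKT equality $\pi_\ell=q'_\ell u/b$ is restored. Termination is by a counting argument on arc sign patterns. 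This produces a specific $y'$ with $y'_j\le y_j$ for $j\ne\ell$ by construction, rather than by comparing two independently chosen optima. A separate but analogous algorithm (\textsc{Pullback-Flow} plus dual scaling) establishes property~(ii) of Definition~\ref{def:gale-subs}, the budget-decrease monotonicity, which your proposal does not address at all; in fact the paper's proof of~(i) invokes~(ii) at the very end to correct the budget after the rerouting phase has overshot.
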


\medskip

Theorem~\ref{thm::main-result} already shows that in Fisher markets with non-satiable generalized network utilities, Nash welfare maximizing allocations give a good approximation to utilities at competitive equilibrium.
However, satiable generalized network utilities may not satisfy the stronger $\Sigma$-Gale-substitutes property required in Theorem~\ref{thm::main-result}. Example~\ref{exp:leontief-free-satiated} shows that the theorem does not hold if only assuming the Gale substitutes property. For satiable utility functions, it is natural to look at \emph{thrifty (strong) equilibria}, a strengthening of CE, which are also guaranteed to exist, see, e.g.,~\cite{garg2022approximating}. In such equilibria, each agent receives an optimal bundle at the cheapest possible price (see Definition~\ref{def::thrifty-market-equilibrium}). 
An appealing property of this concept is that thrifty equilibria are Pareto efficient, whereas this may not hold for (weak) competitive equilibrium. 
We also show that the statement of Theorem~\ref{thm::main-result} remains true for generalized network utilities and for thrifty equilibria.
\begin{restatable}{theorem}{networkgalebound} \label{thm::satiate-gale}
    Consider a Fisher market instance where all utility functions are generalized network utilities.
    Then, for any thrifty competitive equilibrium $((\mx_i)_{i\in \agents}, \mathbf{p})$ and any Nash welfare maximizing allocation $(\my_i)_{i\in \agents}$, $u_i(\my_i) \geq \frac{1}{2} u_i(\mx_i)$ holds for every $i\in\agents$.
\end{restatable}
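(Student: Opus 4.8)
This statement extends Theorem~\ref{thm::main-result}, whose only extra hypothesis is that all utilities be $\Sigma$-Gale-substitutes; here the utilities are only guaranteed to be Gale-substitutes (Theorem~\ref{thm:network-gale}), and Example~\ref{exp:leontief-free-satiated} shows this is genuinely weaker. So the plan is to recover the missing strength from (a) thriftiness of $(\mathbf{x},\mathbf{p})$ and (b) the min-cut structure of generalized flows, and then run the same argument as for Theorem~\ref{thm::main-result}.

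Fix $i\in\agents$. If $u_i(y_i)$ equals the satiation value of $u_i$, then $u_i(y_i)\ge u_i(x_i)$ and we are done, so assume $y_i$ is unsatiated. Let $\mathbf{q}$ be the Lagrangian prices of the Nash-welfare maximizer $\mathbf{y}$, so that every $y_k\in\galedemand{k}(\mathbf{q})$ and the market clears. Suppose for contradiction $u_i(y_i)<\tfrac12 u_i(x_i)$. Testing the Gale-optimality of $y_i$ at $\mathbf{q}$ against the bundle $\tfrac12 x_i$, and using concavity with $u_i(\mathbf{0})=0$ (hence $u_i(\tfrac12 x_i)\ge\tfrac12 u_i(x_i)$), gives $b_i\log\bigl(u_i(y_i)/u_i(\tfrac12 x_i)\bigr)\ge\langle\mathbf{q},y_i\rangle-\tfrac12\langle\mathbf{q},x_i\rangle$; the left side is negative, so $\langle\mathbf{q},y_i\rangle<\tfrac12\langle\mathbf{q},x_i\rangle$. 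This is the entry point: agent $i$'s Gale-spending at $\mathbf{q}$ is strictly less than half the $\mathbf{q}$-cost of its equilibrium bundle.

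From here the plan is to follow the Gale-substitutes argument behind Theorem~\ref{thm::main-result}, realized concretely on agent $i$'s generalized-flow network by a flow-rerouting construction (the natural place for the \emph{Pullback-Flow}/\emph{Reroute-Flow} subroutines): starting from a flow that certifies $u_i(x_i)$ using supply $x_i$, reroute it---pushing along augmenting paths through those goods $j$ that $y_i$ holds in positive amount, which are exactly the goods whose $\mathbf{q}$-price is low enough---to obtain a feasible flow using only the supply $y_i$ of value at least $\tfrac12 u_i(x_i)$, contradicting the assumption. The Gale-substitutes property of the \emph{other} agents' utilities controls, through market clearing, how much of each good remains for agent $i$ at prices $\mathbf{q}$, which is what guarantees the rerouting never runs out of supply. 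Thriftiness of $(\mathbf{x},\mathbf{p})$ enters precisely where the $\Sigma$-property would: via complementary slackness in each agent's generalized-flow LP (tight arcs along flow-carrying paths, a saturated sink-side cut), a thrifty agent never holds a positive amount of a good carrying no marginal flow to the sink in its network---such a good could be dropped at strictly lower cost while keeping the bundle optimal---so no supply is wasted ``past the bottleneck'' and the supply accounting at $\mathbf{q}$ versus the equilibrium flow is tight.

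The main obstacle is exactly this last point. In the $\Sigma$-Gale-substitutes setting the inequality ``lowering prices on a set $S$ lowers aggregate demand off $S$'' comes with a quantitative handle on satiated goods that closes the argument; without it, a satiable agent can absorb arbitrary amounts of a cheap useless good (the mechanism behind Example~\ref{exp:leontief-free-satiated}), breaking market clearing as an accounting identity. One must therefore show that thriftiness---simultaneously for \emph{every} agent---rules this out, and that the rerouting in agent $i$'s network respects all these structural constraints at once. I expect the bulk of the work, and the only use of generalized flows beyond Theorem~\ref{thm:network-gale}, to be in making this ``no wasted supply'' consequence of thriftiness precise and carrying it through the reroute.
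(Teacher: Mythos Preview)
Your opening diagnosis is correct: the only missing ingredient over Theorem~\ref{thm::main-result} is the $\Sigma$-property, and you rightly expect to recover it from thriftiness plus the LP structure of generalized flows. But the concrete plan that follows does not match this diagnosis, and the central step is circular.

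You propose to fix an agent $i$, assume $u_i(y_i)<\tfrac12 u_i(x_i)$, and then ``reroute'' the flow certifying $u_i(x_i)$ to obtain a feasible flow \emph{from supply $y_i$} of value at least $\tfrac12 u_i(x_i)$. But $u_i(y_i)$ is \emph{defined} as the maximum generalized flow value from supply $y_i$; no rerouting can produce a flow from $y_i$ exceeding that value, so the proposed contradiction is impossible by definition. The inequality $\langle\mathbf{q},y_i\rangle<\tfrac12\langle\mathbf{q},x_i\rangle$ you derive is correct but gives no componentwise control over $y_i$ versus $x_i$, so it does not feed a rerouting argument either. Working agent-by-agent also departs from the structure of the proof of Theorem~\ref{thm::main-result}, which is not agent-specific: it establishes the existence of Gale prices $\mathbf{q}^\star\le\mathbf{p}$ and then applies Lemma~\ref{lem::gale-demand-approx-2} uniformly.

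What the paper actually does is much more localized. In the proof of Theorem~\ref{thm::main-result}, the $\Sigma$-property is invoked in exactly one spot: the case $\lambda=0$ of Lemma~\ref{lem:gale-demands} (when the budget constraint is slack, i.e.\ the agent is satiated). The replacement is: if $x$ is \emph{thrifty}, then $x$ minimizes $\pr{\mp}{x}$ subject to $u(x)=v^\star$, which for a generalized network utility is itself a linear program. LP duality on this min-cost program (via Lemma~\ref{lem:lp-subgradient}) yields a supergradient $\mg\in\supg u(x)$ and a multiplier $\mu\ge 0$ with $\mu g_j\le p_j$, equality on $\supp(x)$; this is exactly the shape of \eqref{eq:kkt-gale}, so the $\lambda>0$ branch of Lemma~\ref{lem:gale-demands} now goes through (or, if $\mu=0$, one gets a zero-cost satiating bundle, which is simultaneously in $\demand{}^u$ and $\galedemand{}^u$). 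With Lemma~\ref{lem:gale-demands} repaired for thrifty demands, Lemma~\ref{lem::bounds-aggregate} and the remainder of Section~\ref{sec:main-proof} go through verbatim. Your intuition that ``a thrifty agent never holds a good carrying no marginal flow'' is in the right direction, but the precise statement you need is this supergradient/KKT characterization coming from the min-cost LP---not a flow-rerouting on a single agent's network.
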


\begin{Remark}
    We note that no algorithm can guarantee agents at least $(1-\varepsilon)$ times their maximum CE utility for any $\varepsilon>0$. This is due to the fact that, as shown in Example~\ref{eg:1}, multiple thrifty equilibria may exist, each providing different utilities to the agents. The existence of an allocation that $\varepsilon$-approximates the best CE utility for each agent would, by compactness, imply the existence of an allocation where every agent achieves their best CE utility. However, this would contradict the fact that thrifty equilibria are Pareto efficient.
\end{Remark}

\subsection{Related work}
This paper strengthens the relationship between the two fundamental concepts of competitive equilibrium and maximum Nash welfare. Both are intensely studied concepts with a variety of applications, so we briefly mention further related results. As mentioned earlier, computing equilibria in Fisher markets is polynomial time for homogeneous and WGS utility functions, and beyond that, the problem is essentially PPAD-hard. For SPLC utilities, achieving a $\frac{1}{11}$-approximate-CE is PPAD-hard~\cite{DeligkasFHM24}. We note that~\cite{ChenT09,DeligkasFHM24,VaziraniY11} use a different definition of approximate-CE, where every agent gets an optimal bundle, but the market clears approximately. Our approximate-CE definition also appears in several previous works, e.g., in~\cite{codenotti2005market,codenotti2005polynomial,CodenottiVChapter07}. 

A variant of the Gale substitutes property also appeared in \cite{garg2023auction-arxiv}\footnote{The technical report version of \cite{garg2023auction}.}. This was used to obtain an auction algorithm for the maximum Nash welfare problem with \emph{capped} SPLC utilities, and the proof uses arguments similar to those in Appendix~\ref{sec:separable}. Similar auction algorithms could be developed more generally for Gale substitutes functions.

Fisher markets are a special case of more general Arrow-Debreu (exchange) markets. Computing equilibria in exchange markets is polynomial time for WGS utility functions (see, e.g.,~\cite{bei2019ascending,codenotti2005polynomial,garg2023auction}) and for CES utilities with parameter $-1 \le \rho < 0$~\cite{CodenottiMPV05}, and beyond that the problem is essentially PPAD-hard~\cite{ChenDDT09}, even for homogeneous functions, such as Leontief and CES with parameter $\rho < -1$~\cite{ChenPY13,CodenottiSVY06}. 

The maximum Nash welfare is a classical objective for allocating goods to agents. It has received significant attention in the literature on social choice and fair division; see, e.g.,~\cite{Moulin03}.  It provides remarkable fairness guarantees also for indivisible goods: under additive valuations, maximum Nash welfare allocations attain the ``unreasonable'' fairness guarantee of envy-freeness up to one good (EF1)~\cite{CaragiannisKMPS19}. Furthermore, for more general subadditive valuations, the fairness guarantee of $\tfrac12$-approximate envy-freeness up to any good (EFX) can be achieved at a $\tfrac23$-approximate maximum Nash welfare allocation~\cite{FeldmanMP23}.  

Another interesting line of research proposes Nash-bargaining-based solutions for allocation problems, offering efficient computation and other desirable structural properties; see, e.g.,~\cite{vazirani2012notion, HosseiniV22, GargTV23}. Recently, \citet{TrobstV2024} showed that computing a Pareto-efficient and envy-free allocation for matching utilities is PPAD-hard, and that the NSW solution gives a PO an  $2$-envy-free allocation. This latter result can be viewed as  a special case of our Theorem~\ref{thm:1}.
\medskip

\paragraph{Overview} 
The remainder of the paper is organized as follows. Section~\ref{sec:prelim} defines all models and definitions.  
Section~\ref{sec:poa} proves Theorem~\ref{thm:4}, establishing the tight bound of $(1/\ee)^{1/\ee}$ on the efficiency of competitive equilibria. In Section~\ref{sec:nash_welfare_approximates_equilibrium}, we show that Nash welfare maximizing allocations yield $2$-demand approximate-CE. Section~\ref{sec:agentwise_utility_guarantee} proves Theorem~\ref{thm::main-result}, showing the stronger guarantees of the maximum Nash welfare allocations for Gale-substitutes utility functions. Section~\ref{sec:gale_substitute_utilities} proves Theorem~\ref{thm:sep-gale}, showing that separable utilities are $\Sigma$-Gale substitutes, and considers generalized network utility functions. 

Appendix~\ref{sec:galedemand} presents several properties of the Gale demand correspondence. Appendix~\ref{sec:main-proof} provides the complete proof of Theorem~\ref{thm::main-result}. Appendix~\ref{sec:separable} proves Theorem~\ref{thm:sep-gale} for the general case, and Appendix~\ref{sec:network} contains the proofs of Theorems~\ref{thm:network-gale} and \ref{thm::satiate-gale}. Additional examples are presented in Appendix~\ref{sec:egs}.


\section{Models and definitions}\label{sec:prelim}
For a function $f\,:\,X\to Y$, let $\dom(f)\subseteq X$ denote its domain. For $\mx\in\R^n$, let $\supp(\mx)=\{i\, :\, x_i\neq 0\}$ denote its support. For $\mx,\my\in\R^n$, we use $\pr{\mx}{\my}=\mx^\top \my$ for the standard scalar product.

\paragraph{Subgradients and supergradients} The functions we consider in this paper are not always differentiable. We will use sub- and supergradients, defined as follows.
\begin{Definition}
    Given a function $f\, :\, \R^n\to\R$, a vector $\mg\in\R^n$ is a \emph{supergradient} of $f$ at the point $\mx\in\dom(f)$ if for all $\my\in\dom(f)$, $f(\mx) + \pr{\mg}{\my-\mx} \geq f(\my)$. Similarly, $\mg\in\R^n$ is a \emph{subgradient} of $f$ at $\mx\in\dom(f)$ if $f(\mx) + \pr{\mg}{\my - \mx}\le f(\my)$ for all $\my\in\dom(f)$. We denote the set of all supergradients of $f$ at $\mx$ as $\supg{f} (\mx)$,  set of all subgradients as $\subg{f} (\mx)$.
\end{Definition}
Thus, $\mg$ if a supergradient of $f$ if and only if $-\mg$ is a subgradient of $-f$. We recall that a function $f$ is convex if and only if $\subg{f}(\mx)\neq\emptyset$ for every $\mx\in\dom(f)$ and concave if and only if $\supg{f}(\mx)\neq\emptyset$ for every $\mx\in\dom(f)$. For a differentiable convex (concave) function $f$, the unique subgradient (supergradient) at every point is the gradient $\nabla f(\mx)$. 
\subsection{Utility functions}\label{sec:utility}
Given a set $\goods$  of divisible goods, we will focus on utility functions with the following properties.

\begin{Definition}\label{def:utility-function}
A function $u(\cdot)\, , \, \Rp^\goods\to \R$ is a \emph{utility function} if $u(\0)=0$, $u(\cdot)$ is concave, continuous, monotone non-decreasing, and there exists an $\mx\in\Rp^\goods$ such that $u(\mx)>0$.
\end{Definition}
We say that a utility function is \emph{separable} if $u(\mx)=\sum_{j\in\goods} v_j(x_j)$. 
We will also distinguish between satiable and non-satiable utilities.
\begin{Definition}
    A utility function $u(\cdot)$ is \emph{non-satiable} if and only if, for any $\mx\in\Rp^\goods$, there exists  $\mx'\in\Rp^\goods$, such that $u(\mx') > u(\mx)$. Otherwise, we say the utility function is \emph{satiable}.
\end{Definition}

\paragraph{Demand correspondence and Gale demand correspondence}
 By a  \emph{price vector}, we mean a vector $\mp\in\Rp^\goods$. Given prices $\mp$ and a budget $b\in\Rp$, we define the 
 \emph{demand correspondence} as the set of maximum utility bundles achievable at $\mp$ and $b$:
  \begin{equation}\label{def:demand}\demand{}^u(\mathbf{p},b) \defeq \arg\max_{\mx\in\Rp^\goods}\left\{ u(\mx)\, :\, \pr{\mp}{\mx}\le b\right\}\, .
  \end{equation}
  Note that if $u$ is a non-satiable utility function, then $\pr{\mp}{\mx}=b$ must hold for every $\mx\in\demand{}^u(\mp,b)$.
  
$\demand{}^u(\mathbf{p},b)$ is always a non-empty compact set when $\mp > 0$. However, in cases when some prices may be  $0$, the supremum of utility over the budget constraint might be $\infty$, resulting in unbounded demand. This may occur even when the supremum is finite, causing technical difficulties.\footnote{For example, given $u(x_1, x_2) = 1 - \frac{1}{x_1 + x_2 + 1}$, $\mp = (1, 0)$, and $b = 1$, $\sup_{\mx\in\Rp^\goods}\left\{ u(\mx)\, :\, \pr{\mp}{\mx}\le b\right\} = 1$. This value is approached as $x_1 = 1$ and $x_2 \rightarrow \infty$, but it is never achieved by any finite bundle $\mx$.} To circumvent these issues, we adopt the following assumption regarding utility functions $u(\cdot)$:
  \begin{Assumption}\label{assum:basic-demand-system}
      For any $\mp \geq \mathbf{0}$ and $b > 0$, if there is a sequence of allocations $\{\mx^{(k)}\}_{k=1}^\infty$ such that $\pr{\mp}{\mx^{(k)}}$ converges to $b$ and $u(\mx^{(k)})$ converges to some finite value $M < \infty$, then there exists a finite allocation $\mx^* \in\Rp^\goods$ such that $u(\mx^*) = M$ and $\pr{\mp}{\mx^*}\le b$. 
  \end{Assumption} 

This assumption always holds if the set of possible allocations is restricted to a compact domain $Z\subseteq \Rp^\goods$; note that \citet{nesterov2018computation} make  such a compactness assumption. For the unrestricted domain $\Rp^\goods$, Assumption~\ref{assum:basic-demand-system} can be shown to hold for arbitrary piecewise linear utility functions, in particular, for generalized network utilities defined below.

The \emph{Gale demand correspondence} is defined as
 \begin{equation}\label{def:galedemand}
 \galedemand{}^u(\mathbf{p},b) \defeq \arg\max_{\my\in\Rp^\goods} b\log u(\my)- \pr{\mp}{\my}\, .
 \end{equation}
We define the domain of $\galedemand{}^u$ with respect to prices as \footnote{An equivalent formulation of the definition is given by:
\begin{equation}\notag
\FQ{u} \defeq \left\{ \mp\in\Rp^\goods ~:~ \forall b > 0, \sup_{\my\in\Rp^\goods}  b \log u(\my) - \pr{\mp}{\my} < +\infty\right\} .
\end{equation}
This definition is obtained by substituting the universal quantifier ($\forall$) for the existential quantifier ($\exists$). Lemma~\ref{lem::gale-domain-property} \eqref{lem::arb-budget} provides further details on this equivalence.}
\begin{equation}\label{def:FQ}
\FQ{u} \defeq \left\{ \mp\in\Rp^\goods ~:~ \exists b > 0, \sup_{\my\in\Rp^\goods}  b \log u(\my) - \pr{\mp}{\my} < +\infty\right\}\, .
\end{equation} 
Under Assumption~\ref{assum:basic-demand-system}, 
$\0\in \FQ{u}$ if and only if $u(.)$ is satiable. Every strictly positive $\mp>\0$ is in $\FQ{u}$, and 
if $u$ is satiable, then $\FQ{u} = \Rp^\goods$ (see Lemma~\ref{lem::gale-domain-property}). 
In Appendix~\ref{sec:galedemand}, we show the following.
\begin{restatable}{lemma}{galedbasic}\label{lem-galedbasic}
   Under Assumption~\ref{assum:basic-demand-system}, for any $\mp \in \FQ{u}$ and for any $b > 0$, it holds that $\galedemand{}^u(\mathbf{p},b)\neq\emptyset$.
\end{restatable}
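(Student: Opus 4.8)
The plan is to show that the supremum defining $\galedemand{}^u(\mathbf{p},b)$ is attained. Fix $\mp\in\FQ{u}$ and $b>0$. The first observation is that the objective $g(y)\defeq b\log u(y)-\pr{\mp}{y}$ is finite and bounded above: by definition of $\FQ{u}$, there is \emph{some} $b'>0$ with $\sup_y b'\log u(y)-\pr{\mp}{y}<\infty$, and a short argument (splitting into the regime $u(y)\le 1$ and $u(y)>1$, and using $\pr{\mp}{y}\ge 0$, $u(\0)=0$, monotonicity) upgrades this to a finite supremum $M^*\defeq\sup_y g(y)<\infty$ for our given $b$. Let $\{y^{(k)}\}$ be a sequence with $g(y^{(k)})\to M^*$.

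Next I would extract the limiting "utility level'' and "spending level'' along the sequence. Passing to a subsequence, assume $u(y^{(k)})\to U$ and $\pr{\mp}{y^{(k)}}\to S$ in $[0,\infty]$, so that $b\log U - S = M^*$; since $M^*>-\infty$ we get $U>0$, and since $M^*<\infty$ we get that $U<\infty$ forces $S<\infty$ as well, while $U=\infty$ is impossible because then $\pr{\mp}{y^{(k)}}$ would have to grow at least like $b\log u(y^{(k)})$, and one can check this still makes $g$ unbounded or forces $S=\infty$ — either way contradicting $b\log U - S=M^*$ finite. Hence both $U$ and $S$ are finite and positive (for $U$). Now I want to invoke Assumption~\ref{assum:basic-demand-system} with budget parameter $b\defeq S$: the sequence $\{y^{(k)}\}$ has $\pr{\mp}{y^{(k)}}\to S$ and $u(y^{(k)})\to U<\infty$, so the assumption yields a finite allocation $y^*$ with $u(y^*)=U$ and $\pr{\mp}{y^*}\le S$. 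Then $g(y^*)=b\log u(y^*)-\pr{\mp}{y^*}\ge b\log U - S = M^*$, and since $M^*$ is the supremum this is an equality, so $y^*\in\galedemand{}^u(\mathbf{p},b)$, proving non-emptiness.

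A couple of technical points need care. First, the reduction to "$\sup g<\infty$ for the given $b$'' from "$\sup$ finite for some $b'$'': if $b'\ge b$ this is immediate on $\{u\le 1\}$ where $\log u\le 0$, and on $\{u>1\}$ we have $b\log u\le b'\log u$; if $b'<b$ one argues that near the region where $u$ is bounded away from $0$ the two differ by a bounded multiple of $\log u$, and large $u$ is already controlled — alternatively, one can cite Lemma~\ref{lem::gale-domain-property} which pins down $\FQ{u}$ precisely ($\0\in\FQ{u}$ iff $u$ satiable, every $\mp>\0$ is in it, and $\FQ{u}=\Rp^\goods$ when $u$ satiable), and handle the two cases "$u$ satiable'' (then $u$ is bounded, $\log u$ is bounded above, done trivially) and "$u$ non-satiable, $\mp>\0$'' (then coercivity in the directions with positive price plus monotonicity does the job) separately. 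Second, one must rule out $U=0$: that would give $g(y^{(k)})\to-\infty$, contradicting $M^*>-\infty$, and $M^*>-\infty$ holds because $g$ is not identically $-\infty$ (pick any $x$ with $u(x)>0$).

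I expect the main obstacle to be the case analysis on whether $u(y^{(k)})$ and $\pr{\mp}{y^{(k)}}$ stay bounded — in particular, carefully excluding the scenario where both tend to $\infty$ at a matching rate so that $g$ stays near $M^*$; handling this cleanly is exactly what Assumption~\ref{assum:basic-demand-system} is designed for, but one has to massage the sequence so that the hypothesis of that assumption (convergence of $\pr{\mp}{y^{(k)}}$ to a finite value $b$ with $u(y^{(k)})$ converging to a finite $M$) is literally met, which is why the subsequence extraction and the finiteness arguments above come first.
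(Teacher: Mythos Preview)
Your overall strategy---take a maximizing sequence, extract limits $U$ and $S$ for utility and spending, then invoke Assumption~\ref{assum:basic-demand-system}---is the same as the paper's, and the endgame (applying the assumption to produce $y^*$ with $u(y^*)=U$ and $\pr{\mp}{y^*}\le S$) is correct once $U$ and $S$ are known to be finite.

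The gap is precisely the step you flag yourself: ruling out $U=S=\infty$. Your argument there does not go through. You write that $U=\infty$ ``forces $S=\infty$, either way contradicting $b\log U - S = M^*$ finite,'' but when both are infinite the expression $b\log U - S$ is an indeterminate form, not a contradiction; a sequence with $u(y^{(k)})=e^k$ and $\pr{\mp}{y^{(k)}}=bk-M^*$ would have $g(y^{(k)})\equiv M^*$. You then say this is ``exactly what Assumption~\ref{assum:basic-demand-system} is designed for,'' but that assumption takes $u(x^{(k)})\to M<\infty$ as a \emph{hypothesis}; it cannot be used to establish $U<\infty$.

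The paper closes this gap with Lemma~\ref{lem::budget-spending-NSW}: whenever $\pr{\mp}{y}>b$, the rescaled point $y'=\tfrac{b}{\pr{\mp}{y}}\,y$ has $\pr{\mp}{y'}=b$ and strictly larger objective (using concavity and $u(\0)=0$). Applying this to each $y^{(k)}$ yields a new maximizing sequence $\{\bar y^{(k)}\}$ with $\pr{\mp}{\bar y^{(k)}}\in[0,b]$ for all $k$. Now spending lies in a compact interval, a convergent subsequence gives $\pr{\mp}{\bar y^{(i_k)}}\to b'\in[0,b]$, and since $M^*<\infty$ (Lemma~\ref{lem::gale-domain-property}\eqref{lem::arb-budget}) this forces $u(\bar y^{(i_k)})$ to converge to a finite value as well. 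Assumption~\ref{assum:basic-demand-system} then applies directly. Your proposal is missing exactly this scaling step.
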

Gale demands may behave very differently from the standard demand correspondence. Example~\ref{eg:mono} shows the counter-intuitive property increasing the prices $\mp$ may actually increase the Gale utility value  $\max_{\my\in\Rp^\goods} b\log u(\my)- \pr{\mp}{\my}$.

Recall the notions of bounded linear, separable piecewise linear (SPLC) and Leontief-free (LF) utilities from the Introduction. We now define the class of generalized network utility functions.

\paragraph{Generalized network utilities}
We now formally introduce the generalized flow model. 
A \emph{generalized flow instance} is given by a directed graph $G=(V,E)$  with a sink node $t\in V$, a supply set $S\subseteq V\setminus \{t\}$, a capacity vector $\mcp\in \Rp^E$, and positive gain factors $\bm{\gamma}\in\Rpp^E$. 
We let $\din(S)$ and $\dout(S)$ denote the set of arcs entering and leaving a subset  $S\subseteq V$, respectively. For a node $v\in V$, we use $\din(v)\defeq\din(\{v\})$ and $\dout(v)\defeq\dout(\{v\})$. 
For a vector $\mf\in \Rp^E$, for every arc $e=(v,w)$, we interpret $f_e$ as the flow entering $e$ at node $v$.  The flow gets multiplied by the gain factor $\gamma_e$ while traversing the arc, with $\gamma_e f_e$ amount reaching node $w$. Thus,
the \emph{net flow} at node $v\in V$ is defined as 
\[
\net{\mf}{v}\defeq \sum_{e\in \din(v)} \gamma_e f_e -\sum_{e\in\dout(v)} f_e\, .
\]
Given a \emph{supply vector} $\mx\in \Rp^S$, we say that $\mf\in \Rp^E$ is a \emph{feasible generalized flow with supply $\mx$} and capacity $\mcp$ if $\0\le \mf\le \mcp$, and $\net{\mf}{v}\ge-x_v$ for all $v\in S$, and $\net{\mf}{v}\ge0$ for all $v\in V\setminus (S\cup \{t\})$. The \emph{maximum generalized flow value from supply $\mx$} is the maximum amount of $\net{\mf}{t}$ among all feasible flows with supply $\mx$. It corresponds to the optimum value of the following linear program:
\begin{equation}\label{eq:genflow-LP}\tag{$\Genflow(\mx)$}
\begin{aligned}
\max~& \net{\mf}{t}\\
\net{\mf}{v}&\ge -x_v\quad\forall v\in S\\
\net{\mf}{v}&\ge 0\quad\forall v\in  V\setminus (S\cup \{t\})\\
\0&\le \mf\le \mcp\, .
\end{aligned}
\end{equation}

\begin{Definition} The function $u: \Rp^\goods\to \Rp$ is a \emph{generalized network utility} function if 
$u(\mx)$ is the maximum generalized flow value from supply $\mx$ in some generalized flow instance where the supply set is $S=\goods$,  the maximum flow amount is finite for any supply vector, $u(\0)=0$, and there exists $\mx\in\Rp^\goods$ with $u(\mx)>0$.
\end{Definition}

The following properties are immediate from the definition and show that these are indeed utility functions in accordance with Definition~\ref{def:utility-function}.
\begin{proposition}\label{prop:gnuf-basic}
If $u: \Rp^\goods\to \Rp$ is a generalized network utility function, then $u(\mx)$ is monotone non-decreasing and concave, and satisfies Assumption~\ref{assum:basic-demand-system}.
\end{proposition}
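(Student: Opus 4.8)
# Proof Proposal for Proposition~\ref{prop:gnuf-basic}

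\textbf{Overview of the approach.} The plan is to verify each of the three claimed properties directly from the LP formulation $\Genflow(x)$, exploiting the fact that the feasible flow polytope depends on the supply vector $x$ only through the right-hand side constraints $\net{f}{v}\ge -x_v$ for $v\in S=\goods$. Monotonicity will follow because enlarging $x$ relaxes constraints; concavity will follow from an averaging/convex-combination argument on feasible flows; and Assumption~\ref{assum:basic-demand-system} will follow from a compactness argument, using that $u$ is piecewise linear (as the value function of a parametric LP with a bounded optimum) together with the standing hypothesis that the maximum flow value is finite for every supply.

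\textbf{Step 1: Monotonicity.} First I would show that if $x\le x'$ (coordinatewise), then every feasible generalized flow with supply $x$ is also a feasible generalized flow with supply $x'$: the only constraints involving the supply are $\net{f}{v}\ge -x_v$ for $v\in S$, and these only become weaker when $x_v$ increases; all capacity constraints and the conservation constraints at non-supply, non-sink nodes are unaffected. Hence the feasible region for $x'$ contains that for $x$, so the maximum of $\net{f}{t}$ can only increase, giving $u(x)\le u(x')$. Since $u(\0)=0$ is given and the zero flow is always feasible, $u\ge 0$ as well.

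\textbf{Step 2: Concavity.} For $x^{(1)}, x^{(2)}\in\Rp^\goods$ and $\lambda\in[0,1]$, let $f^{(1)}, f^{(2)}$ be optimal feasible flows attaining $u(x^{(1)})$ and $u(x^{(2)})$ respectively (these exist since the optimum is finite and the feasible region is a nonempty polyhedron on which the objective is bounded). Consider $f\defeq \lambda f^{(1)} + (1-\lambda) f^{(2)}$. Because $\net{f}{v}$ is linear in $f$, we get $\net{f}{v} = \lambda\,\net{f^{(1)}}{v} + (1-\lambda)\,\net{f^{(2)}}{v}$, so $f$ satisfies $0\le f\le\cp$, satisfies $\net{f}{v}\ge 0$ at internal nodes, and satisfies $\net{f}{v}\ge -(\lambda x^{(1)}_v + (1-\lambda) x^{(2)}_v)$ at each $v\in S$. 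Thus $f$ is feasible for supply $\lambda x^{(1)} + (1-\lambda) x^{(2)}$, and its objective value is $\lambda\,\net{f^{(1)}}{t} + (1-\lambda)\,\net{f^{(2)}}{t} = \lambda u(x^{(1)}) + (1-\lambda) u(x^{(2)})$; hence $u(\lambda x^{(1)} + (1-\lambda)x^{(2)}) \ge \lambda u(x^{(1)}) + (1-\lambda) u(x^{(2)})$, which is concavity.

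\textbf{Step 3: Assumption~\ref{assum:basic-demand-system} (the main obstacle).} This is the delicate part. Fix $\mp\ge\0$ and $b>0$, and suppose $x^{(k)}$ is a sequence with $\pr{\mp}{x^{(k)}}\to b$ and $u(x^{(k)})\to M<\infty$. I would argue as follows. For each $k$, pick an optimal flow $f^{(k)}$ for supply $x^{(k)}$; by replacing $x^{(k)}$ with the actual deficits, we may assume $x^{(k)}_v = \max\{0, -\net{f^{(k)}}{v}\}$ for each $v\in S$, which only decreases $\pr{\mp}{x^{(k)}}$ and keeps $f^{(k)}$ feasible and optimal for the new (smaller) supply — so WLOG $\pr{\mp}{x^{(k)}}\le b$ and $\liminf_k \pr{\mp}{x^{(k)}}$ still bounds the original limit appropriately; one must be slightly careful here and instead keep the inequality $\pr{\mp}{x^{(k)}} \le b + o(1)$. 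The flows $f^{(k)}$ all lie in the compact box $[\,\0,\cp\,]$, so a subsequence converges to some $f^*$ with $0\le f^*\le\cp$. By continuity of $\net{\cdot}{v}$, $f^*$ satisfies the internal conservation constraints, and setting $x^*_v \defeq \max\{0,-\net{f^*}{v}\}$ makes $f^*$ feasible for supply $x^*$. Continuity also gives $\net{f^*}{t} = \lim_k \net{f^{(k)}}{t} = M$, so $u(x^*)\ge M$. Finally, $\pr{\mp}{x^*} \le \liminf_k \pr{\mp}{x^{(k)}} \le b$ by lower semicontinuity of each deficit term under the limit (using $p\ge 0$); combined with monotonicity and $u(x^{(k)})\to M$, any feasible $x$ with $\pr{\mp}{x}\le b$ has $u(x)\le M$, so in fact $u(x^*)=M$. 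The subtlety to handle carefully is the direction of the inequality $\pr{\mp}{x^*}\le b$: since $x^*_v$ is the limit (along the subsequence) of quantities bounded by $x^{(k)}_v$ is \emph{not} automatic, one instead uses $x^*_v = \max\{0,-\net{f^*}{v}\} = \lim_k \max\{0,-\net{f^{(k)}}{v}\} \le \lim_k x^{(k)}_v$, so $\pr{\mp}{x^*}\le \lim_k\pr{\mp}{x^{(k)}}=b$, which closes the argument. An alternative, cleaner route for Step 3 is to invoke that the optimal value of $\Genflow(x)$ is, by LP duality and finiteness of the optimum for all $x$, a polyhedral (piecewise linear) concave function of $x$ with closed graph, from which the assumption follows by a standard closedness argument; I would present whichever is shorter.
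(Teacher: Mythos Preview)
The paper offers no proof of this proposition beyond asserting that the properties are ``immediate from the definition.'' Your write-up therefore supplies considerably more detail than the paper does, and Steps~1 and~2 are correct and standard.

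There is, however, a genuine slip in Step~3. You write that ``combined with monotonicity and $u(x^{(k)})\to M$, any feasible $x$ with $\pr{\mp}{x}\le b$ has $u(x)\le M$.'' This is false: nothing in Assumption~\ref{assum:basic-demand-system} asserts that $M$ is the supremum of $u$ over the budget set, only that a particular sequence of utilities converges to $M$. (Take $u(x)=x$ on one good, $p=1$, $b=1$, $x^{(k)}\equiv 1/2$; then $M=1/2$ but $u(1)=1>M$.) Fortunately the conclusion $u(x^*)=M$ is not needed via that route: you already have $u(x^*)\ge M$ from the limit flow, and since $u(\0)=0\le M$ and $u$ is continuous (concave, finite-valued) on $\Rp^\goods$, the intermediate value theorem on the segment $[\0,x^*]$ yields some $\lambda\in[0,1]$ with $u(\lambda x^*)=M$ and $\pr{\mp}{\lambda x^*}\le\pr{\mp}{x^*}\le b$. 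So the fix is one sentence.

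One further remark: your compactness step relies on the flows $f^{(k)}$ living in the box $[\0,\cp]$. This is compact only when all capacities are finite, which is what the paper's formal definition $\cp\in\Rp^E$ says, though some of the paper's examples informally set $\cp_e=\infty$. Your ``alternative, cleaner route'' via the value function of a parametric LP being polyhedral concave with closed hypograph avoids this issue and also sidesteps the slip above; if you keep the explicit compactness argument, it is worth flagging the finite-capacity assumption.
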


\paragraph{SPLC and Leontief-free as special cases}
An SPLC function can be represented as a generalized network utility function by setting $V=\goods\cup\{t\}$, and for each $j\in\goods$, adding a capacitated arc $e=(j,t)$ for each linear segment of $v_j$, where $\gamma_e$ is the marginal utility on this segment and $h_e$ is the length of the segment. Similarly, we can represent a Leontief-free utility as a generalized network utility function by setting $V=\goods\cup K\cup \{t\}$; recall $K$ is the set of segments. For each $j\in\goods$ and $k\in K$, we add an arc $e=(j,k)$ with gain factor $\gamma_e=a_{jk}$ and capacity $h_e=\infty$. For each segment $k\in K$, we add an arc $e=(k,t)$ with $\gamma_e=1$ and $h_e=\ell_k$. 

\paragraph{
Utilities from production} Assume that agents are competing for access to $m$ machines $\goods$ that can produce $\ell$ different products. The machine $j$ can produce $a_{jk}$ units of product $k$ per time unit. The unit revenue of product $k$ is $r_k$. There are also limits $t_{jk}$ on the quantity of each product $k$ that the agent can produce on machine $j$, as well a limit $d_k$ on the maximum possible amount producible from $k$. Given $x_j$ time on machine $j\in\goods$, the utility of the agent is the maximum possible revenue subject to these constraints. This can be formulated using a two-layer generalized flow network, similar to Leontief-free case described above, but with additional capacity bounds $t_{jk}$. See Figure~\ref{fig:utilities-production} for an illustration. The model allows for more complex production networks with possibly multiple layers corresponding to raw materials, intermediary and final products.

\begin{figure}[t]
\[\begin{tikzcd}[column sep=5ex,row sep=1ex]
	{x_1 \space\bullet} &&& {1\space \bullet} \\
	{x_2 \space \bullet} & {} && \vdots \\
	\vdots &&& {k \space \bullet} &&& { \bullet} \\
	{x_i \space \bullet} &&& \vdots \\
	\vdots &&& {\ell \space \bullet} \\
	{x_m \bullet}
	\arrow[from=1-4, to=3-7]
        \arrow[from=1-1, to=3-4]
	\arrow[from=2-1, to=3-4]
	\arrow["{(r_k, d_k)}", from=3-4, to=3-7]
	\arrow[from=2-1, to=1-4]
	\arrow["{\quad \quad \quad (a_{jk}, t_{jk})}", from=4-1, to=3-4]
	\arrow[from=4-1, to=5-4]
	\arrow["\vdots"', from=5-4, to=3-7]
\end{tikzcd}\]
\caption{Utilities from production.}\label{fig:utilities-production}
\end{figure}

\subsection{Competitive equilibrium} 
In a \emph{Fisher market instance}, we are given a set of agents $\agents$  and a set of divisible goods $\goods$, with one unit available of each good. Agents have utility functions $u_i\,:\,\Rp^\goods\to\Rp$ and budgets $b_i\in\Rpp$, $i\in\agents$; we denote the budget vector as $\mb = (b_i)_{i\in\agents}$. An allocation is represented by $(\mx_i)_{i\in \agents}$, where $x_{ij}$ denotes the amount of good $j$ allocated to agent $i$. A  \emph{price vector} is a vector $\mp\in\Rp^\goods$. The demand correspondence of agent $i$ will be denoted as
 $\demand{i}(\mathbf{p},b_i)$. 
 
\begin{Definition}[Competitive  equilibrium] Given a Fisher market instance as above, the allocations and prices $((\mx_i)_{i\in \agents}, \mp)$ form a \emph{competitive (market) equilibrium} if the following hold:
\begin{enumerate}[(i)]
\setlength\itemsep{0em}
    \item every agent gets an optimal utility at these prices: $\mx_i\in \demand{i}(\mathbf{p},b_i)$ for every agent $i\in\agents$. 
    \item no good is oversold: $\sum_i x_{ij} \leq 1$ for all $j\in\goods$.
    \item every good with positive price is fully sold:  $\sum_i x_{ij} = 1$ if $p_j>0$.
\end{enumerate}
\end{Definition}
In a \emph{demand-approximate competitive equilibrium}, agents may  have suboptimal bundles:
\begin{Definition}[Approximate competitive equilibrium]\label{def:approx-market-eq}
Given a Fisher market instance and $\approxi\ge 1$, the allocations and prices $((\mx_i)_{i\in \agents}, \mp)$ form an \emph{$\approxi$-demand approximate competitive equilibrium}  
if
\begin{enumerate}[(i)]
\setlength\itemsep{0em}
    \item $\pr{\mp}{\mx_i}\le b_i$ for each $i\in\agents$, and $i$ gets at least $\approxi$-fraction of the optimal utility at these prices: $u_i(\mx_i)\ge \frac{1}\approxi\max_{\mx_i\in\Rp^\goods}\{ u_i(\mx_i)\, :\, \pr{\mp}{\mx_i}\le b_i \}$.
     \item  no good is oversold: $\sum_i x_{ij} \leq 1$ for all $j\in\goods$.
    \item every good with positive price is fully sold:  $\sum_i x_{ij} = 1$ if $p_j>0$.
\end{enumerate}
\end{Definition}

It is well known that when all $b_i$'s are identical, competitive equilibria yield an allocation that is exactly envy-free. Here, we demonstrate that our definition of approximate competitive equilibrium also results in approximate envy-freeness.
\begin{Definition}
    For a given allocation $(\mx_i)_{i\in \agents}$, we say it is $\beta$-envy-free if, for any agents $i$ and $j$, the condition $u_i(\mx_i) \geq \frac{1}{\beta} u_i(\mx_j)$ holds.
\end{Definition}
\begin{lemma} \label{lem::approx-envy-free}
   When $b_i$'s are identical, the allocation $(\mx_i)_{i\in \agents}$ of an $\approxi$-approximate competitive equilibrium is also $\approxi$-envy-free. 
\end{lemma}
\begin{proof}
    This follows directly from $u_i(\mx_i)\ge \frac{1}\approxi\max_{\mx_i\in\Rp^\goods}\left\{ u_i(\mx_i)\, :\, \pr{\mp}{\mx_i}\le b_i \right\} \ge \frac{1}{\approxi} u_i(\mx_j)$.
\end{proof}

\paragraph{Thrifty equilibrium}
In the case of satiable utility functions, it is possible for an (exact) equilibrium to satisfy $\pr{\mp}{\mx_i}<b_i$. Hylland and Zeckhauser~(\citeyear{hylland1979efficient}) showed that competitive equilibria may not be Pareto efficient in the presence of satiable utilities. Mas-Colell~(\citeyear{mas1992equilibrium}) strengthened the definition of competitive equilibrium by requiring that each agent select a least-cost optimal bundle. This is also known as a \emph{strong equilibrium}.

\begin{Definition}[Thrifty competitive equilibrium]\label{def::thrifty-market-equilibrium}
Given a Fisher market instance, the allocations and prices $((\mx_i)_{i\in \agents}, \mp)$ form a \emph{thrifty competitive equilibrium}  if $((\mx_i)_{i\in \agents},\mp)$ is a competitive equilibrium, and moreover,  for every $i\in\agents$, the allocation $\mx_i$ minimizes $\pr{\mp}{\mx_i}$ among  all allocations $\demand{i}(\mathbf{p},b_i)$.
\end{Definition}

\begin{lemma}[\citet{garg2022approximating,gul2022the}]
    Thrifty competitive equilibria always exist and are Pareto efficient.
\end{lemma}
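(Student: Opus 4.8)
\medskip
\noindent The plan has two essentially independent parts: \emph{Pareto efficiency} is a short pricing argument, and \emph{existence} I would obtain as a limit of regularized markets.

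\emph{Pareto efficiency.} Let $(\mx,\mp)$ be a thrifty competitive equilibrium and suppose, toward a contradiction, that a feasible allocation $\mx'$ (so $\sum_i x'_{ij}\le 1$ for every $j\in\goods$) Pareto dominates it: $u_i(x'_i)\ge u_i(x_i)$ for all $i\in\agents$, with strict inequality for some $i^*$. The idea is to certify the domination with the equilibrium prices. First, whenever $u_i(x'_i)>u_i(x_i)$ the bundle $x'_i$ must be unaffordable --- otherwise it would contradict that $x_i$ maximizes $u_i$ over $\{x:\pr{\mp}{x}\le b_i\}$ --- so $\pr{\mp}{x'_i}>b_i\ge\pr{\mp}{x_i}$, which in particular applies to $i^*$. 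Second, whenever $u_i(x'_i)=u_i(x_i)$ I would argue $\pr{\mp}{x'_i}\ge\pr{\mp}{x_i}$ anyway: were $x'_i$ strictly cheaper, it would be both affordable and utility-maximal, hence $x'_i\in\demand{i}(\mp,b_i)$, contradicting that $x_i$ is a cheapest element of $\demand{i}(\mp,b_i)$. (This second step is the only place thriftiness is used, and is precisely why a non-thrifty competitive equilibrium can fail to be Pareto efficient.) Summing over all agents, with strictness at $i^*$, gives $\sum_i\pr{\mp}{x'_i}>\sum_i\pr{\mp}{x_i}$; but the market-clearing conditions give $\sum_i\pr{\mp}{x_i}=\sum_j p_j\sum_i x_{ij}=\sum_{j:p_j>0}p_j=\sum_j p_j$, while $\sum_i\pr{\mp}{x'_i}=\sum_j p_j\sum_i x'_{ij}\le\sum_j p_j$ since $\mp\ge\0$ and each $\sum_i x'_{ij}\le 1$ --- a contradiction.

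\emph{Existence.} I would obtain a thrifty equilibrium as a limit of equilibria of regularized markets in which the least-cost tie-break is folded into the agents' objective. For $\delta>0$, let agent $i$ at prices $\mp$ respond with $D_i^\delta(\mp,b_i)\defeq\arg\max\{u_i(x)-\delta\pr{\mp}{x}:\pr{\mp}{x}\le b_i,\ x\ge\0\}$; equivalently, the agent keeps each unit of unspent budget at marginal value $\delta$. On the relevant bounded price/allocation region the objective is jointly continuous and concave in $x$, and the budget-feasible correspondence is continuous with nonempty compact convex values, so Berge's theorem makes $D_i^\delta$ upper hemicontinuous with compact convex values; a standard Kakutani-type fixed-point argument then yields market-clearing prices $\mp^\delta$ with an allocation $\mx^\delta$ satisfying $x_i^\delta\in D_i^\delta(\mp^\delta,b_i)$. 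Comparing $x_i^\delta$ with a cheapest utility-maximal bundle at $\mp^\delta$ --- of utility $\mu_i^\delta\defeq\max\{u_i(x):\pr{\mp^\delta}{x}\le b_i\}$ and cost $c_i^\delta\le b_i$ --- yields the two bounds $u_i(x_i^\delta)\ge\mu_i^\delta-\delta b_i$ and $\pr{\mp^\delta}{x_i^\delta}\le c_i^\delta$. Passing to a convergent subsequence $(\mx^\delta,\mp^\delta)\to(\mx^*,\mp^*)$ as $\delta\to0$ (compactness): market clearing survives; the first bound, with continuity of $\mu_i(\cdot)$, shows $x_i^*$ is utility-maximal at $\mp^*$; and the second shows $\pr{\mp^*}{x_i^*}\le\limsup_\delta c_i^\delta$. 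The step I expect to be the main obstacle is showing $\limsup_\delta c_i^\delta\le c_i^*$, i.e.\ upper semicontinuity in the price of the minimum cost of a utility-maximal bundle; I would prove it by perturbing a cheapest optimal bundle at $\mp^*$ slightly to recover the infinitesimally larger utility demanded at nearby prices and using concavity to bound the extra cost, while handling zero prices $p_j=0$ --- where demands and $\mu_i$ may be unbounded --- separately. Given this, $x_i^*$ is at once utility-maximal and cheapest among such bundles at $\mp^*$, so $(\mx^*,\mp^*)$ is a thrifty equilibrium; carrying out this limiting argument in detail is what the cited references do.
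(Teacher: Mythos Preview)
The paper does not prove this lemma at all; it is stated with a citation to \cite{garg2022approximating,gul2022the} and used as a black box. So there is no ``paper's own proof'' to compare against.

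Your Pareto-efficiency argument is correct and is the standard one. The only place I would tighten the wording is in the second case: when $u_i(x'_i)=u_i(x_i)$, you should split according to whether $\pr{\mp}{x'_i}\le b_i$ or not. If $\pr{\mp}{x'_i}>b_i\ge\pr{\mp}{x_i}$ you are already done; if $\pr{\mp}{x'_i}\le b_i$ then $x'_i$ is affordable and utility-maximal, hence in $\demand{i}(\mp,b_i)$, and thriftiness gives $\pr{\mp}{x_i}\le\pr{\mp}{x'_i}$. Your phrasing ``were $x'_i$ strictly cheaper, it would be both affordable and utility-maximal'' implicitly uses $\pr{\mp}{x_i}\le b_i$ to get affordability, which is fine but worth making explicit.

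Your existence sketch is a plausible route, and you are honest that it is only a sketch. The regularization $u_i(x)-\delta\pr{\mp}{x}$ is a natural way to encode the least-cost tie-break, and the two inequalities you extract from comparing $x_i^\delta$ to a cheapest optimal bundle are the right ones. You correctly flag the two genuine difficulties: compactness fails when some $p_j^\delta\to 0$ (demands can blow up, and under the paper's Assumption~\ref{assum:basic-demand-system} one has to argue separately that a finite optimal bundle exists at the limit), and the upper semicontinuity of the minimum-cost function $\mp\mapsto c_i(\mp)$ is not automatic and needs a careful perturbation argument. Since the paper itself defers to the cited references for existence, your level of detail here is appropriate; just be aware that making the limiting argument rigorous in the presence of zero prices and possibly satiable utilities is where the real work lies.
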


Note that when all utility functions are non-satiable, every competitive equilibrium is thrifty.

\subsection{Nash welfare}
\begin{Definition}
Given a Fisher market instance, the \emph{Nash welfare} of an allocation $(\my_i)_{i\in \agents}$ is the weighted geometric mean of their utilities, namely, 
\[ \left(\prod_{i\in\agents} u_i(\my_i)^{b_i}\right)^{\frac{1}{\sum_{i\in \agents} b_i}}\, .  \]
The allocation $(\my_i)_{i\in \agents}$ \emph{maximizes the (weighted) Nash welfare}, if it achieves the highest Nash welfare among all allocations satisfying $\sum_{i\in\agents} y_{ij} \leq 1$ for all $j\in\goods$. 
\end{Definition}
It is well known that the Nash welfare maximizing allocations correspond exactly to the optimal solutions of the convex program introduced by~\citet{eisenberg1959consensus}. 
\begin{equation}\label{def:log-NSW}\tag{EG} 
\begin{aligned}
   \max &\sum_{i\in\agents} b_i \log u_i(\my_i) \ \ 
    \text{s.t.} &\sum_{i\in\agents} y_{ij} \leq 1, \ \forall j\in\goods \ \ \text{and} 
    &~~~y_{ij}\ge0, \ \forall i\in\agents, j\in\goods\, .
\end{aligned}
\end{equation}
The Lagrangian dual of \eqref{def:log-NSW} can be written as 
\begin{equation}\label{eq:Gale-dual-function}
    \EGdual (\mq) \defeq \sum_{i\in\agents} \max_{\my_i\in\Rp^{\goods}}\left( b_i\log u_i(\my_i) - \pr{\mq}{\my_i}\right) + \sum_{j\in \goods} q_j.
\end{equation}
We define the domain $\dom(\EGdual)\defeq \cap_{i\in\agents} \FQ{u_i}$ (see \eqref{def:FQ}). Thus, $\dom(\EGdual)$ is the set of nonnegative prices $\mq$ such that $\EGdual(\mq)<\infty$.
Slater's condition is satisfied; therefore strong duality holds, and the dual optimum is attained. Consequently, the optimal Nash welfare value is equal to 
\begin{equation}\label{eq:Gale-dual}
\min_{\mq\in\dom(\EGdual)} \EGdual(\mq)\, ,
\end{equation}
and 
we refer to the optimal $\mq\in\dom(\EGdual)$ as \emph{Gale prices}. We note that these may not be unique. 
To justify this term, let us 
recall the definition of the Gale demand correspondence \eqref{def:galedemand}. The term
corresponding to agent $i$ in the definition of $\EGdual (\mq)$ is the value of the assignments in the Gale demand correspondence of this agent at prices $\mq$ and budget $b_i$. We denote by $\galedemand{i}(\mq,b_i)$ the Gale demand corresponding to the utility function $u_i$, $i\in\agents$.

\subsection{Gale substitute utilities}
We now define \emph{Gale substitute} and \emph{$\Sigma$-Gale substitute} utilities, the crucial classes where we show that 
 the Nash welfare maximizing allocation recovers at least half of the utility of every agent in any possible competitive equilibrium (see Theorem~\ref{thm::main-result}).

\begin{Definition}\label{def:gale-subs}
    The utility function $u(\cdot)$ is \emph{Gale-substitutes} if it satisfies the following properties:
\begin{enumerate}[(i)]
\setlength\itemsep{0em}
        \item for any  $b > 0$, any $\mathbf{q},\mq' \in \FQ{u}$, and any $\my \in \galedemand{}^u (\mq,b)$, if  $\mathbf{q} \geq \mathbf{q}'$, then there exists a $\my' \in \galedemand{}^u (\mathbf{q}',b)$ such that $y'_{j} \leq y_{j}$ whenever $q_j' = q_j$; \label{def:gale-subs-1}
        \item for any  $b \geq b' > 0$ and $\mathbf{q} \in \FQ{u}$, and for any $\my \in \galedemand{}^u (\mathbf{q},b)$, there exists a $\my' \in \galedemand{}^u (\mq,b')$ such that $\my'\leq \my$. \label{def:gale-subs-2}
    \end{enumerate}
    We say that $u(\cdot)$ is $\Sigma$-Gale-substitutes, if it satisfies the Gale-substitutes condition along with the following additional property:
    \begin{equation}\label{prop:satiate}\tag{$\Sigma$}
    \mbox{if }\mathbf{0} \in \FQ{u}\,,\, b>0\mbox{ and }\mathbf{q} \in\Rp^\goods,\mbox{ then for any }\my \in \galedemand{} (\mathbf{0},b), \exists \my' \in \galedemand{}^u (\mathbf{q},b)\, :\, \my' \leq \my\, .
\end{equation}
\end{Definition}

\begin{Remark}
As noted above, $\mathbf{0} \in \FQ{u}$ if and only if $u$ is satiable; thus, \eqref{prop:satiate} is vacuously true for non-satiable utilities. In the satiable case, it asserts that if $\my$ is a bundle of maximum possible utility, then at any prices there is an optimal bundle below $\my$. Observe that this is immediate for separable utilities. On the other hand,  \eqref{prop:satiate} fails for the simple Leontief-free utility $u(\mx)=\max\{x_1+x_2,1\}$: $\my=(1,0)$ is an optimal bundle at $\mq=\0$, whereas for $b=1$, $\mq=(2,1)$, the unique solution in $\galedemand{}^u (\mathbf{q},b)$ is $\my'=(0,1)$.
\end{Remark}

Next, we present a utility function that satisfies the Gale-substitutes property but does not exhibit the weak gross substitutes (WGS) property.
\begin{example}
Consider an agent with budget $b$ and a utility function over two goods given by $u(\mx) = \min\{x_1, 0.4\} + x_2$. This utility is separable and, as such, Gale-substitutes (see  Theorem~\ref{thm:sep-gale}). However, we show that this is not WGS: there exist prices $\mp\le \mp'$ and $\mx\in\demand{}^u(\mathbf{p},b)$ such that for some good $j$ with $p_j=p'_j$, every $\mx'\in \demand{}^u(\mathbf{p},b)$ has $x'_j<x_j$.
That is, let $\mp=(0.8,1)$ and $\mp'=(0.9,1)$, and $b=1$. It is easy to see that $\mx=(0.4,0.68)$ is the unique optimal bundle at prices $\mp$, and $\mx'=(0.4,0.64)$ is the unique optimal bundle at prices $\mp'$ in contradiction to the WGS property.

We note that the gross substitutes property is one of the key conditions that allows for the efficient computation of market equilibria. However, SPLC utilities do not satisfy the gross substitutes property, as illustrated in the above example. In a later section, we show that SPLC preferences are Gale-substitutes, which forms the foundation of our main results.
\end{example}

\section{The price of anarchy bound}\label{sec:poa}
We now show the `price of anarchy' result stated in Theorem~\ref{thm:4}. The proof employs an argument based on concavity and the log-sum inequality, drawing inspiration from Theorem 1 in \cite{BranzeiGM22}.

\begin{theorem}[Restatement of Theorem~\ref{thm:4}] \label{thm::nsw-bound-2}
Given a Fisher market instance with concave utility functions, consider any competitive equilibrium $((\mx_i)_{i\in \agents}, \mathbf{p})$ and a Nash welfare maximizing allocation $(\my_i)_{i\in \agents}$. Then,
    \begin{align*}
        \left(\prod_{i\in\agents} u_i(\mx_i)^{b_i}\right)^{\frac{1}{\sum_{i\in \agents} b_i}} \geq \left(\frac{1}{\ee}\right)^{\frac{1}{\ee}} \left(\prod_{i\in\agents} u_i(\my_i)^{b_i}\right)^{\frac{1}{\sum_{i\in \agents} b_i}}.
    \end{align*}
\end{theorem}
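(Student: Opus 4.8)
The plan is to pass to logarithms and reduce the claim to the single inequality
\[
\sum_{i\in\agents} b_i\log\frac{u_i(y_i)}{u_i(x_i)}\ \le\ \frac{1}{\ee}\sum_{i\in\agents} b_i\,,
\]
since dividing by $\sum_{i}b_i$ and exponentiating then gives exactly the stated bound with the constant $\ee^{-1/\ee}=(1/\ee)^{1/\ee}$. (The logarithms are well defined: a routine argument shows $u_i(x_i)>0$ at any competitive equilibrium, because if $u_i(z)>0$ then a sufficiently small scaling $\lambda z$, $\lambda\in(0,1]$, is within budget and, by concavity and $u_i(\0)=0$, satisfies $u_i(\lambda z)\ge\lambda u_i(z)>0$, which contradicts the optimality of $x_i$ if $u_i(x_i)=0$; and if $u_i(y_i)=0$ for some $i$ the claim is trivial.)

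The heart of the argument would be a per-agent bound. Fix an agent $i$ and set $c_i\defeq\pr{\mp}{y_i}$. If $c_i\le b_i$, then $y_i$ is affordable at prices $\mp$, so $x_i\in\demand{i}(\mp,b_i)$ forces $u_i(y_i)\le u_i(x_i)$, hence $b_i\log\frac{u_i(y_i)}{u_i(x_i)}\le 0\le\frac{c_i}{\ee}$. If $c_i>b_i$, put $\lambda_i\defeq b_i/c_i\in(0,1)$; then $\lambda_i y_i$ is affordable, and concavity with $u_i(\0)=0$ gives $u_i(x_i)\ge u_i(\lambda_i y_i)\ge\lambda_i u_i(y_i)$, i.e., $u_i(y_i)/u_i(x_i)\le c_i/b_i$. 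Applying the elementary inequality $\log s\le s/\ee$ for $s>0$ (the map $s\mapsto s/\ee-\log s$ is convex with minimum value $0$ at $s=\ee$) with $s=c_i/b_i$ then gives $b_i\log\frac{u_i(y_i)}{u_i(x_i)}\le b_i\log(c_i/b_i)\le\frac{c_i}{\ee}$. Thus in both cases $b_i\log\frac{u_i(y_i)}{u_i(x_i)}\le\frac{1}{\ee}\pr{\mp}{y_i}$.

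Summing over $i\in\agents$ and using the feasibility of $\my$ ($\sum_i y_{ij}\le 1$) together with the equilibrium conditions (every good with $p_j>0$ is fully sold, and $\pr{\mp}{x_i}\le b_i$), I would conclude
\[
\sum_{i\in\agents} b_i\log\frac{u_i(y_i)}{u_i(x_i)}\ \le\ \frac{1}{\ee}\sum_{j\in\goods} p_j\sum_{i\in\agents} y_{ij}\ \le\ \frac{1}{\ee}\sum_{j\in\goods} p_j\ =\ \frac{1}{\ee}\sum_{i\in\agents}\pr{\mp}{x_i}\ \le\ \frac{1}{\ee}\sum_{i\in\agents} b_i\,,
\]
which is the desired inequality.

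I do not foresee a genuine obstacle here; the argument is elementary. The only point needing care is choosing the right per-agent quantity to bound, namely the \emph{overspend ratio} $c_i/b_i$ truncated at $1$, so that the sum telescopes against the clearing identity $\sum_j p_j=\sum_i\pr{\mp}{x_i}$. As an alternative to the inequality $\log s\le s/\ee$, one may invoke the log-sum inequality on the set of agents with $c_i>b_i$ and then maximize the resulting expression $\beta\mapsto\beta\log\big(\tfrac{1}{\beta}\sum_i b_i\big)$, whose maximum $\tfrac{1}{\ee}\sum_i b_i$ is attained at $\beta=\tfrac{1}{\ee}\sum_i b_i$. The matching lower bound showing the constant $(1/\ee)^{1/\ee}$ is tight is not produced by this argument and would be handled separately (Example~\ref{ex:poa}).
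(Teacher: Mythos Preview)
Your proof is correct and follows essentially the same approach as the paper: both start from the per-agent bound $u_i(y_i)/u_i(x_i)\le\max\{1,\pr{\mp}{y_i}/b_i\}$ via concavity and scaling, then aggregate using $\sum_i\pr{\mp}{y_i}\le\sum_j p_j\le\sum_i b_i$. The only cosmetic difference is that the paper applies the log-sum inequality and then maximizes $B\log(1/B)$, whereas you use the equivalent pointwise inequality $\log s\le s/\ee$ directly---you even mention the paper's route as an alternative in your final paragraph.
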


\begin{proof}
Without loss of generality, we can assume $\sum_{i\in \agents} b_i=1$.
    Note that since $((\mx_i)_{i\in \agents}, \mathbf{p})$ is a competitive equilibrium, then $u_i(\mx_i) \geq \min \left\{1, \frac{b_i}{\pr{\mp}{\my_i}} \right\}u_i(\my_i)$, as agent $i$ can use their budget, $b_i$, to buy at least $\frac{b_i}{\pr{\mathbf{p}}{\my_i}}$ fraction of $\my_i$. The lower bound holds because if $b_i<\pr{\mp}{\my_i}$, then $u_i\left(\frac{b_i}{\pr{\mp}{\my_i}} \my_i\right)\ge \frac{b_i}{\pr{\mp}{\my_i}}u_i(\my_i)$  using concavity and $u_i(\0)=0$. 
Thus,
    \begin{align*}
        \sum_{i\in\agents} {b_i} \log \frac{u_i(\my_i)}{u_i(\mx_i)} \leq  \sum_{i\in\agents} {b_i} \log \max \left\{1, \frac{\pr{\mp}{\my_i}}{b_i} \right\}.
    \end{align*}
    Let $b'_i \defeq {\pr{\mp}{\my_i}}$, and let $\agents'$ be the set of agents such that $b'_i \geq b_i$, $B \defeq \sum_{i \in \mathcal{A}'} b_i$, and $B' \defeq \sum_{i \in \mathcal{A}'} b'_i$. Therefore, using the log sum inequality,
    \begin{align*}
           \sum_{i\in\agents} {b_i} \log \frac{u_i(\my_i)}{u_i(\mx_i)} \leq \sum_{i \in \agents'} {b_i} \log \frac{b'_i}{b_i} \leq B \log \frac{B'}{B} \, .
    \end{align*}
   By definition, $B\le \sum_{i\in\agents} b_i=1$. We also note that $B'\le \sum_{i\in\agents} b_i'=\sum_{i\in \agents} \pr{\mp}{\my_i}\le \sum_{j\in \goods} p_j \le \sum_{i\in\agents} b_i \leq 1$. 
    \begin{align*}
         \sum_{i\in\agents} {b_i} \log \frac{u_i(\my_i)}{u_i(\mx_i)} \leq \max_{B \leq 1} B \log \frac{1}{B} \leq \frac{1}{\ee}\, .
    \end{align*}
    The statement follows.
\end{proof}

The following example demonstrates that the bound $(1/\ee)^{1/\ee}$ is tight.

\begin{example}[Tight example on price of anarchy]\label{ex:poa}
Consider an instance with two agents and two goods. The first agent has a budget of $1$ and has a utility function: 
\[ u_1(\mx_1) = x_{11} + \varepsilon x_{12}, \] 
and the second agent has a budget of $\ee - 1$ and has a utility function 
\[ u_2(\mx_2) = \varepsilon x_{21} + \min\{x_{22}, 1 - \varepsilon\}, \]
where $\varepsilon > 0$ is sufficiently small. 
The maximum Nash welfare is at least $(1 - \varepsilon)^{(\ee-1)/\ee}$, taken at 
$\my_1 = (1, 0)$ and $\my_2=(0, 1)$. 
Conversely, the unique competitive equilibrium is characterized by prices $p_1 = \ee (\frac{1}{1 + \varepsilon})$ and $p_2 = \ee (\frac{\varepsilon}{1 + \varepsilon})$, with corresponding allocations given by:
\[ \mx_1 = \left( \frac{1 + \varepsilon - \ee \varepsilon^2}{\ee}, \varepsilon \right), \quad \mx_2 = \left( 1 -  \frac{1 + \varepsilon - \ee \varepsilon^2}{\ee}, 1 - \varepsilon \right). \] 
The Nash welfare objective for this allocation is $\left(  \frac{1 + \varepsilon}{\ee} \cdot \left[ 1 - \varepsilon + \varepsilon \cdot \left( 1 -  \frac{1 + \varepsilon - \ee \varepsilon^2}{\ee}\right) \right]^{\ee - 1} \right)^{\frac{1}{\ee}}$.
When $\varepsilon \to 0$, the Nash welfare ratio between these two allocations converges to $(1/\ee)^{1/\ee}$.
\end{example}

The above example also shows that the \emph{price of stability} is the same as the price of anarchy, as there is a unique CE.

\section{Nash welfare approximates competitive equilibrium}\label{sec:nash_welfare_approximates_equilibrium}
We now derive Theorem~\ref{thm:1}, showing that Nash welfare maximizing allocations can be seen as approximate competitive equilibria. Our first simple lemma shows that the Gale demand solutions are feasible with respect to the budget:
\begin{lemma}\label{lem::budget-spending-NSW}
{\color{blue}}Given any  utility function $u\,:\, \Rp^\goods\to \Rp$, $b>0$ and $\mq\in \Rp^\goods$,
for  any $\my\in\Rp^\goods$ such that $\pr{\mathbf{q}}{\my} > b$, there exists $\my'\in\Rp^\goods$ such that $\pr{\mathbf{q}}{\my'}\leq b$ and  $b \log u(\my') - \pr{\mathbf{q}}{\my'} >  b \log u(\my) - \pr{\mathbf{q}}{\my}$.
In particular, if  $\my\in\galedemand{}^u(\mathbf{q},b)$, then, $\pr{\mq}{\my}\le b$.
\end{lemma}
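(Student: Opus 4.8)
The plan is to prove the first assertion by an explicit \emph{scaling} construction, and then obtain the ``in particular'' clause as an immediate corollary by contradiction. Assume first that $u(y)>0$, and set $c\defeq\pr{\mq}{y}$; by hypothesis $c>b>0$. Take $\lambda\defeq b/c\in(0,1)$ and $y'\defeq\lambda y$. Then $\pr{\mq}{y'}=\lambda c=b$, so $y'$ is within budget. Since $u$ is concave with $u(\0)=0$, for $\lambda\in[0,1]$ we have $u(\lambda y)\ge \lambda u(y)$, hence $b\log u(y')\ge b\log\lambda+b\log u(y)$. Subtracting $\pr{\mq}{y'}=\lambda c$ and comparing with $b\log u(y)-c$, it therefore suffices to verify the scalar inequality
\[
 b\log\lambda-\lambda c \;>\; -c,\qquad\text{i.e.}\qquad b\log\lambda \;>\; -(1-\lambda)c .
\]
Plugging in $\lambda=b/c$ and dividing by $b$, this is exactly $\log t > 1-\tfrac1t$ with $t\defeq b/c\in(0,1)$, which follows from the standard strict bound $\log s < s-1$ for $s\neq 1$ applied to $s=1/t>1$. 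This yields $b\log u(y')-\pr{\mq}{y'}>b\log u(y)-\pr{\mq}{y}$, as required.

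For the degenerate case $u(y)=0$, the Gale objective at $y$ equals $-\infty$, so any within-budget point with positive utility beats it: since $u$ is a utility function there is some $x\in\Rp^\goods$ with $u(x)>0$, and $y'\defeq\varepsilon x$ for $\varepsilon>0$ small enough satisfies $\pr{\mq}{y'}\le b$ and $u(y')\ge\varepsilon u(x)>0$, giving a finite (hence strictly larger) objective value.

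Finally, the ``in particular'' statement is immediate: if $y\in\galedemand{}^u(\mq,b)$ but $\pr{\mq}{y}>b$, the construction above produces $y'$ with $\pr{\mq}{y'}\le b$ and strictly larger value of $b\log u(\cdot)-\pr{\mq}{\cdot}$, contradicting the optimality of $y$. The only points requiring care --- and thus the (minor) obstacles --- are ensuring the final scalar inequality is genuinely \emph{strict} for $t<1$, and separating out the $u(y)=0$ case so that $\log u(y)$ is never treated as a finite quantity; the rest is a single concavity estimate.
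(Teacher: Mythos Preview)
Your proof is correct and follows essentially the same approach as the paper: both take the scaled point $y'=\tfrac{b}{\pr{\mq}{y}}\,y$, use concavity with $u(\0)=0$ to get $u(y')\ge \tfrac{b}{\pr{\mq}{y}}u(y)$, and reduce to the strict scalar inequality $\log\alpha+1<\alpha$ for $\alpha>1$ (your version $\log t>1-\tfrac1t$ for $t\in(0,1)$ is the same inequality under $t=1/\alpha$). Your explicit treatment of the degenerate case $u(y)=0$ is a small addition; the paper's proof tacitly assumes $u(y)>0$, but this case is harmless for the intended application anyway.
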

\begin{proof}
Let $\my' \defeq \my \cdot \frac{b}{\pr{\mq}{\my}}$. By concavity and $u(\0)=0$, we get $u(\my')\geq\frac{b}{\pr{\mq}{\my}} u(\my)$. Thus, 
\begin{align*}
    b\log u(\my') - \pr{\mq}{\my'} &\geq b \log u(\my) - b \log  \frac{\pr{\mq}{\my}}{b} - b 
    >  b \log u(\my) - \pr{\mq}{\my}\, ,
\end{align*}
where the second inequality used $\log(\alpha)+1 < \alpha$ for $\alpha >  1$.
\end{proof}
The next lemma follows directly from the subgradient Karush--Kuhn--Tucker (KKT) conditions used to characterize solutions in the Gale demand correspondence. For a detailed treatment of Lagrangian duality and KKT conditions for nondifferentiable functions, we refer the reader to Chapter 3 of \cite{ruszczynski2011}. 

\begin{lemma}\label{lem::gale-kkt}
For a  utility function $u\,:\, \Rp^\goods\to \Rp$, $b>0$ and $\mq\in \FQ{u}$, let $\my\in\galedemand{}^u(\mathbf{q},b)$. Then, there exists a supergradient $\mg\in \supg u(\my)$ such that
\[
g_j \le {q_j}\cdot \frac{u(\my)}{b}\, , \quad\forall j\in \goods\,,\,\mbox{with equality  if }y_{j} > 0\, ,
\] Conversely, if such a $\mg\in\supg u(\my)$ exists then $\my\in\galedemand{}^u(\mathbf{q},b)$.
\end{lemma}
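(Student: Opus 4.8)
The plan is to read $\galedemand{}^u(\mathbf{q},b)$ as the set of maximizers of the concave function $F(y)\defeq b\log u(y)-\pr{\mathbf{q}}{y}$ over the convex set $\Rp^\goods$, to compute the supergradient set $\supg F(y)$ by the chain and sum rules of nonsmooth calculus, and then to invoke the first-order (KKT) optimality condition for concave maximization over the nonnegative orthant (Chapter~3 of \cite{ruszczynski2011}).

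The first step is to record that $u(y)>0$ for every $y\in\galedemand{}^u(\mathbf{q},b)$. Since $\mathbf{q}\in\FQ{u}$, Lemma~\ref{lem::gale-domain-property}\eqref{lem::arb-budget} gives $\sup_y F(y)<\infty$; on the other hand, picking any $x^0$ with $u(x^0)>0$ and scaling it down, concavity together with $u(\0)=0$ yields $F(\varepsilon x^0)\ge b\log(\varepsilon u(x^0))-\varepsilon\pr{\mathbf{q}}{x^0}>-\infty$ for small $\varepsilon>0$, so $\sup_y F(y)>-\infty$ and no maximizer can have $u(y)=0$. Hence at $y$ the outer layer $b\log(\cdot)$ is evaluated at a positive argument and is differentiable there.

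Next I would compute $\supg F(y)$. Because $t\mapsto b\log t$ is concave, increasing and differentiable with derivative $b/t$, the chain rule for directional derivatives gives $(b\log u)'(y;d)=\tfrac{b}{u(y)}\,u'(y;d)$ for every direction $d$; since these directional derivatives are exactly the support functions of the respective supergradient sets, this yields $\supg(b\log u)(y)=\tfrac{b}{u(y)}\,\supg u(y)$ (here $u(y)>0$ is used). The linear term $-\pr{\mathbf{q}}{\cdot}$ has supergradient $\{-\mathbf{q}\}$ and, being finite everywhere, permits the Moreau--Rockafellar sum rule, so $\supg F(y)=\bigl\{\tfrac{b}{u(y)}\mg-\mathbf{q}\,:\,\mg\in\supg u(y)\bigr\}$.

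Finally, a point $y\in\Rp^\goods$ maximizes the concave function $F$ over $\Rp^\goods$ if and only if some $\mh\in\supg F(y)$ lies in the normal cone of $\Rp^\goods$ at $y$, i.e.\ $\mh\le\0$ with $h_j=0$ whenever $y_j>0$. Writing $\mh=\tfrac{b}{u(y)}\mg-\mathbf{q}$ and rearranging the coordinatewise inequalities gives exactly $g_j\le q_j\,u(y)/b$ for all $j$, with equality when $y_j>0$; this establishes the forward and the converse directions simultaneously. For the converse I would additionally note that $u(y)=0$ is impossible under the stated conditions: then $g_j\le 0$ for all $j$ and $g_j=0$ on $\supp(y)$ would force $u(z)\le u(y)+\pr{\mg}{z-y}=\pr{\mg}{z-y}\le 0$ for all $z\in\Rp^\goods$, contradicting that some bundle has positive utility; so dividing by $u(y)$ is legitimate. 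I expect the main obstacle to be precisely the nonsmooth calculus: justifying the chain rule $\supg(b\log u)(y)=\tfrac{b}{u(y)}\supg u(y)$ cleanly (the naive argument from the global supergradient inequality fails, since $e^t\ge 1+t$ points the wrong way, so one must go through directional derivatives/support functions), and citing the sum rule and the constrained first-order optimality condition in the exact form needed.
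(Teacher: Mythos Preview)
Your proposal is correct and follows exactly the approach the paper intends: the paper does not give an explicit proof of this lemma but simply refers to the subgradient KKT conditions in Chapter~3 of \cite{ruszczynski2011}, and you have carefully spelled out precisely those details (including the $u(y)>0$ check and the nonsmooth chain/sum rules) that the citation is meant to cover.
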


\thmfirst*
\begin{proof}
    By Lemma~\ref{lem::budget-spending-NSW}, all agents satisfy the budget constraints, $\pr{\mq}{\my_i} \leq b_i$ for all $i$. Additionally, the market clears by the KKT conditions for \eqref{def:log-NSW}. For the approximate competitive equilibrium, we only need to show  $u_i(\my_i) \geq \frac{1}{2} \max_{\pr{\mq}{\mx_i} \leq b_i} u_i(\mx_i) $ for all agents.

    Note that $(\my_i)_{i \in \agents}$ is the allocation that maximizes Nash welfare, where each $\my_i$ maximizes $b_i \log u_i(\my_i) -\pr{\mq}{\my_i}$. By Lemma~\ref{lem::gale-kkt}, there is a supergradient $\mg_i$ such that $g_{ij} \leq q_j u_i(\my_i) / b_i$ with equality whenever $y_{ij} > 0$. Then, for any $\mx_i\in\Rp^\goods$ such that $\pr{\mq}{\mx_i} \leq b_i$, 
    \begin{align*}u_i(\mx_i) &\leq u_i(\my_i) + \pr{\mg_i}{\mx_i - \my_i}  &\text{(convexity of $u_i$)} \\
    &\leq u_i(\my_i) + \frac{u_i(\my_i)}{b_i} \pr{\mq}{\mx_i - \my_i}  &\text{($g_{ij} \leq q_j u_i(\my_i) / b_i$ with equality whenever $y_{ij} > 0$)} \\
    &\leq \left(2 - \frac{\pr{\mq}{\my_i}}{b_i}\right) u_i(\my_i)  &\text{($\pr{\mq}{\mx_i} \leq b_i$)}\\
    &\leq 2 u_i(\my_i)\, . &
    \end{align*}
    This proves that $((\my_i)_{i\in \agents}, \mq)$ forms a $2$-approximate competitive equilibrium.
    Additionally, the $2$-envy-freeness of the allocation follows directly from Lemma~\ref{lem::approx-envy-free}.  
\end{proof}

One may wonder if the $2$-demand-approximate CE can be improved. The following example demonstrates that this bound of $2$ is tight.
\begin{example}[Tight example on demand-approximate CE]
Consider the case with two agents and a single good. Both agents have unit budgets. The first agent's utility for this good is linear:
\[
u_1(x_1) = x_1.
\]
The second agent's utility is defined as:
\[
u_2(x_2) = 
\begin{cases}
\alpha x_2 & \text{for } x_2 \leq \frac{1}{\alpha}, \\
1 + \beta \left(x_2 - \frac{1}{\alpha}\right) & \text{for } x_2 > \frac{1}{\alpha},
\end{cases}
\]
where $\alpha$ is a large number and $\beta \leq \alpha$.

When $\beta = \frac{1 - \frac{1}{\alpha}}{1 - \frac{2}{\alpha}}$, with the optimal dual variable (price) $q$ equal to $q = \frac{1}{x_1} = \frac{1}{1 - \frac{1}{\alpha}}$, the Nash welfare maximizing allocation is 
\[
x_1 = 1 - \frac{1}{\alpha}, \quad x_2 = \frac{1}{\alpha}.
\]

Under this price, agent 2 can afford $x_2' = 1 - \frac{1}{\alpha}$ with the budget, resulting in a utility of
\[
u_2(x_2') = 1 + \beta \left( x_2' - \frac{1}{\alpha} \right) = 2 - \frac{1}{\alpha}.
\]

In contrast, in the Nash welfare maximizing allocation, agent 2 only receives a utility of $1$.
\end{example}

Although Theorem~\ref{thm:4} guarantees that the Nash welfare of any competitive equilibrium is at least a constant fraction of the optimum, the following example illustrates that agents can receive significantly different utilities across different equilibria.
\begin{example}[Utility discrepancy across competitive equilibria]\label{eg:2}
    Consider a market with $n$ agents and $n$ goods, where each agent has a unit budget. For $1 \leq i \leq n - 1$, agent $i$'s utility is given by
    \[
        u_i(\mx_i) = \min\{x_{ii}, 1\} + \varepsilon x_{in},
    \]
    for a small $\varepsilon > 0$, and agent $n$'s utility is
    \[
        u_n(\mx_n) = x_{nn}.
    \]
    There exist at least two distinct competitive equilibria:
    
    \begin{itemize}
        \item In the first equilibrium, all goods are priced at $1$, and each good $i$ is allocated entirely to agent $i$ for $1 \leq i \leq n$. Agent $n$ thus attains a utility of $1$. This equilibrium is also a Nash welfare maximizing allocation.
        
        \item In the second equilibrium, the first $n - 1$ goods are priced at $0$, while the $n$-th good is priced at $n$. Goods $1$ to $n - 1$ are allocated to agents $1$ through $n - 1$ respectively, and the $n$-th good is shared equally among all $n$ agents. As a result, agent $n$ obtains utility equal to $\frac{1}{n}$. 
    \end{itemize}
\end{example}

\section{Agent-wise utility guarantee}\label{sec:agentwise_utility_guarantee}
We now turn to the proof of the main result of this paper, Theorem~\ref{thm::main-result}, asserting that 
 for Gale-substitutes utility functions, each agent's utility in the Nash welfare maximizing allocation is at least half of the maximum utility they could achieve in any competitive equilibrium.
The proof relies on two key lemmas derived using Lagrangian duality and supergradient arguments. The first 
is a simple observation, showing that the Gale demand solution for lower prices yields at least half the utility of the utility maximizing bundle at the current prices.
\begin{restatable}{lemma}{galedemandapprox}\label{lem::gale-demand-approx-2}
Consider a  utility function $u\,:\, \Rp^\goods\to \Rp$, prices
$\mp,\mq \in \FQ{u}$  such that $\mathbf{q} \leq \mathbf{p}$. Then, for any $\my\in \galedemand{}^u(\mq,b)$ and $\mx\in \demand{}^u(\mathbf{p},b)$, it holds that $u(\my) \geq \frac{1}{2} u(\mx)$.
\end{restatable}
\begin{proof}
By Lemma~\ref{lem::gale-kkt},  there exists a supergradient $\mg \in \supg u(\my)$ such that $ g_j \leq q_j u(\my)/b$ with equality whenever $y_j>0$. Therefore,
    \begin{align*}
        u(\mx) &\leq  u(\my) + \pr{\mg}{\mx-\my} \leq  u(\my) + \frac{u(\my)}{b}\pr{\mq}{\mx-\my} \\
        &\le u(\my)\left(1+\frac{\pr{\mq}{\mx}}{b}\right)
        \le u(\my)\left(1+\frac{\pr{\mp}{\mx}}{b}\right)\le 2u(\my)\, . \tag*{\qedhere}
    \end{align*} 
\end{proof}

The second lemma establishes that for every allocation of goods in the standard demand correspondence, there exists a smaller or equal allocation of goods in the Gale demand correspondence. While Lemma~\ref{lem::gale-demand-approx-2} holds for any utility functions, this result critically depends on the assumption that the utility function satisfies the $\Sigma$-Gale-substitutes property. The proof is given in Appendix~\ref{sec:main-proof}, transforming the KKT-conditions of the demand correspondence $\demand{}^u(\mp,b)$ to the KKT-conditions of 
$\galedemand{}^u (\mathbf{p}, b)$  (Lemma~\ref{lem::gale-kkt}), by making use of property \eqref{prop:satiate}.

\begin{restatable}{lemma}{galedemands}\label{lem:gale-demands}
    Let  $u(\cdot)$ be a $\Sigma$-Gale-substitutes utility and let $\mx\in\demand{}^u(\mp,b)$ for $\mp\in\Rp^\goods$ and $b>0$. Then, $\mp\in\FQ{u}$, and there exists $\my\in \galedemand{}^u (\mathbf{p}, b)$ such that $ \my \leq \mx$.
\end{restatable}

We now restate Theorem~\ref{thm::main-result}, and sketch the proof. The full proof is given in Appendix~\ref{sec:main-proof}.
\mainresult*

\begin{proof}[Proof sketch]
We start with a competitive equilibrium $((\mx_i)_{i\in \agents},\mp)$, and show that the dual of the Eisenberg--Gale program \eqref{eq:Gale-dual}, $\EGdual(\mathbf{q})$, has an optimal solution $\mq^\star$ such that $\mq^\star\le \mp$. The statement then follows from Lemma~\ref{lem::gale-demand-approx-2}.

An intuitive argument is as follows: starting from $\mq=\mp$, where the total Gale demand is less than the supply (by Lemma~\ref{lem:gale-demands}), we decrease the prices in a t\^atonnement-like procedure~\cite{cheung2020tatonnement} using the Gale-substitutes property.
As long as $\mq$ is not a minimizer of $\EGdual(.)$, there will be an underdemanded good $j$ with respect to Gale demands. Reducing $q_j$ for such a good $j$ may only decrease the Gale demand for all agents on all other goods. This iterative process converges to an equilibrium point and, since the prices are only decreasing, we can conclude that the optimal limit price vector satisfies $\mq^\star \leq \mathbf{p}$.

Rather than analysing this continuous procedure, the proof in Appendix~\ref{sec:main-proof} establishes $\mq^\star \leq \mathbf{p}$ directly, using Lagrangian duality. For any price vector $\mq$, we get a subgradient in $\subg\varphi(\mathbf{q})$ from the difference between the supply and the total Gale demand.  Further, for $\mq \leq \mathbf{p}$ we get that if $q_j = p_j$ then the $j$-th component of this subgradient at $\mq$ is non-negative. This is shown in Lemma~\ref{lem::bounds-aggregate}, by making use of an allocation $\my$ derived using Lemma~\ref{lem:gale-demands} for the prices $\mq$. Thus, in the region $\{\mq ~|~ \0 \leq \mq \leq \mathbf{p} \}$, the subgradients on the boundary all point outward. Consequently, if $\mq^\star\not\leq \mp$, then we can take the projection $\mq$ of $\mq^\star$ to the boundary of the region, where this subgradient shows that $\mq$ is also optimal.
\end{proof}

The following example shows the bound in Theorem~\ref{thm::main-result} is tight.
\begin{example}
Consider a market with $n$ agents and $n$ goods, where all agents have bounded linear utilities. Agent 1's utility is
\[u_1(\mx_1) = \sum_{j} \min\{1, x_{1j}\},\]
while, for all other agents ($2 \leq i \leq n$), they all have identical preferences on all goods except for the first one
\[u_i(\mx_i) = \sum_{j > 1} \min\{1, x_{ij}\}.\]
In the Nash welfare maximizing allocation, each agent receives a bundle yielding utility exactly equal to $1$.
However, consider a competitive equilibrium in which the price of the first good is $0$, and the prices of all other goods are set to $\frac{n}{n - 1}$. In this equilibrium, agent $1$ obtains the entire first good and a $\frac{n - 1}{n}$ fraction of the other goods, leading to a total utility of
\[
    u_1 = 1 + \frac{n - 1}{n},
\]
which approaches $2$ as $n \to \infty$.
\end{example}

\begin{remark}
The assumption that the utilities are   $\Sigma$-Gale-substitutes is important in Theorem~\ref{thm::main-result}.  Example \ref{example:non-gale} demonstrates that for non-Gale substitute utilities, some agents may achieve significantly higher utility in a competitive equilibrium than in the Nash welfare maximizing allocation.
\end{remark}

\begin{remark}
Theorem~\ref{thm::main-result} naturally raises a question about the converse: for any agent $i$, does there always exist a CE $(\mx_i)_{i\in \agents}$ such that $u_i(\mx_i) \geq C \cdot u_i(\my_i)$ for some constant $C > 0$, where $(\my_i)_{i\in \agents}$ is a Nash welfare maximizing allocation? We show that such a constant-factor guarantee is not always possible by presenting an example (see Example~\ref{example::E4}) in which the CE is unique, yet there exists an agent whose utility under this equilibrium is significantly lower than in the Nash welfare maximizing allocation.
\end{remark}

\section{Gale substitute utilities}\label{sec:gale_substitute_utilities}
First, we show that separable utilities are $\Sigma$-Gale-substitutes.
\sepgale*
For this, we first adapt Lemma~\ref{lem::gale-kkt} to the setting of separable utilities.
\begin{lemma}\label{lem::kkt-separable}
Let  $u(\mx)=\sum_{j\in\goods} v_j(x_j)$ be a separable utility function.
For any $b > 0$, $\mathbf{q} \in \FQ{u}$, $\my \in \galedemand{}^u (\mq,b)$, we have $u(\my) > 0$. 
Further,  $\my \in \galedemand{}^u (\mathbf{q}, b)$ if and only if there exist  supergradients $g_{j} \in \supg v_{j} (y_{j})$, $j\in\goods$ satisfying
  \begin{enumerate}[(i)]
        \item $g_{j} \geq 0$; and 
        \item $g_{j}\leq q_j u(\my)/b$ with equality whenever $y_{j} > 0$.
    \end{enumerate}
\end{lemma}
We now show that strictly concave and differentiable utility functions satisfy Definition \ref{def:gale-subs} \eqref{def:gale-subs-1} : for any  $b > 0$, any $\mathbf{q},\mq' \in \FQ{u}$, and any $\my \in \galedemand{}^u (\mq,b)$, if  $\mathbf{q} \geq \mathbf{q}'$, then there exists a $\my' \in \galedemand{}^u (\mathbf{q}',b)$ such that $y'_{j} \leq y_{j}$ whenever $q_j' = q_j$.
\begin{proof}[Proof of Theorem~\ref{thm:sep-gale} for strictly concave and differentiable utilities] 
  Let $u(\my)=\sum_{j\in\goods} v_j(y_j)$ be a separable and differentiable concave utility function, and let $\nabla_{\!j} u(\my)=\partial v_j(y_j)/\partial y_j$ denote the $j$-th component of the gradient. 
  Recall that for differentiable utilities, the only supergradient of $v_j(y_j)$ is $\nabla_{\!j} u(\my)$. By 
  Lemma~\ref{lem::kkt-separable}, 
 $\nabla_{\!j} u(\my)\le q_j u(\my)/b$, with equality whenever $y_j>0$. Let $\my\in \galedemand{}^u(\mq,b)$ and $\my'\in \galedemand{}^u(\mq',b)$ for $\mq'\le \mq$. 

We first claim that $u(\my')\ge u(\my)$. For a contradiction, assume $u(\my')< u(\my)$.
 From the above, whenever $y_j>0$, we have  
\[
  \nabla_{\!j} u(\my)=q_j u(\my)/b\ge q'_j u(\my')/b\ge \nabla_{\!j} u(\my')\, .   
\]
 By strict concavity, this means $y'_j\ge y_j$. Thus, we see that $\my'\ge \my$, a contradiction to the monotonicity of $u(.)$ and the assumption $u(\my')< u(\my)$.

From $u(\my')\ge u(\my)$,  for every good $j$ with $q_j=q'_j$ and $y_j>0$,  $\nabla_{\!j} u(\my')\ge \nabla_{\!j} u(\my)$ follows. By strict concavity, we must have $y'_j\le y_j$ as required. 
\end{proof}
The proof without the differentiability and strict concavity assumptions poses additional difficulties: for example, we would not be able to conclude $y'_j\le y_j$ in the last step, and might need a specific choice of $y'_j$ from  $\galedemand{}^u(\mq',b)$. The proof of the general case is given in Appendix~\ref{sec:separable}.

\medskip

Our final theorem is the Gale substitutability of generalized network utility functions. This is deferred to Appendix~\ref{sec:network}, and we only present a high level summary here.
\networkgale*

To verify the Gale substitutes property, we need to take two price vectors $\mq\ge \mq'$ and $\my\in \galedemand{}^u (\mathbf{q}, b)$, and construct a $\my'\in \galedemand{}^u (\mathbf{q'}, b)$ such that $y'_j\le y_j$ whenever $q'_j=q_j$; we have a second, analogous construction for budget changes. The proof is a finite algorithm that transforms the optimal generalized flow for demands $\my$ to $\my'$ such that we throughout maintain the primal-dual slackness condition
corresponding to the Gale demand correspondence. The algorithm builds on the literature on maximum generalized flow algorithms. In particular, we use ideas from the classical highest gain augmenting path algorithm by Onaga~\cite{Onaga66}. While this does not give a polynomial-time algorithm, it was the starting point of further such algorithms, including \cite{Goldberg1991,olver2020simpler,Vegh11}. 

Although generalized network utilities are not $\Sigma$-Gale-substitutes in general (see Example~\ref{exp:leontief-free-satiated}), we can still establish the following theorem with a slight modification to the proof of Theorem~\ref{thm::main-result}; the proof is also given in Appendix~\ref{sec:network}.

 \networkgalebound*

\section{Discussion}\label{sec:discussion}
In this paper, we compared competitive equilibrium and Nash welfare maximizing solutions in the Fisher market setting. We identified Gale-substitutes as a broad family of utility functions where the Nash welfare uniformly guarantees every agent at least half of their maximum utility at any competitive equilibrium. We also introduced generalized network utilities. This is a very general construction with  rich combinatorial structure and may be a relevant example in various market and fair division settings.

Gale-substitutes may form a compelling subclass of concave functions worth investigating on its own. In the context of indivisible goods, the analogous class of gross substitute utilities, introduced by Kelso and Crawford~(\citeyear{kelso1982job}), turned out to be the same as the important class of $M^{\natural}$-concave utilities in discrete convex analysis \cite{fujishige2003note}. 
In particular, one may wonder whether the Gale-substitutes property implies \emph{submodularity}, that is, $u(\mx)+u(\my)\ge u(\mx\vee \my)+u(\mx\wedge \my)$ for any $\mx,\my\in\Rp^\goods$, where $\mx\vee \my$ and $\mx\wedge \my$ denote the pointwise maximum and minimum of the two vectors, respectively. This property trivially holds for separable utilities and can also be shown for generalized network utilities. Another interesting question is whether computing competitive equilibrium in generalized network utilities lies in PPAD.

We note that one can construct an example of a concave quadratic function that is submodular but not Gale-substitutes. On the other hand, it is unclear whether Theorem~\ref{thm::satiate-gale} would remain true for submodular utilities instead of Gale-substitutes.  One would also be interested in whether Theorem~\ref{thm::satiate-gale} holds for matching utilities, which, like submodular utilities, are not necessarily Gale-substitutes (see Example~\ref{exp:matching}). Another natural question related to the well-studied notion of \emph{weak gross substitutes (WGS)} utilities: Gale-substitutes includes utility functions that are not WGS, such as concave separable; but it is not clear whether all WGS utilities are also Gale-substitutes. 

Gale-substitutes utilities may also have potential algorithmic applications. The WGS property leads to natural t\^atonnement and auction mechanisms for finding competitive equilibria, see, e.g., \cite{ArrowBH,ArrowH60, bei2019ascending,codenotti2005market,codenotti2005polynomial, garg2023auction}. In particular, \cite{garg2023auction} gives an auction algorithm for maximizing Nash welfare under \emph{capped} SPLC utilities. Similar methods could be extended to the broader class of Gale substitutes utilities. The Gale-substitutes property may be useful to design simple such mechanisms and combinatorial algorithms for solving the convex program \eqref{def:log-NSW}. For the linear case, Devanur et al.~(\citeyear{DevanurPSV08}) gave a primal-dual algorithm that was followed by a series of works on combinatorial algorithms for various market settings; see, e.g.,~\cite{ChaudhuryGMM22,ColeG15,DuanM15,GargV19}. 

\appendix
\section{Properties of the Gale demand correspondence}\label{sec:galedemand}
In this Appendix, we prove Lemma~\ref{lem-galedbasic}. We also derive some technical properties of the price domain $\FQ{u}$ that will be used in the subsequent Appendices.

\galedbasic*
\begin{proof}
Let $M \defeq \sup_{\my\in\Rp^\goods} b \log u(\my)-\pr{\mp}{\my}$, 
which implies the existence of a sequence of allocations $\{\my^{(k)}\}_{k = 1}^\infty$ such that $b \log u(\my^{(k)})-\pr{\mp}{\my^{(k)}}$ converges to $M$. We note that the finiteness of $M$ follows from Lemma~\ref{lem::budget-spending-NSW}, which is explicitly established in Lemma~\ref{lem::gale-domain-property}(i).
We show the existence of an allocation $\my^*$ satisfying $b \log u(\my^*)-\pr{\mp}{\my^*} = M$. 

According to Lemma~\ref{lem::budget-spending-NSW}, there exists a sequence of $\bar{\my}^{(k)}$ such that $b \log u(\bar{\my}^{(k)})-\pr{\mp}{\bar{\my}^{(k)}} $ converges to $M$, with $0 \leq \pr{\mp}{\bar{\my}^{(k)}} \leq b$ for all $k$. Thus, we can select a subsequence $\bar{\my}^{(i_k)}$ such that $b \log u(\bar{\my}^{(i_k)})-\pr{\mp}{\bar{\my}^{(i_k)}}$ converges to $M$ and $ \pr{\mp}{\bar{\my}^{(i_k)}}$ converges to $b'$ for some $b' \in [0, b]$. Consequently, $b \log u(\bar{\my}^{(i_k)})$ converges to $M + b'$ as $ \pr{\mp}{\bar{\my}^{(i_k)}}$ converges to $b'$. By Assumption~\ref{assum:basic-demand-system}, there exists an allocation $\my^*$ such that $b \log u(\my^*) = M + b'$ and $\pr{\mp}{\my^*} \leq b'$. This completes the proof.
\end{proof}

\begin{restatable}{lemma}{galedomainproperty}\label{lem::gale-domain-property}
For any utility function $u\,:\, \Rp^\goods\to \Rp$ and $b > 0$, the following properties of the price domain $\FQ{u}$ hold:
\begin{enumerate}[(i)]
\setlength\itemsep{0em}
\item for any $\mathbf{q} \in \FQ{u}$, $\max_{\my\in\Rp^\goods} b \log u(\my) - \pr{\mathbf{q}}{\my} < +\infty$; \label{lem::arb-budget} 
\item $\{ \mathbf{q} ~|~ \mathbf{q} > 0 \} \subseteq \FQ{u}$; \label{lem::gale-demand-positive-q}
\item for any $\mathbf{q} \geq \mathbf{q}' \geq 0$ such that $\{j : q'_j = 0\} = \{j : q_j = 0\}$, $\mq \in \FQ{u}$ implies $\mq' \in \FQ{u}$;  \label{lem::feasi-q-zero}
\item for any $\mathbf{p} \geq 0$, if $\max_{\pr{\mp}{\my} \leq b} \log u(\my) < +\infty$, then $\mathbf{p} \in \FQ{u}$.  \label{lem::demand-gale-finite}
\item for any $M>0$, if there is a convergent sequence of prices $\mp^{(k)}\in\FQ{u}$ with  $M > \max_{\my\in\Rp^\goods} b \log u(\my)-\pr{\mp^{(k)}}{\my}$, $k=1,2,\ldots$ with a limit point
 $\mp^{(k)}\to \bar \mp$, then $\bar\mp\in \FQ{u}$.\label{part:limit-point}
\end{enumerate}
\end{restatable}
\begin{proof}
    We first prove property \eqref{lem::arb-budget}. According to the definition of $\FQ{u}$, there exists $b'>0$ such that $\max_{\my\in\Rp^\goods} b' \log u(\my) - \pr{\mathbf{q}}{\my}  < +\infty$. 
    For any $b>0$ and for any $\my\in\Rp^\goods$, if $\pr{\mathbf{q}}{\my} \leq b$, then  $b \log u(\my) - \pr{\mathbf{q}}{\my} <  \frac{b}{b'} \left( b' \log u(\my) -  \pr{\mathbf{q}}{\my}  \right) + \frac{b - b'}{b'} \ \pr{\mathbf{q}}{\my} $, which is bounded. Then, the result follows using Lemma~\ref{lem::budget-spending-NSW}.

    Property \eqref{lem::gale-demand-positive-q} holds as, by Lemma~\ref{lem::budget-spending-NSW}, $\max_{\my\in\Rp^\goods} b \log u(\my) - \pr{\mq}{\my} \leq \max_{\pr{\mq}{\mx} \leq b}  b \log u(\mx) -\pr{\mq}{\mx}$, which is bounded as $\mx$ is bounded: $x_j \leq \frac{b}{q_j} < \infty$.

    To prove property \eqref{lem::feasi-q-zero}, we let $\alpha = \min_{q_j > 0} \{ q'_j  / q_j \}$. Then,  $\max_{\my\in\Rp^\goods} (\alpha b) \log u(\my) -  \pr{\mq'}{\my}\leq \max_{\my\in\Rp^\goods} (\alpha b) \log u(\my) -  \alpha \pr{\mq}{\my}< + \infty$. 
    
    Property \eqref{lem::demand-gale-finite} holds as,  for any $\my$ such that $\pr{\mp}{\my} \leq b$, $b \log u(\my) - \pr{\mp}{\my} \leq b \log u(\my) < +\infty$. 

Finally, let us prove \eqref{part:limit-point}. For a contradiction, assume $\bar\mp\notin\FQ{u}$, that is, $\sup_{\my\in\Rp^\goods} b \log u(\my)-\pr{\bar\mp}{\my}=\infty$. Consider a vector $\my\in\Rp^\goods$ with $b \log u(\my)-\pr{\bar\mp}{\my}>2M$. Clearly, $\my\neq 0$. Let us select $k$ such that $\|\mp^{(k)}-\bar\mp\|<M/\|\my\|$. Then, $b \log u(\my)-\pr{\mp^{(k)}}{\my}= b \log u(\my)-\pr{\bar\mp}{\my}+\pr{\bar\mp-\mp^{(k)}}{\my}>M$, in contradiction with the assumption on the sequence $\mp^{(k)}$.
\end{proof}

\section{Efficiency of Nash welfare for Gale-substitutes}\label{sec:main-proof}
In this Appendix, we give the full proof of Theorem~\ref{thm::main-result}.
We show that for $\Sigma$-Gale substitutes utility functions, any competitive equilibrium $((\mx_i)_{i\in \agents},\mp)$ admits Gale prices $\mq^\star$ such that $\mq^\star\le \mp$. The theorem statement then follows from Lemma~\ref{lem::gale-demand-approx-2}. 
We start by re-stating and proving Lemma~\ref{lem:gale-demands}.

\galedemands*
\begin{proof}
    According to the KKT conditions, there exists $\lambda \geq 0$ and supergradient $\mg \in \supg u(\mx)$ such that
    \begin{equation}\label{eq:kkt-gale} g_{j} \leq \lambda p_j\, , \quad\forall j\in \goods\,,\,\mbox{with equality  if }x_{j} > 0\, ,\end{equation}
and $\pr{\mp}{\mx}=b$  if $\lambda>0$.

  Let us first consider the case $\lambda=0$. Then, $g_j\le 0$ with equality if $x_j>0$, implying that $\mx=\arg\max_{\mx\in\Rp^\goods} u(\mx)$. Consequently, $u(.)$ is satiable; therefore, $\mathbf{0} \in \FQ{u}$ and $\mx\in \galedemand{}^u (\0, b)$. Now, \eqref{prop:satiate} guarantees the existence of $\my\in \galedemand{}^u (\mathbf{p}, b)$, $\my\leq \mx$. Clearly, $\FQ{u}=\Rp^\goods$ in this case.

Next, assume $\lambda>0$.
Let us define 
\[
\mq\defeq \frac{\lambda b}{u(\mx)}\cdot \mp\, .
\]
Note that $\mq\le\mp$. This follows since $\lambda b = \lambda \pr{\mp}{\mx}=\pr{\mg}{\mx}\le u(\mx)$, where the final inequality is by concavity and $u(\0)=0$.
Thus,
    $$ g_{j} \leq \frac{u(\mx)}{b}q_j\, , \quad\forall j\in \goods\,,\,\mbox{with equality  if }x_{j} > 0\, .$$
    According to Lemma~\ref{lem::gale-kkt},  $\mx \in \galedemand{}^u (\mathbf{q}, b)$.

   Note that $\galedemand{}^u (\mathbf{q}, b) = \galedemand{}^u (\mp, u(\mx) / \lambda)$.  Utilizing Definition~\ref{def:gale-subs}\eqref{def:gale-subs-2} for $\mx \in \galedemand{}^u (\mp, u(\mx) / \lambda)$ and $b\le u(\mx)/\lambda$, there exists $\my \in \galedemand{}^u (\mathbf{p}, b)$ such that $\my \leq \mx$ as required.
\end{proof}
We next show that 
for any competitive equilibrium $((\mx_i)_{i\in \agents}, \mathbf{p})$,
and any prices $\mq\le\mp$, there exist an aggregate Gale demand such that 
$\sum_{i\in\agents} y_{ij}\le 1$ whenever $q_j=p_j$.
\begin{lemma} \label{lem::bounds-aggregate}
Consider a Fisher market instance such that 
$u_i(\cdot)$ is $\Sigma$-Gale-substitute utilities  for all agents. For any competitive equilibrium $((\mx_i)_{i\in \agents}, \mathbf{p})$, for any $\mathbf{q}  \in \cap_{i\in\agents} \FQ{u_i}$ such that $\mathbf{q} \leq \mathbf{p}$, there exists $\my_i \in \galedemand{i} (\mathbf{q}, b_i)$ for all $i\in\agents$ and $\sum_{i\in \agents} y_{ij} \leq 1$ for all $j\in \goods$ such that $q_j = p_j$.
\end{lemma}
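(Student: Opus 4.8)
The plan is to reduce this aggregate statement to a per-agent argument: we never need to reason about aggregate Gale demand as a monotone object, because the monotonicity we need happens agent by agent, and summing up over $\agents$ at the very end costs nothing. Concretely, I would chain together three ingredients: feasibility of the competitive-equilibrium allocation, Lemma~\ref{lem:gale-demands} to pass from the demand bundle at prices $\mp$ to a dominated Gale-demand bundle at prices $\mp$, and the Gale-substitutes property to push the prices down from $\mp$ to $\mq$ while controlling the coordinates that do not move.

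In detail: fix the competitive equilibrium $(\mx,\mp)$. By definition each $x_i\in\demand{i}(\mp,b_i)$, and market clearing gives $\sum_{i\in\agents} x_{ij}\le 1$ for every good $j$. Applying Lemma~\ref{lem:gale-demands} to each agent $i$ (this is the step that uses the $\Sigma$-Gale-substitutes hypothesis) yields $\mp\in\FQ{u_i}$ and a bundle $\hat y_i\in\galedemand{i}(\mp,b_i)$ with $\hat y_i\le x_i$. Now I would invoke Definition~\ref{def:gale-subs}\eqref{def:gale-subs-1} for agent $i$ with the ``higher'' price vector $\mp$ and the ``lower'' price vector $\mq$: both lie in $\FQ{u_i}$ ($\mp$ by the previous step, $\mq$ by hypothesis), and $\hat y_i\in\galedemand{i}(\mp,b_i)$, so the property produces $y_i\in\galedemand{i}(\mq,b_i)$ with $y_{ij}\le \hat y_{ij}$ for every good $j$ with $q_j=p_j$. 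Finally, for any such $j$, summing over agents gives $\sum_{i\in\agents} y_{ij}\le \sum_{i\in\agents}\hat y_{ij}\le\sum_{i\in\agents} x_{ij}\le 1$, which is precisely the claim; on goods with $q_j<p_j$ we assert nothing and need nothing, so the partial comparison delivered by the Gale-substitutes property is exactly enough.

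I expect the proof to be short, and the only delicate points are bookkeeping ones: checking that the domain hypotheses of Definition~\ref{def:gale-subs}\eqref{def:gale-subs-1} are met so that we may legitimately move prices from $\mp$ down to $\mq$, and tracking that the inequality $y_{ij}\le \hat y_{ij}$ is guaranteed only on the coordinates where the price is unchanged --- which, conveniently, is exactly the index set $\{j:q_j=p_j\}$ appearing in the statement. Note also that the budget $b_i$ never changes in this argument, so we use only part \eqref{def:gale-subs-1} of the definition and not the budget-monotonicity part \eqref{def:gale-subs-2}; the role of $\Sigma$ is confined to Lemma~\ref{lem:gale-demands}, where it handles satiable utilities whose competitive-equilibrium bundle $x_i$ may fail to exhaust the budget.
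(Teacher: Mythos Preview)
Your proposal is correct and follows essentially the same approach as the paper: apply Lemma~\ref{lem:gale-demands} per agent to get $\hat y_i\le x_i$ in $\galedemand{i}(\mp,b_i)$, then use Definition~\ref{def:gale-subs}\eqref{def:gale-subs-1} to pass from $\mp$ down to $\mq$, and sum. Your added remarks on the domain checks and on which parts of the definition are actually invoked are accurate and slightly more explicit than the paper's version.
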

\begin{proof}
By Lemma~\ref{lem:gale-demands}, there exists $\bar \my_i\in \galedemand{i} (\mathbf{p}, b_i)$ for every $i\in\agents$ such that $\bar \my_i\le \mx_i$. Using Definition~\ref{def:gale-subs}\eqref{def:gale-subs-1}, there exists  $\my_i\in \galedemand{i} (\mathbf{q}, b_i)$ and $y_{ij}\leq \bar y_{ij}$ if $q_j=p_j$. Thus, if $q_j=p_j$ then $\sum_{i\in \agents} y_{ij} \leq \sum_{i\in \agents} \bar y_{ij}\le \sum_{i\in \agents} x_{ij}\le 1$.
\end{proof}
Recall the Lagrangian dual function $\EGdual$ defined in \eqref{eq:Gale-dual-function} on the domain $\dom(\EGdual)=\cap_{i\in\agents} \FQ{u_i}$.
\begin{align*}
  \EGdual (\mq) \defeq \sum_{i\in\agents} \max_{\my_i\in\Rp^{\goods}}\left(b_i \log u_i(\my_i) - \pr{\mq}{\my_i}\right) + \sum_{j\in \goods} q_j.
\end{align*}
The properties stated in the next lemma are easy to verify.
\begin{lemma}\label{fact::subgrad-NSW}The following hold for the function $\EGdual (\cdot)$.
\begin{enumerate}[(i)]
\setlength\itemsep{0em}
    \item $\EGdual (\mathbf{q})$ is convex.
    \item Given $\mq\in \dom(\EGdual)$, let
    $\my_i \in \galedemand{i}(\mathbf{q}, b_i)$ for all $i\in \agents$. Then, $\mathbf{1} - \sum_{i\in\agents} \my_i \in \subg \EGdual (\mathbf{q})$.
    \item The optimum value of \eqref{eq:Gale-dual}, that is, the infimum of  $\EGdual (\mathbf{q})$ is taken in $\dom(\EGdual)$. 
\end{enumerate}
\end{lemma}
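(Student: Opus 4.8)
The plan is to treat (i) and (ii) as routine consequences of the definition of $\EGdual$, and to concentrate the effort on (iii), the attainment of the minimum. For (i): for each $i\in\agents$ and each fixed bundle $z\in\Rp^\goods$ the map $\mq\mapsto b_i\log u_i(z)-\pr{\mq}{z}$ is affine in $\mq$, so $\mq\mapsto\max_{z\in\Rp^\goods}\big(b_i\log u_i(z)-\pr{\mq}{z}\big)$ is a pointwise supremum of affine functions, hence convex (and, as I will use below, lower semicontinuous on $\Rp^\goods$ when extended by $+\infty$). Summing these terms over $i\in\agents$ and adding the linear term $\sum_{j\in\goods}q_j$ preserves convexity, giving (i). For (ii): fix $\mq\in\dom(\EGdual)$ and $y_i\in\galedemand{i}(\mq,b_i)$, which is nonempty by Lemma~\ref{lem-galedbasic}. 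For any $\mq'\in\dom(\EGdual)$ and each $i\in\agents$, optimality of $y_i$ at prices $\mq$ gives
\[
\max_{z\in\Rp^\goods}\big(b_i\log u_i(z)-\pr{\mq'}{z}\big)\ \ge\ b_i\log u_i(y_i)-\pr{\mq'}{y_i}\ =\ \Big(\max_{z\in\Rp^\goods}\big(b_i\log u_i(z)-\pr{\mq}{z}\big)\Big)-\pr{\mq'-\mq}{y_i}.
\]
Summing over $i\in\agents$ and adding $\sum_{j\in\goods}q'_j=\sum_{j\in\goods}q_j+\pr{\mathbf{1}}{\mq'-\mq}$ yields $\EGdual(\mq')\ge\EGdual(\mq)+\pr{\mathbf{1}-\sum_{i}y_i}{\mq'-\mq}$, which is exactly $\mathbf{1}-\sum_{i}y_i\in\subg\EGdual(\mq)$.

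For (iii), the first step is a coercivity bound. Writing $n\defeq|\agents|$, and using $u_i(\0)=0$, concavity, monotonicity, and the existence of a bundle with positive value, I would pick $\varepsilon\in(0,1)$ small enough that $u_i\big(\tfrac{\varepsilon}{n}\mathbf{1}\big)>0$ for every $i\in\agents$. Taking $y_i=\tfrac{\varepsilon}{n}\mathbf{1}$ in the definition of $\EGdual$ and noting $\sum_{i\in\agents}\tfrac{\varepsilon}{n}=\varepsilon<1$ for every good,
\[
\EGdual(\mq)\ \ge\ \sum_{i\in\agents}b_i\log u_i\big(\tfrac{\varepsilon}{n}\mathbf{1}\big)+\sum_{j\in\goods}q_j(1-\varepsilon)\ =\ C+(1-\varepsilon)\sum_{j\in\goods}q_j,
\]
where $C\defeq\sum_{i\in\agents}b_i\log u_i\big(\tfrac{\varepsilon}{n}\mathbf{1}\big)$ is a finite constant. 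Since $\dom(\EGdual)\supseteq\{\mq>\0\}$ is nonempty and $\EGdual$ is finite there, $\EGdual^\star\defeq\inf_{\mq\in\dom(\EGdual)}\EGdual(\mq)$ is finite; then, using $\mq\ge\0$ and the displayed bound, every minimizing sequence $\{\mq^{(k)}\}$ eventually lies in a bounded set.

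Passing to a convergent subsequence, $\mq^{(k)}\to\bar\mq\ge\0$. By lower semicontinuity of $\EGdual$ (a pointwise supremum of continuous functions, extended by $+\infty$ off $\dom(\EGdual)$), $\EGdual(\bar\mq)\le\liminf_k\EGdual(\mq^{(k)})=\EGdual^\star<\infty$, so $\bar\mq\in\dom(\EGdual)$ and $\EGdual(\bar\mq)=\EGdual^\star$. Equivalently, one observes that each term $\max_{z\in\Rp^\goods}\big(b_i\log u_i(z)-\pr{\mq^{(k)}}{z}\big)$ is uniformly bounded above along the sequence and invokes Lemma~\ref{lem::gale-domain-property}\eqref{part:limit-point} for each $u_i$ to get $\bar\mq\in\FQ{u_i}$, hence $\bar\mq\in\dom(\EGdual)$. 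The only real obstacle is this attainment step: one must prevent the minimizing sequence from escaping to infinity — handled by the coercivity bound, which crucially uses a \emph{strictly} feasible primal allocation so that the coefficient of $\sum_{j}q_j$ is strictly positive — or to the boundary of $\dom(\EGdual)$ where $\EGdual=+\infty$ — handled by lower semicontinuity of $\EGdual$ (equivalently, Lemma~\ref{lem::gale-domain-property}\eqref{part:limit-point}).
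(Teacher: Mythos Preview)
The paper does not prove this lemma: it simply states ``the following properties are easy to verify,'' and for part (iii) the only justification appearing earlier in the text is the one-line remark that ``strong duality holds and the dual optimum is taken according since Slater's condition holds.'' Your proof is correct and supplies exactly the details the paper omits. Parts (i) and (ii) are handled in the standard way, and your argument for (iii)---a coercivity bound from a strictly feasible primal point, followed by lower semicontinuity of $\EGdual$ (or equivalently Lemma~\ref{lem::gale-domain-property}\eqref{part:limit-point}) to pass to the limit---is precisely how one unpacks the Slater argument the paper alludes to.

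One small remark on phrasing: in your coercivity step you say you pick $\varepsilon\in(0,1)$ ``small enough'' that $u_i\big(\tfrac{\varepsilon}{n}\mathbf{1}\big)>0$. In fact any $\varepsilon>0$ works here: if $u_i(x)>0$ for some $x$, then by monotonicity $u_i(M\mathbf{1})>0$ for $M\ge\max_j x_j$, and then concavity with $u_i(\0)=0$ gives $u_i(\lambda M\mathbf{1})\ge\lambda\, u_i(M\mathbf{1})>0$ for all $\lambda>0$. The only genuine constraint is $\varepsilon<1$, so that the coefficient $1-\varepsilon$ in front of $\sum_j q_j$ is strictly positive. This does not affect the validity of your argument.
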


\begin{proof}[Proof of Theorem~\ref{thm::main-result}]
Let $((\mx_i)_{i\in \agents},\mp)$ be a competitive equilibrium solution. Consider the dual \eqref{eq:Gale-dual} of 
 the Eisenberg-Gale program \eqref{def:log-NSW}, i.e., 
\[
\min_{\mq\in\dom(\EGdual)} \EGdual(\mq)\, .
\]
We show that this has an optimal solution with $\mq\le \mp$. The theorem then follows by Lemma~\ref{lem::gale-demand-approx-2}. 
Let us take any optimal $\mq^\star\in\dom(\EGdual)$; we are done if $\mq^\star\le\mp$. Otherwise, let $J\defeq\{j\in \goods\, :\, q^\star_j>p_j\}$, and let us define $\bar\mq\in\Rp^\goods$ as
\[
\bar q_j\defeq\begin{cases} p_j\, ,&\mbox{if }j\in J\, ,\\
q^\star_j\, ,&\mbox{if }j\notin J\, .\\
\end{cases}
\]
First, assuming $\bar\mq\in\dom(\EGdual)$, we show that $\bar \mq$ is also optimal to \eqref{eq:Gale-dual} as claimed.

Let us pick an allocation $(\my_i)_{i\in \agents}$ for the prices $\bar\mq$ as guaranteed by Lemma~\ref{lem::bounds-aggregate}: $\sum_{i\in\agents} y_{ij}\le 1$ for all $j\in\goods$ with $\bar q_j=p_j$; in particular, for all $j\in J$. According to Lemma~\ref{fact::subgrad-NSW}(ii), we get a subgradient 
$\bar\mg\in\subg\EGdual(\bar\mq)$ defined as $\bar\mg\defeq\mathbf{1} - \sum_{i\in\agents} \my_i$; this has the property $\bar g_j\ge 0$ for all $j\in J$.
We then have
\[
\EGdual(\mq^\star)\ge \EGdual(\bar\mq)+\pr{\bar \mg}{\mq^\star-\bar\mq}=\EGdual(\bar\mq)+
\sum_{j\in J} \bar g_j (q^\star_j-\bar q_j)\ge \EGdual(\bar\mq)\, .
\]
Here, the first inequality is by the definition of subgradients; the equality uses that $q^\star_j=\bar q_j$ if $j\notin J$, and the final inequality follows since $\bar g_j\ge 0$ and $q^\star_j>\bar q_j$ for $j\in J$. This 
shows that $\bar\mq$ is also an optimal solution to \eqref{eq:Gale-dual},  completing the proof assuming $\bar\mq\in\dom(\EGdual)$.

It is left to show that $\bar\mq\in\dom(\EGdual)$. Let  $K\defeq\{j\in \goods\, :\, q^\star_j=0<p_j\}$. Clearly, $K\cap J=\emptyset$. If $K=\emptyset$, then Lemma~\ref{lem::gale-domain-property}\eqref{lem::feasi-q-zero} already implies $\bar\mq\in\FQ{u}$, since $\mp\ge \bar\mq$ and $\supp(\bar\mq)=\supp(\mp)$.  Otherwise, for $k=1,2,\ldots$, let us define
$\mq^{(k)}\in\Rp^\goods$ as
\[
q_j^{(k)}\defeq\begin{cases} p_j\, ,&\mbox{if }j\in J\, ,\\
\frac{p_j}{k}\, ,&\mbox{if }j\in K\, ,\\
q^\star_j\, , &\mbox{if }j\notin J\cup K\, .\\
\end{cases}
\]
We have $\mq^{(k)}\in \FQ{u}$ again by Lemma~\ref{lem::gale-domain-property}\eqref{lem::feasi-q-zero}. As in the first part, we take a subgradient $\mg^{(k)}$ as guaranteed by Lemma~\ref{fact::subgrad-NSW}(ii), with the properties that $g^{(k)}_j\ge 0$ for all $j\in J$ and also $\mg^{(k)}\le\mathbf{1}$.
We can write
\[
\EGdual(\mq^\star)\ge \EGdual(\mq^{(k)})+\pr{\mg^{(k)}}{\mq^\star-\mq^{(k)}}=\EGdual(\mq^{(k)})+
\sum_{j\in J} g^{(k)}_j (q^\star_j- q^{(k)}_j)-\sum_{j\in K} g^{(k)}_j q^{(k)}_j\ge \EGdual(\mq^{(k)})-\frac{\|\mp\|_\infty}{k} |K|\, .
\]
Here, we use in the first equality that $q^\star_j=q^{(k)}_j$ for $j\notin J\cup K$, and $q^\star_j=0$ for $j\in K$. The last inequality uses $g^{(k)}_j \le 1$
and the definition $q^{(k)}_j=p_j/k$. Thus,
 Lemma~\ref{lem::gale-domain-property}\eqref{part:limit-point} is applicable with $M=\EGdual(\mq^\star)+|K|\cdot\|\mp\|_\infty$, using that for each $j\in\goods$, $\max_{\my_i\in\Rp^\goods}b_i\log u_i(\my_i)-\pr{\mq^{(k)}}{\my_i}\le \EGdual(\mq^{(k)})$. This  shows that $\bar \mq\in\FQ{u_i}$ for all $i\in \goods$, and therefore $\bar\mq\in\dom(\EGdual)=\cap_{i\in\goods}\FQ{u_i}$.
\end{proof}

\section{Separable utilities are $\Sigma$-Gale-substitutes}\label{sec:separable}
In this Appendix, we prove Theorem~\ref{thm:sep-gale} for the general case.
Recall the following simple monotonicity property of supergradients of univariate concave functions.
\begin{lemma}\label{lem:univariate}
Let $v\,:\,\R\to\R$ be a univariate concave function, $\alpha,\alpha'\in\dom(v)$, and $g\in\supg v(\alpha)$, $g'\in\supg v(\alpha')$. Then, $\alpha<\alpha'$ implies $g\ge g'$. Conversely, if $g> g'$, then $\alpha\le \alpha'$.
\end{lemma}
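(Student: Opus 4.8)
This is the standard monotonicity property of supergradients, so the plan is simply to unfold the definition of supergradient and add two instances of the defining inequality. Recall that $g\in\supg v(\alpha)$ means $v(\alpha)+g(y-\alpha)\ge v(y)$ for every $y$ in the domain of $v$, and likewise $g'\in\supg v(\alpha')$ means $v(\alpha')+g'(y-\alpha')\ge v(y)$ for every such $y$.

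\textbf{Main step.} First I would evaluate the supergradient inequality for $g$ at the point $y=\alpha'$, obtaining $v(\alpha)+g(\alpha'-\alpha)\ge v(\alpha')$, and evaluate the supergradient inequality for $g'$ at the point $y=\alpha$, obtaining $v(\alpha')+g'(\alpha-\alpha')\ge v(\alpha)$. Adding these two inequalities and cancelling $v(\alpha)+v(\alpha')$ from both sides leaves $g(\alpha'-\alpha)+g'(\alpha-\alpha')\ge 0$, i.e. $(g-g')(\alpha'-\alpha)\ge 0$. If $\alpha<\alpha'$ then $\alpha'-\alpha>0$, so dividing gives $g\ge g'$, which is the first assertion.

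\textbf{Converse.} The second assertion follows by contraposition from the first: if we had $\alpha>\alpha'$, then applying the first part with the roles of the two points interchanged would give $g'\ge g$, contradicting $g>g'$; hence $\alpha\le\alpha'$. (One should note that the strictness matches up: $g=g'$ is compatible with $\alpha=\alpha'$, which is exactly why the first part is stated with a non-strict conclusion and the converse with a non-strict conclusion under a strict hypothesis.)

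\textbf{Expected difficulty.} I do not anticipate any real obstacle here; the only care needed is to keep the signs straight when dividing by $\alpha'-\alpha$ and to observe that the two displayed claims are essentially contrapositives of each other. The lemma will later be the per-coordinate engine for the separable case: for $u(y)=\sum_j v_j(y_j)$ one reads off, via the KKT characterisation of the Gale demand, that each coordinate's supergradient moves monotonically with the coordinate value, and this lemma converts that into the coordinatewise comparison required by the Gale-substitutes definition.
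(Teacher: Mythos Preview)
Your proof is correct and is exactly the standard argument. The paper does not actually prove this lemma; it simply states it as a well-known fact (prefaced by ``Recall''), so there is nothing further to compare.
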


The proof of Theorem~\ref{thm:sep-gale} follows by the next three lemmas, that prove the three properties in Definition~\ref{def:gale-subs} for separable utilities: property  \eqref{def:gale-subs-1} in Lemma~\ref{lem::sep-def-1}, property \eqref{def:gale-subs-2} in Lemma~\ref{lem::sep-def-2}, and property~\eqref{prop:satiate} in Lemma~\ref{lem::sep-def-3}. A similar argument was used in Lemma 7.1 in \cite{garg2023auction-arxiv} to show a variant of the Gale substitutes property for capped SPLC utilities.
\begin{lemma}\label{lem::sep-def-1}
     Let  $u(\mx)=\sum_{j\in\goods} v_j(x_j)$  be a separable utility function, and let $b > 0$, $\mq,\mq' \in \FQ{u}$   and  $\my \in \galedemand{}^u (\mathbf{q}, b)$. If  $\mathbf{q} \geq \mathbf{q}'$, then there exists a $\my' \in \galedemand{}^u (\mathbf{q}', b)$ such that $y'_{j} \leq y_{j}$ whenever $q_j' = q_j$.
\end{lemma}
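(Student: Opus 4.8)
The plan is to use the KKT characterization from Lemma~\ref{lem::kkt-separable} together with the univariate monotonicity from Lemma~\ref{lem:univariate}. First I would record the first-order data at $y$: there are supergradients $g_j\in\supg v_j(y_j)$ with $g_j\ge 0$ and $g_j\le q_j u(y)/b$, equality whenever $y_j>0$. The first key step is to show that the Gale utility value does not decrease when prices drop, i.e.\ $u(y')\ge u(y)$ for any $y'\in\galedemand{}^u(\mathbf{q}',b)$. This follows by a one-line comparison of objective values: since $\mathbf{q}'\le\mathbf{q}$, the candidate $y$ gives $b\log u(y)-\langle\mathbf{q}',y\rangle\ge b\log u(y)-\langle\mathbf{q},y\rangle=\max$ over $\galedemand{}^u(\mathbf{q},b)$, and using Lemma~\ref{lem::budget-spending-NSW} to bound $\langle\mathbf{q}',y'\rangle\le b$ and $\langle\mathbf{q},y\rangle\le b$ one deduces $b\log u(y')\ge b\log u(y')-\langle \mathbf{q}',y'\rangle+\langle\mathbf{q}',y'\rangle\ge \ldots$; the clean way is: $b\log u(y') - \langle \mathbf q', y'\rangle \ge b\log u(y) - \langle \mathbf q', y\rangle \ge b\log u(y) - \langle\mathbf q,y\rangle$, hence $b\log u(y') \ge b\log u(y) + \langle \mathbf q', y'\rangle - \langle \mathbf q, y\rangle \ge b\log u(y) - \langle\mathbf q,y\rangle + \langle\mathbf q',y'\rangle$; combining with $\langle\mathbf q,y\rangle\le b$ gives $u(y')\ge u(y)\cdot e^{(\langle\mathbf q',y'\rangle - b)/b}$ which is not quite $u(y')\ge u(y)$. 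I would instead argue directly from concavity/Lemma~\ref{lem::gale-demand-approx-2}-style scaling, or simply observe that decreasing prices weakly increases the optimal value of $b\log u - \langle\cdot,\cdot\rangle$ and that in fact $u$ at the optimum is monotone in this value; the honest route is the subgradient/supergradient comparison below, which sidesteps needing $u(y')\ge u(y)$ globally and only needs it coordinatewise where $q_j=q_j'$.

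The core argument: I want a specific $y'\in\galedemand{}^u(\mathbf{q}',b)$ with $y_j'\le y_j$ on the set $T\defeq\{j: q_j'=q_j\}$. Take \emph{any} $y'\in\galedemand{}^u(\mathbf{q}',b)$ with its KKT supergradients $g_j'\in\supg v_j(y_j')$, $g_j'\le q_j' u(y')/b$, equality if $y_j'>0$. Fix $j\in T$, so $q_j=q_j'$. If $u(y')\ge u(y)$, then on coordinate $j$: if $y_j'>0$ then $g_j' = q_j' u(y')/b \ge q_j u(y)/b \ge g_j$ (the last since $g_j\le q_j u(y)/b$), and $g_j'\ge g_j$ with $g_j\in\supg v_j(y_j)$, $g_j'\in\supg v_j(y_j')$ forces $y_j'\le y_j$ by Lemma~\ref{lem:univariate}; if $y_j'=0$ then trivially $y_j'\le y_j$. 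So the whole lemma reduces to establishing $u(y')\ge u(y)$, which I get from the value-monotonicity step. Let me make that step rigorous: let $F(\mathbf{q})\defeq\max_y b\log u(y)-\langle\mathbf{q},y\rangle$; it is convex and nonincreasing in $\mathbf{q}$. For $\mathbf{q}'\le\mathbf{q}$, $F(\mathbf{q}')\ge F(\mathbf{q})$. Now $F(\mathbf q) = b\log u(y) - \langle\mathbf q,y\rangle$ and $F(\mathbf q') = b\log u(y') - \langle\mathbf q',y'\rangle$. Using $\langle\mathbf q',y'\rangle\le b$, $\langle\mathbf q,y\rangle\le b$ (Lemma~\ref{lem::budget-spending-NSW}) is not enough alone; but combined with the scaling trick $y'' = y\cdot b/\langle\mathbf q',y\rangle$ when $\langle\mathbf q', y\rangle > b$ — actually $\langle\mathbf q',y\rangle\le\langle\mathbf q,y\rangle\le b$, so $y$ itself is $\mathbf q'$-feasible, giving $b\log u(y') - \langle\mathbf q',y'\rangle \ge b\log u(y)-\langle\mathbf q',y\rangle\ge b\log u(y) - \langle\mathbf q,y\rangle$, i.e.\ $b\log u(y')\ge b\log u(y) + \langle\mathbf q',y'\rangle - \langle\mathbf q,y\rangle + \langle\mathbf q,y\rangle - \langle\mathbf q,y\rangle$... the clean conclusion is $b\log u(y') - b\log u(y) \ge \langle\mathbf q',y'\rangle - \langle\mathbf q,y\rangle$, and I can't sign the right side directly.

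\textbf{The main obstacle}, therefore, is precisely proving $u(y')\ge u(y)$ (the Gale utility is monotone nondecreasing as prices decrease). I expect the resolution to be: among all elements of $\galedemand{}^u(\mathbf q',b)$, choose $y'$ maximizing $u(y')$ — or, better, argue that \emph{all} elements share the same utility value (the objective $b\log u(y)-\langle\mathbf q',y\rangle$ is strictly concave in the value $u$ given the budget is tight), so $u(y')$ is well-defined; then use the identity $b\log u(y') = F(\mathbf q') + \langle\mathbf q',y'\rangle \ge F(\mathbf q) + \langle\mathbf q',y'\rangle = b\log u(y) - \langle\mathbf q,y\rangle + \langle\mathbf q',y'\rangle$. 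Since at an optimum the budget is fully spent only if $u$ is non-satiable; in the satiable case $y'$ may have slack. A cleaner fix is to first dispose of coordinates where $y_j'=0$ (automatic) and for $y_j'>0$ use equality $g_j' = q_j'u(y')/b$ and compare against $g_j \le q_ju(y)/b$: I only need $u(y')\ge u(y)$, and I would obtain it by the following tournament argument — swap the roles and note if $u(y')<u(y)$ then symmetrically (prices went up from $\mathbf q'$ to $\mathbf q$, demand value should not go up), which by the same inequality chain with roles reversed gives $b\log u(y)-b\log u(y')\ge\langle\mathbf q,y\rangle-\langle\mathbf q',y'\rangle$; adding the two inequalities yields $0\ge 0$, consistent but not a contradiction, so I instead invoke the explicit scaling Lemma~\ref{lem::gale-demand-approx-2} machinery or simply cite that $F$ convex nonincreasing plus the envelope theorem gives $u(y')\ge u(y)$ coordinate-free. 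I would write the final proof by (1) stating KKT data for $y$, (2) picking arbitrary $y'$ and its KKT data, (3) proving $u(y')\ge u(y)$ via the value-monotonicity of $F$ and the fact that $y$ is feasible for the lower prices, (4) concluding $y_j'\le y_j$ on $T$ by Lemma~\ref{lem:univariate}, handling $y_j'=0$ separately.
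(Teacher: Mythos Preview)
Your plan has two genuine gaps, one you identify and one you do not.

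\textbf{The gap you identify: proving $u(y')\ge u(y)$.} You go back and forth on the value-monotonicity of $F(\mathbf q)$ without reaching a clean argument, and you are right that the inequality $b\log u(y')-b\log u(y)\ge\langle\mathbf q',y'\rangle-\langle\mathbf q,y\rangle$ alone does not sign $u(y')-u(y)$. The paper settles this step by a direct contradiction argument using only the KKT data you already wrote down: suppose $u(y')<u(y)$. Then for every $j$ with $y_j>0$ and $q_j>0$ one has $g'_j\le q'_j u(y')/b< q_j u(y)/b=g_j$, whence $y'_j\ge y_j$ by Lemma~\ref{lem:univariate}; for $j$ with $q_j=0$ one has $g_j=g'_j=0$ so $v_j(y'_j)=v_j(y_j)=\max v_j$. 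Summing gives $u(y')\ge u(y)$, a contradiction. No envelope theorem or value-function machinery is needed.

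\textbf{The gap you miss: Lemma~\ref{lem:univariate} needs strict inequality.} Your core step takes an \emph{arbitrary} $y'\in\galedemand{}^u(\mathbf q',b)$ and for $j\in T$ with $y'_j>0$ concludes from $g'_j\ge g_j$ that $y'_j\le y_j$. But Lemma~\ref{lem:univariate} only gives $y'_j\le y_j$ when $g'_j>g_j$ strictly; if $g'_j=g_j$ (as happens whenever $u(y')=u(y)$ and $q_j>0$, or whenever $q_j=0$) nothing follows. Concretely, take $u(x)=x_1+x_2+x_3$, $b=1$, $\mathbf q=(1,1,2)$, $\mathbf q'=(1,1,1)$. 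Then $y=(1,0,0)\in\galedemand{}^u(\mathbf q,b)$ and $y'=(0,1,0)\in\galedemand{}^u(\mathbf q',b)$, yet $y'_2=1>0=y_2$ on the unchanged coordinate. So one cannot take an arbitrary $y'$: the lemma asserts only existence, and the selection of $y'$ is where the work is. The paper handles this by choosing $y'$ to minimize $\sum_{j\in T}(y'_j-y_j)^+$ and then, in the case $u(y')=u(y)$, arguing via an explicit swap (decrease $y'_k$ on a bad coordinate, increase $y'_\ell$ on a coordinate where $v_\ell(y'_\ell)<v_\ell(y_\ell)$) to reduce this potential while staying optimal. This equality case is the substantive part of the proof and is missing entirely from your plan.
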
 
\begin{proof}
Let $J\defeq\{j\in\goods\, :\, q'_j=q_j \}$. 
Let us select  $\my' \in \galedemand{}^u (\mathbf{q}', b)$  minimizing the potential $\Psi(\my')\defeq\sum_{j\in J} (y'_{j} - y_{j})^+$. Our goal is to show $\Psi(\my')=0$.
Throughout the proof, let $g'_{j} \in \supg v_{j}(y'_{j})$ and $g_{j} \in \supg v_{j}(y_{j})$ as in Lemma~\ref{lem::kkt-separable}.

First, we show $u(\my')\ge u(\my)$.
For a contradiction, assume $u(\my')<u(\my)$. Then, according to Lemma~\ref{lem::kkt-separable}, for every $j\in\goods$ with $y_j>0$ and $q_j>0$ we have
$$g'_{j}\le q'_j  u(\my')/b < q_j u(\my)/b= g_j  \, ,$$
while for $j\in\goods$ with $q_j=0$ we have $g'_j=g_j=0$. Consequently, for each $y_j>0$, if $q_j>0$ then Lemma~\ref{lem:univariate} implies $y'_j\ge y_j$, and if $q_j=0$ then $v_j(y_j)=v_j(y_j')$, both being equal to the maximum value of $v_j(.)$. These imply $u(\my')\ge u(\my)$, a contradiction.

 Using the above, 
 for any $j\in J$ where $y'_j>0$, 
\begin{equation}
 g'_{j}  = \frac{q'_j}{b} u(\my') \ge \frac{q_j}{b} u(\my) \ge g_j \, . \numberthis \label{eqn::separable-eqn-1}
\end{equation}
Let us first consider the case $u(\my')>u(\my)$.
For all $j\in J$ with $q'_j=q_j>0$, we get $g'_j>g_j$.  By Lemma~\ref{lem:univariate}, $y'_j\le y_j$ follows for such goods $j$. On the other hand,  if  $q'_j = q_j = 0$, then this implies $g'_j=g_j=0$. For such indices, if $y'_j>y_j$, then replacing $y'_j$ by $y_j$ results in another bundle $\my''\in\galedemand{}^u(\mq',b)$ 
 with $\Psi(\my'')<\Psi(\my')$, a contradiction.

For the rest of the proof, let us assume that $u(\my')=u(\my)$.
We next show $y'_j\ge y_j$ for all $j\in\goods\setminus J$, that is, for all goods with $q'_j<q_j$. This is clearly true if $y_j=0$. Whenever $y_j>0$ for such an index, we have 
\[
 g'_{j}  \le \frac{q'_j}{b} u(\my') < \frac{q_j}{b} u(\my) = g_j \, ,
\]
and therefore $y'_j\ge y_j$ by Lemma~\ref{lem:univariate}. 
By monotonicity, $v_j(y'_j)\ge v_j(y_j)$ for all $j\in\goods\setminus J$. Since $u(\my')=u(\my)$, we have $\sum_{j\in J} v_j(y'_j)\le \sum_{j\in J} v_j(y_j)$.

From \eqref{eqn::separable-eqn-1}, it follows that $g'_j=g_j$ for all $j\in J$ with $y'_j>0$. This is because the first inequality is now equality, and the second inequality may only be strict if $y_j=0<y'_j$. By Lemma~\ref{lem:univariate}, $g_j\ge g'_j$ in this case, leading to a contradiction. It also follows that $g'_k = g_k$ for all $k \in J$ with $y_k > 0$ from the same argument.

Assume for a contradiction 
$\Psi(\my')>0$, and
let us select an index $j\in J$ with $y'_j>y_j$. If $v_j(y'_j)=v_j(y_j)$, then we get a contradiction, because changing $y'_j$ to $y_j$ would preserve the optimality conditions in Lemma~\ref{lem::kkt-separable}, while decreasing the potential $\Psi(\my')$.
Thus, $v_j(y'_j)>v_j(y_j)$. By $\sum_{j'\in J} v_{j'}(y'_{j'})\le \sum_{j'\in J} v_{j'}(y_{j'})$, there must be an index $k\in J$ with  $v_k(y'_k)<v_k(y_k)$. We note as $u(\my) = u(\my')$ and $j \in J$ and $k \in J$, this implies $g'_k = g_{k} = \frac{q'_k}{b} u(\my')$ and $g'_j =  g_j = \frac{q'_j}{b} u(\my')$. We can now find small $\varepsilon_k,\varepsilon_j>0$  such that $\varepsilon_k \cdot q'_k = \varepsilon_j \cdot q'_j$; $y'_j - \varepsilon_j\ge y_j$ and $y'_k + \varepsilon_k \le y_k$. We note after making the update $y''_j = y'_j - \varepsilon_j\ge y_j$ and $y''_k = y'_k + \varepsilon_k\le y_k$. Additionally, $v_k(y'_k + \varepsilon_k)+v_j(y'_j - \varepsilon_j)=v_k(y'_k)+v_j(y'_j)$ and $y''_j q'_j + y''_k q'_k = y'_j q'_j + y'_k q'_k$. Therefore, this modified solution $\my''$ satisfies the optimality conditions but $\Psi(\my'')<\Psi(\my')$, again leading to a contradiction.
\end{proof}

\begin{restatable}{lemma}{sepbudgetdec}\label{lem::sep-def-2}
  Let  $u(\mx)=\sum_{j\in\goods} v_j(x_j)$ be a separable utility function, $b > b' > 0$ and $\mathbf{q} \in \FQ{u}$. Then,  for any $\my \in \galedemand{}^u (\mathbf{q}, b)$, there exists a $\my' \in \galedemand{}^u (\mathbf{q}, b')$ such that $\my' \leq \my$.
\end{restatable}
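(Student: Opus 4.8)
The plan is to mirror the structure of the proof of Lemma~\ref{lem::sep-def-1}: among all bundles $y'\in\galedemand{}^u(\mathbf{q},b')$, select one minimizing a potential that penalizes coordinates exceeding $y$, namely $\Psi(y')\defeq\sum_{j\in\goods}(y'_j-y_j)^+$, and argue that the minimum value must be $0$. Throughout, fix supergradients $g_j\in\supg v_j(y_j)$ and $g'_j\in\supg v_j(y'_j)$ as provided by Lemma~\ref{lem::kkt-separable}, so that $g_j\le q_j u(y)/b$ with equality when $y_j>0$, and $g'_j\le q_j u(y')/b'$ with equality when $y'_j>0$; also $g_j,g'_j\ge 0$ and both $u(y),u(y')>0$.

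The first key step is to show $u(y')\le u(y)$. This is the analogue of the ``$u(y')\ge u(y)$'' step in Lemma~\ref{lem::sep-def-1}, but now driven by the \emph{budget} decrease rather than a price decrease. Suppose for contradiction $u(y')>u(y)$. For every $j$ with $y_j>0$ and $q_j>0$ we would get
\[
g'_j\le \frac{q_j u(y')}{b'}\ \text{and}\ g_j=\frac{q_j u(y)}{b},
\]
and since $u(y')/b'>u(y)/b$ is not immediately forced, I instead argue as follows: pick $j$ with $y_j>0$ (exists since $u(y)>0$); then $g_j=q_j u(y)/b>0$, so $q_j>0$, and the optimality condition at $y'$ gives $g'_j\le q_j u(y')/b'$. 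Comparing the \emph{ratios} $g'_j/q_j$ summed appropriately against $u(y')/b'$ versus $g_j/q_j=u(y)/b$, together with concavity ($\sum_j g_j y_j\le u(y)$, $\sum_j g'_j y'_j\le u(y')$ with equality by the KKT structure for the Gale program — since $\pr{\mathbf{q}}{y}=b$, $\pr{\mathbf{q}}{y'}=b'$ by Lemma~\ref{lem::gale-kkt} applied coordinatewise... actually equality $\sum q_j y_j=b$ need not hold; I should instead use the cleaner route), I will derive that $b'\ge b$, a contradiction. The main obstacle is getting this inequality cleanly without the strict-concavity assumption; the safe approach is: from $y'\in\galedemand{}^u(\mathbf{q},b')$, Lemma~\ref{lem::budget-spending-NSW} gives $\pr{\mathbf{q}}{y'}\le b'$, and scaling $y'$ up to saturate budget $b$ (i.e.\ $\tilde y=(b/b')y'$, legitimate since $b>b'$) yields by concavity $u(\tilde y)\ge (b/b')u(y')>u(y')>u(y)$, while $\pr{\mathbf{q}}{\tilde y}\le b$; but then $\tilde y$ achieves strictly larger objective $b\log u(\tilde y)-\pr{\mathbf{q}}{\tilde y}$ than... this needs care comparing to $y$'s objective at budget $b$. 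I will instead directly compare Gale objectives: if $u(y')>u(y)$ then $b\log u(y')-\pr{\mathbf{q}}{y'}> b\log u(y)-\pr{\mathbf{q}}{y}$ would contradict $y\in\galedemand{}^u(\mathbf{q},b)$ provided $\pr{\mathbf{q}}{y'}\le\pr{\mathbf{q}}{y}$, which holds since $\pr{\mathbf{q}}{y'}\le b'<b$ and we may assume $\pr{\mathbf{q}}{y}$ is close to $b$ — not guaranteed in the satiable case. So the genuinely careful argument handles the satiable coordinates ($q_j=0$, or $v_j$ already saturated) separately, exactly as in Lemma~\ref{lem::sep-def-1}, and for the non-satiable part uses that $\pr{\mathbf{q}}{y}=b$ there.

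Once $u(y')\le u(y)$ is established, the remaining steps parallel Lemma~\ref{lem::sep-def-1} closely. For any $j$ with $y'_j>0$ we have $g'_j=q_j u(y')/b'$, and if moreover $y_j>0$ then $g_j=q_j u(y)/b$; since $u(y')/b'\ge u(y)/b$ (strict if $u(y')<u(y)$, because $b'<b$; and if $u(y')=u(y)$ then still $u(y')/b'>u(y)/b$), we get $g'_j\ge g_j$, hence $g'_j> g_j$ in the generic case, forcing $y'_j\le y_j$ by Lemma~\ref{lem:univariate}. Coordinates with $g'_j=g_j$ or with $q_j=0$ are handled by the same local-exchange argument as before: if some $y'_k>y_k$ with $v_k$ not yet saturated, then because $u(y')\le u(y)$ there is a coordinate $j$ with $v_j(y'_j)<v_j(y_j)$, and a small mass transfer from $k$ to $j$ keeps the KKT conditions valid while strictly decreasing $\Psi$, contradicting minimality; if $v_k(y'_k)=v_k(y_k)$ we simply lower $y'_k$ to $y_k$. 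I expect the ratio bookkeeping in the $u(y')\le u(y)$ step to be the only real obstacle — everything after it is a transcription of the Lemma~\ref{lem::sep-def-1} argument with the roles of the two ratios $u(y)/b$ and $u(y')/b'$ interchanged in sign of monotonicity.
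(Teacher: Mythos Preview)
Your overall scaffold (pick $y'$ minimizing $\Psi(y')=\sum_j(y'_j-y_j)^+$, then use the KKT characterization and Lemma~\ref{lem:univariate}) matches the paper. However, you have identified the wrong ``first key step'', and this creates a genuine gap.

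The inequality that drives the rest of the argument is the \emph{ratio} inequality
\[
\frac{u(y')}{b'}\ \ge\ \frac{u(y)}{b},
\]
not $u(y')\le u(y)$. You spend the first half of the proposal struggling to establish $u(y')\le u(y)$ (which, incidentally, does follow in one line by adding the two Gale-optimality inequalities $b\log u(y)-\pr{\mq}{y}\ge b\log u(y')-\pr{\mq}{y'}$ and $b'\log u(y')-\pr{\mq}{y'}\ge b'\log u(y)-\pr{\mq}{y}$). But even granting $u(y')\le u(y)$, your subsequent claim ``$u(y')/b'\ge u(y)/b$ (strict if $u(y')<u(y)$, because $b'<b$)'' is simply false as a deduction: from $u(y')<u(y)$ and $b'<b$ nothing about the ratio follows (take $u(y')=1$, $u(y)=10$, $b'=1$, $b=2$). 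So the crucial inequality is asserted without proof, and the supergradient comparison $g'_j\ge g_j$ that follows from it is unjustified.

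The paper's fix is to prove the ratio inequality directly. Suppose $u(y)/b>u(y')/b'$. Then for every $j$ with $y_j>0$,
\[
g_j=\frac{q_j u(y)}{b}>\frac{q_j u(y')}{b'}\ge g'_j,
\]
so $g_j>g'_j$, which by Lemma~\ref{lem:univariate} gives $y'_j\ge y_j$ and hence $v_j(y'_j)\ge v_j(y_j)$. Summing, $u(y')\ge u(y)$; combined with $b'<b$ this yields $u(y')/b'>u(y)/b$, a contradiction. Once $u(y')/b'\ge u(y)/b$ is in hand, your final paragraph (strict case via Lemma~\ref{lem:univariate}, equality case via a local exchange decreasing $\Psi$) is exactly the paper's argument and goes through.
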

\begin{proof}
We argue similarly as in the previous proof: we pick $\my' \in \galedemand{}^u (\mathbf{q}, b')$ which minimizes $\Psi(\my')\defeq\sum_{j\in \goods} (y'_{j} - y_{j})^+$, and show that $\Psi(\my')=0$.

First, we show that ${u(\my)}/{b} \leq {u(\my')}/{b'}$. For a contradiction, suppose ${u(\my)}/{b} > {u(\my')}/{b'}$. If $y_{j} > 0$, then, by Lemma~\ref{lem::kkt-separable} 
$$g_{j} = \frac{q_j u(\my)}{b} > \frac{q_j u(\my')}{b'} \geq g'_{j},$$ which implies $v_{j}(y_{j}) \leq v_{j}(y'_{j})$ whenever $y_j>0$ (Lemma~\ref{lem:univariate}), and therefore, $u(\my) \leq u(\my')$. 
Together with $b>b'$, this implies ${u(\my)}/{b} \leq {u(\my')}/{b'}$, a contradiction. 

Whenever $y'_j>0$, we now have 
\begin{equation}\label{eqn::separable-eqn-2}
g'_{j} = \frac{q_j u(\my')}{b'}  \geq  \frac{q_j u(\my)}{b}\ge g_{j} \, .
\end{equation}
Assume first $u(\my')/b'>u(\my)/b$. If $q_j>0$, then inequality \eqref{eqn::separable-eqn-2} is strict, and therefore $y'_j\le y_j$ for all $j\in\goods$ follows by Lemma~\ref{lem:univariate}. If $q_j=0$, then $g'_j=g_j=0$. If $y'_j>y_j$ for such a good $j$, we can decrease $y'_j$ to $y_j$ and thereby get an optimal allocation with strictly smaller $\Psi(\my')$ value, leading to a contradiction. This completes the proof for $u(\my')/b'>u(\my)/b$.

For the rest, we may assume $u(\my')/b'=u(\my)/b$. Now, $g'_j=g_j$ must hold whenever $y'_j>0$. This is because the first inequality in \eqref{eqn::separable-eqn-2} is equality, and if the second is strict, then $y_j=0<y'_j$ would give a contradiction to $g'_j>g_j$. It also follows from the same argument that $g'_k = g_k$ for all $k \in \goods$ with $y_k > 0$.

Assume for a contradiction 
that $\Psi(\my')>0$, and
let us select an index $j\in \goods$ such that $y'_j>y_j$. If $v_j(y'_j)=v_j(y_j)$, we obtain a contradiction because changing $y'_j$ to $y_j$ would preserve the optimality conditions in Lemma~\ref{lem::kkt-separable} while decreasing the potential $\Psi(\my')$. Hence, we must have $v_j(y'_j)>v_j(y_j)$. Since $\sum_{j'\in \goods} v_{j'}(y'_{j'})\le \sum_{j'\in \goods} v_{j'}(y_{j'})$, there must exist an index $k\in \goods$ with $v_k(y'_k)<v_k(y_k)$. Noting that $u(\my) = u(\my')$ and $j, k \in \goods$, this implies $g'_k = g_{k} = \frac{q'_k}{b} u(\my')$ and $g'_j =  g_j = \frac{q'_j}{b} u(\my')$. We can now choose small $\varepsilon_k,\varepsilon_j>0$  such that $\varepsilon_k \cdot q'_k = \varepsilon_j \cdot q'_j$; $y'_j - \varepsilon_j\ge y_j$ and $y'_k + \varepsilon_k \le y_k$. After making this update, define $y''_j = y'_j - \varepsilon_j\ge y_j$ and $y''_k = y'_k + \varepsilon_k\le y_k$. Moreover, we have $v_k(y'_k + \varepsilon_k)+v_j(y'_j - \varepsilon_j)=v_k(y'_k)+v_j(y'_j)$ and $y''_j q'_j + y''_k q'_k = y'_j q'_j + y'_k q'_k$. Thus, the modified solution $\my''$ satisfies the optimality conditions, but $\Psi(\my'')<\Psi(\my')$, again leading to a contradiction.
\end{proof}

\begin{lemma}\label{lem::sep-def-3}
 Let $u(\mx)=\sum_{j\in\goods} v_j(x_j)$ be a separable utility function such that $\mathbf{0} \in \FQ{u}$. Then, 
     for any $\mathbf{q} \geq \mathbf{0}$, $b>0$ and any  $\my \in \galedemand{}^u (\mathbf{0}, b)$, there exists a $\my' \in \galedemand{}^u (\mathbf{q}, b)$ such that $\my' \leq \my$. 
\end{lemma}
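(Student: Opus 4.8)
The plan is to exploit that $\mathbf{0}\in\FQ{u}$ forces $u$ to be satiable, so a bundle in $\galedemand{}^u(\mathbf{0},b)$ is nothing but a maximizer of $u$; by separability this means every $v_j$ has already attained its maximum value at $y_j$. Consequently, truncating \emph{any} bundle so that it is coordinatewise at most $y$ leaves its utility unchanged while weakly decreasing its cost $\pr{\mathbf{q}}{\cdot}$, which immediately yields the required $y'$.

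Concretely, I would first record that $\mathbf{q}\in\FQ{u}$: from $\mathbf{0}\in\FQ{u}$ we get $\sup_y b\log u(y)<+\infty$, hence $u$ is satiable, and then $\sup_y b\log u(y)-\pr{\mathbf{q}}{y}\le\sup_y b\log u(y)<+\infty$ since $\pr{\mathbf{q}}{y}\ge 0$ (this is also Lemma~\ref{lem::gale-domain-property}\eqref{lem::demand-gale-finite} combined with satiability). Thus Lemma~\ref{lem-galedbasic} provides some $\hat y\in\galedemand{}^u(\mathbf{q},b)$, and Lemma~\ref{lem::kkt-separable} guarantees $u(\hat y)>0$.

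Next I would apply the KKT characterization of Lemma~\ref{lem::kkt-separable} to $y\in\galedemand{}^u(\mathbf{0},b)$: with price vector $\mathbf{0}$ its two conditions read $0\le g_j$ and $g_j\le 0$ (with equality wherever $y_j>0$) for supergradients $g_j\in\supg v_j(y_j)$, forcing $g_j=0$ for every $j\in\goods$. Hence $0\in\supg v_j(y_j)$, i.e.\ $v_j(z)\le v_j(y_j)$ for all $z\ge 0$; together with monotonicity of $v_j$ this gives $v_j(z)=v_j(y_j)$ for every $z\ge y_j$.

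Finally, define $y'_j:=\min\{\hat y_j,y_j\}$ for each $j\in\goods$. On a coordinate with $\hat y_j\le y_j$ we have $v_j(y'_j)=v_j(\hat y_j)$ trivially, and on a coordinate with $\hat y_j>y_j$ the previous paragraph gives $v_j(y'_j)=v_j(y_j)=v_j(\hat y_j)$; summing over $j$ yields $u(y')=u(\hat y)>0$, so $\log u(y')$ is well defined. Since $y'\le\hat y$ and $\mathbf{q}\ge\mathbf{0}$ we get $\pr{\mathbf{q}}{y'}\le\pr{\mathbf{q}}{\hat y}$, hence $b\log u(y')-\pr{\mathbf{q}}{y'}\ge b\log u(\hat y)-\pr{\mathbf{q}}{\hat y}$, so $y'\in\galedemand{}^u(\mathbf{q},b)$; and $y'\le y$ by construction. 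There is essentially no hard step here: the only points needing a little care are the nonemptiness of $\galedemand{}^u(\mathbf{q},b)$ (handled by satiability plus Lemma~\ref{lem-galedbasic}) and checking that the truncated bundle retains positive utility (which holds because its utility equals that of $\hat y$). In contrast to the proofs of Lemmas~\ref{lem::sep-def-1} and~\ref{lem::sep-def-2}, no rebalancing among coordinates is needed, since lowering a coordinate to its satiation level $y_j$ never changes $u$.
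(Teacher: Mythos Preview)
Your proof is correct and follows essentially the same idea as the paper: both observe that $\mathbf{0}\in\FQ{u}$ forces satiability, so each $v_j$ is flat above $y_j$, and then truncate a bundle in $\galedemand{}^u(\mathbf{q},b)$ by $y$ without changing the utility. The only cosmetic difference is that the paper picks $y'\in\galedemand{}^u(\mathbf{q},b)$ minimizing $\sum_j y'_j$ and argues by contradiction, whereas you build $y'=\min\{\hat y,y\}$ directly; the underlying observation is identical.
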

\begin{proof}
$\mathbf{0} \in \FQ{u}$ implies  $\sup_{\my\in\Rp^\goods} u(\my)<\infty$. Therefore, $\FQ{u}=\Rp^\goods$; in particular, $\mathbf{q} \in \FQ{u}$.
Let us pick  $\my' \in \galedemand{}^u (\mathbf{q}, b)$ such that $\sum_{j\in \goods} y'_j$ is minimal. 

Note that $u(\my)=\sup_{\my\in\Rp^\goods} u(\my)$. In particular, for any $\alpha\ge y_j$, $v_j(\alpha)=v_j(y_j)$.
For a contradiction, assume there exists a $k\in \goods$ with $y_k<y'_k$. Then, reducing $y'_k$ to $y_k$ may only increase $b\log u(\my')-\pr{\mq}{\my'}$, while decreasing $\sum_{j\in \goods} y'_j$. This contradicts the choice of $\my'$.
\end{proof}


\section{Generalized network utilities}\label{sec:network}

In this Appendix, we prove Theorem~\ref{thm:network-gale} and Theorem~\ref{thm::satiate-gale}.
We will first need a characterization of subgradients by linear programming duality. We state this in a more general form. 
\begin{restatable}{lemma}{lpsubgradient}\label{lem:lp-subgradient}
Let $A\in \R^{k\times \ell}$, $B\in \R^{k\times m}$, $\mc\in \R^\ell$, $\mb\in \R^k$. Let $\mathcal{D}\subseteq \R^m$ be the set of vectors $\mx\in\R^m$ such that the following parametric linear program is feasible and bounded, and let us define  $\varphi:\mathcal{D}\to \R$ such that $\varphi(\mx)$ is the optimum value of this program: 
\begin{equation}\tag{$\Primal(\mx)$}\label{eq:parametric-LP}
\begin{aligned}
\max~& \mc^\top \mz\\
A \mz&\le B \mx +\mb \\
\mz&\ge \0
\end{aligned}
\end{equation}
Then, for every $\mx\in\mathcal{D}$, $\mg\in\supg \varphi(\mx)$ if and only if $\mg=B^\top \bm{\pi}$ for an optimal solution $\bm{\pi}\in\Rp^k$ to the dual linear program
 \begin{equation}\tag{$\Dual(\mx)$}\label{eq:parametric-LP-dual}
\begin{aligned}
\min~& \mx^\top B^\top \bm{\pi}+\mb^\top \bm{\pi}\\
A^\top \bm{\pi}&\ge \mc\\
\bm{\pi}&\ge \0\, .
\end{aligned}
\end{equation}
\end{restatable}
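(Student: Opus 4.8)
\textbf{Proof plan for Lemma~\ref{lem:lp-subgradient}.}

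The plan is to prove the two directions of the ``if and only if'' by combining LP strong duality with the very definition of a supergradient. The key structural fact is that $\varphi$ is a concave, piecewise-linear function on the polyhedral domain $\mathcal{D}$: indeed, by strong duality, for every $x\in\mathcal{D}$ we may write $\varphi(x)=\min\{x^\top \mB^\top\pi+b^\top\pi : \mA^\top\pi\ge c,\ \pi\ge 0\}$, i.e.\ $\varphi$ is the pointwise minimum over the (fixed, $x$-independent) dual feasible polyhedron $P\defeq\{\pi\ge 0 : \mA^\top\pi\ge c\}$ of the affine-in-$x$ functions $x\mapsto \pi^\top(\mB x+b)$. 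Writing $\varphi$ in this form is the main conceptual step; everything else follows from it.

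For the ``if'' direction, suppose $\pi$ is an optimal dual solution at $x$, so $\varphi(x)=\pi^\top(\mB x+b)$, and set $g\defeq \mB^\top\pi$. For any $x'\in\mathcal{D}$, the same $\pi$ is feasible for $\Dual(x')$ (feasibility of $P$ does not depend on $x$), hence $\varphi(x')\le \pi^\top(\mB x'+b)=\pi^\top(\mB x+b)+\pi^\top\mB(x'-x)=\varphi(x)+\langle g, x'-x\rangle$. This is exactly the supergradient inequality $\varphi(x')\le \varphi(x)+\langle g, x'-x\rangle$, so $g\in\supg\varphi(x)$.

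For the ``only if'' direction, suppose $g\in\supg\varphi(x)$. The approach is to show that some optimal dual $\pi$ at $x$ satisfies $\mB^\top\pi=g$; I would argue by contradiction using the structure of $P$. Let $F$ be the (nonempty, polyhedral) face of $P$ consisting of optimal solutions to $\Dual(x)$, and let $C\defeq \{\mB^\top\pi : \pi\in F\}$, a compact convex polytope (nonempty since $x\in\mathcal{D}$). If $g\notin C$, separate: there is a direction $d$ with $\langle g,d\rangle > \max_{h\in C}\langle h,d\rangle$. Now move from $x$ in direction $d$: for small $t>0$, the optimal dual solutions to $\Dual(x+td)$ lie in $F$ (the optimal face can only shrink or stay for a small perturbation staying inside $\mathcal{D}$; here one uses that $x$ is such that $x+td\in\mathcal{D}$ for small $t$, which holds because $\mathcal{D}$ is the projection of the primal feasibility region and can be taken relatively open in the relevant directions — or, more carefully, one restricts $d$ to lie in the lineality-free description so $x\pm td\in\mathcal D$), giving $\varphi(x+td)=\min_{\pi\in F}\pi^\top(\mB(x+td)+b)=\varphi(x)+t\min_{h\in C}\langle h,d\rangle$ for $t$ small enough. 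Then the supergradient inequality at $x'=x+td$ reads $\varphi(x)+t\min_{h\in C}\langle h,d\rangle \le \varphi(x)+t\langle g,d\rangle$, i.e.\ $\min_{h\in C}\langle h,d\rangle\le \langle g,d\rangle$. Applying the same argument with $-d$ (again for $x-td\in\mathcal{D}$, small $t$) gives $\max_{h\in C}\langle h,d\rangle\ge \langle g,d\rangle$, contradicting the strict separation. Hence $g\in C$, i.e.\ $g=\mB^\top\pi$ for an optimal dual solution $\pi$.

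\textbf{Main obstacle.} The delicate point is the ``only if'' direction, specifically the claim that for small perturbations $x'=x\pm td$ staying in $\mathcal{D}$ the optimal dual face is contained in $F$ and $\varphi$ is affine with slope $\min_{h\in C}\langle\cdot,d\rangle$. This needs the standard parametric-LP fact that the optimal value function of an LP is piecewise linear and that near a given $x$ the active ``pieces'' are exactly those dual vertices/faces optimal at $x$; care is required when $x$ is on the boundary of $\mathcal{D}$ so that only certain directions $d$ are admissible. An alternative, cleaner route that avoids perturbation arguments entirely: directly observe that $\varphi(x')=\min_{\pi\in P}\pi^\top(\mB x'+b)$ for all $x'\in\mathcal{D}$, note that $P$ has finitely many vertices $\pi^{(1)},\dots,\pi^{(r)}$ plus recession directions that are irrelevant on $\mathcal{D}$, so $\varphi(x')=\min_s (\mB^\top\pi^{(s)})^\top x' + b^\top\pi^{(s)}$; then $g\in\supg\varphi(x)$ iff the affine function $x'\mapsto \varphi(x)+\langle g,x'-x\rangle$ lies weakly above $\varphi$ on $\mathcal{D}$, and by a standard Carathéodory/LP argument over the finitely many linear pieces this forces $g$ to be a convex combination of those $\mB^\top\pi^{(s)}$ that are active (optimal) at $x$ — equivalently $g=\mB^\top\pi$ for an optimal dual $\pi$, since the optimal dual face is the convex hull of the active vertices. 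I would present the proof via this finite-pieces formulation, deferring, as the excerpt says, to Appendix~\ref{sec:network-proofs}.
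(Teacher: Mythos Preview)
Your ``if'' direction is correct and essentially identical to the paper's: both use that any dual-feasible $\pi$ upper-bounds $\varphi(x')$ by weak duality, and optimality at $x$ turns this into the supergradient inequality.

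For the ``only if'' direction the paper takes a different and cleaner route: fix a primal optimum $z$ of $\Primal(x)$, write via complementary slackness the linear feasibility system ``$\mB^\top\pi=g$ and $\pi$ is dual optimal,'' and apply Farkas's lemma. An infeasibility certificate $(w,r)$ then yields that $z+\varepsilon w$ is feasible for $\Primal(x-\varepsilon r)$ with $c^\top(z+\varepsilon w)>\varphi(x)-\varepsilon g^\top r$, so $t\defeq x-\varepsilon r$ satisfies $\varphi(t)>\varphi(x)+\pr{g}{t-x}$, contradicting $g\in\supg\varphi(x)$. The key feature is that $t\in\mathcal{D}$ comes \emph{for free}, since a feasible primal point at $t$ is explicitly exhibited.

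Your two proposed arguments both stumble exactly on this domain issue. In the separation/perturbation argument you need $x\pm td\in\mathcal{D}$ for the separating direction $d$; you flag this but do not resolve it, and in general the admissible directions (those staying in $\mathcal{D}$) may be too few to separate $g$ from $C$. Also, $C=\mB^\top F$ need not be compact: the optimal dual face $F$ can be unbounded (e.g.\ take $\mA=I_2$, $\mB=\begin{pmatrix}1&0\\-1&0\end{pmatrix}$, $b=0$, $c=(1,1)^\top$; then $\mathcal{D}=\{0\}\times\R$ and $F=[1,\infty)^2$). Your ``alternative cleaner route'' has a genuine gap: the optimal dual face is \emph{not} in general the convex hull of its vertices, and, more fundamentally, on a restricted domain $\mathcal{D}$ the superdifferential of a min-of-affines is larger than the convex hull of the active gradients --- it also picks up the normal cone to $\mathcal{D}$ at $x$. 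In the example above, the single dual vertex gives $\mB^\top\pi^{(1)}=(0,0)$, yet $\supg\varphi(0,0)=\R\times\{0\}$; these extra supergradients do equal $\mB^\top\pi$ for (non-vertex) optimal $\pi$, but your vertex-only argument cannot see them. The paper's Farkas approach sidesteps all of this.
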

\begin{proof}
First, assume that $\bm{\pi}\in\Rp^\ell$ is an optimal solution to $\Dual(\mx)$ for a given $\mx\in\R^m$. We need to show that $\mg=B^\top \bm{\pi}$ is a supergradient at $\mx$, that is, $\varphi(\mt)\le \varphi(\mx)+\pr{\mg}{\mt-\mx}$ for any $\mt\in\mathcal{D}$.

The programs $\Dual(\mx)$ and $\Dual(\mt)$ have the same feasible region, with linear objective functions $B \mx+\mb$ and $B \mt+\mb$, respectively. By weak duality, the cost of every feasible dual solution is at least the primal optimum value, thus, $\varphi(\mt)\le \mt^\top B \bm{\pi}+\mb^\top \bm{\pi}=\mg^\top \mt+\mb^\top \bm{\pi}$. The statement follows as
\[
\varphi(\mt)\le \mg^\top \mt+\mb^\top \bm{\pi}=\mg^\top (\mt-\mx)+\mx^\top B \bm{\pi} + \mb^\top \bm{\pi}=\varphi(\mx)+\pr{\mg}{\mt-\mx}\, ,
\]
noting also that $\varphi(\mx)=\mx^\top B \bm{\pi} + \mb^\top \bm{\pi}$ since $\bm{\pi}$ is an optimal solution to $\Dual(\mx)$, and $\varphi(\mx)$ is the optimum value.

Conversely, we need to show that if $\mg\in\supg \varphi(\mx)$, then there exists a dual optimal solution $\bm{\pi}$ such that $\mg=B^\top \bm{\pi}$. Let $\mz$ be a primal optimal solution to $\Primal(\mx)$. Using complementary slackness, we can turn $\Dual(\mx)$ into a feasibility system. Namely, $\mg=B^\top \bm{\pi}$ for a dual optimal solution if and only if
 \begin{equation}\label{eq:parametric-LP-dual-feas}
\begin{aligned}
B^\top \bm{\pi}&= \mg \\
A_i^\top \bm{\pi} &= c_i\, ,\quad \forall i: z_i>0\\ 
A_i^\top \bm{\pi} &\ge c_i\, ,\quad \forall i: z_i=0\\
\pi_j&=0\, , \quad \forall j: (A \mz)_j <(B \mx+\mb)_j\\
\pi_j&\ge 0 , \quad \forall j: (A \mz)_j =(B \mx+\mb)_j\, .
\end{aligned}
\end{equation}
Here, $A_i$ denotes the $i$-th column of $A$.
By Farkas's lemma, if \eqref{eq:parametric-LP-dual-feas} is not feasible, then the following system has a feasible solution:
  \begin{equation*}
\begin{aligned}
(A \mw)_j + (B \mr)_j &\le 0\, , \quad \forall j: (A \mz)_j =(B \mx+\mb)_j\\
w_i &\ge 0\, ,\quad \forall i: z_i=0\\
\mc^\top \mw + \mg^\top \mr &> 0 \, .\\ 
\end{aligned}
\end{equation*}
For a small enough $\varepsilon>0$, $\mz+\varepsilon \mw$
is feasible to the system $\Primal(\mx-\varepsilon \mr)$.
Hence, for $\mt=\mx-\varepsilon \mr$, $\varphi(\mt)\ge \mc^\top (\mz+\varepsilon \mw)=\mc^\top \mz+\varepsilon \mc^\top \mw$. Using $\mc^\top \mz=\varphi(\mx)$ by the choice of $\mz$ as an optimal solution to $\Primal(\mx)$, and the bound $\mc^\top \mw>-\mg^\top \mr$, we get 
$\varphi(\mt)> \varphi(\mx)-\varepsilon \mg^\top \mr=\varphi(\mx)+\pr{\mg}{\mt-\mx}$, a contradiction to $\mg\in \supg \varphi(\mx)$. 
 \end{proof}

Consider a generalized flow instance defined in a graph $G=(V,E)$, sink $t\in V$, capacities $\mh\in\Rp^E$, gain factors $\bm{\gamma}\in\Rpp^E$ and supply set $S=\goods$. For the rest of this section, let $u\, :\, \Rp^\goods\to \Rp$ be a generalized network utility function defined in the form \ref{eq:genflow-LP} in Section~\ref{sec:prelim}.

 Using Lemma~\ref{lem:lp-subgradient}  and complementary slackness, we get the following characterization of subgradients of generalized network utility functions. Note that for a generalized network utility function, the domain is $\mathcal{D}=\Rp^\goods$, because $\mf=0$ is always a feasible solution to \ref{eq:genflow-LP}, and the definition assumes that the objective is always bounded.
 \begin{lemma}\label{lem:genflow-subgradient}
 Let $u\, :\, \Rp^\goods\to \R_+$ be a generalized network utility function defined in the form \ref{eq:genflow-LP} for $S=\goods$. For a given supply vector $\mx\in\Rp^\goods$, let $\mf$ be an optimal solution to \ref{eq:genflow-LP}. Then, $\mg\in\supg u(\mx)$ if and only if $\mg=\bm{\pi}|_\goods$ for a vector $\bm{\pi}\in \Rp^V$ that satisfies
  \begin{equation}\label{eq:genflow-dual-opt}
\begin{aligned}
\gamma_e \pi_w -\pi_v &= 0\, , \quad \forall e=(v,w)\in E: \cp_e>f_e>0\, , \\
\gamma_e \pi_w -\pi_v &\le 0\, , \quad \forall e=(v,w)\in E: f_e=0\, , \\
\gamma_e \pi_w -\pi_v &\ge 0\, , \quad \forall e=(v,w)\in E: f_e=\cp_e\, , \\
\pi_v&=0\, ,\quad \forall v\in \goods: \net{\mf}{v}>-x_v\mbox{ or } v\in V\setminus(\goods\cup t):  \net{\mf}{v}>0\, ,\\
\pi_t&=1\, , \\
\bm{\pi}&\ge \0\, .\\
\end{aligned}
\end{equation}
 \end{lemma}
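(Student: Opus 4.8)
\textbf{Proof plan for Lemma~\ref{lem:genflow-subgradient}.}

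The plan is to derive this as a direct instance of Lemma~\ref{lem:lp-subgradient}, since \ref{eq:genflow-LP} is exactly a parametric LP of the form \eqref{eq:parametric-LP} with the supply vector $x$ playing the role of the parameter. First I would write \ref{eq:genflow-LP} in standard matrix form: the variables are the flow values $f\in\Rp^E$; the objective is $\net{f}{t}$, which is a linear functional $c^\top f$; the constraints $0\le f\le \cp$ contribute inequalities with no $x$-dependence; and the net-flow constraints $\net{f}{v}\ge -x_v$ for $v\in\goods$, $\net{f}{v}\ge 0$ for $v\in V\setminus(\goods\cup\{t\})$ contribute rows $-\net{f}{\cdot}\le \mB x + b$ where $\mB$ picks out the coordinate $x_v$ on the $\goods$-rows (coefficient $1$) and is zero on the others, and $b=0$. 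The matrix $\mA$ is the incidence-type matrix with the gain factors, i.e.\ the row for node $v$ reads $\sum_{e\in\dout(v)} f_e - \sum_{e\in\din(v)}\gamma_e f_e$, together with the capacity rows. Note that $\mathcal{D}=\Rp^\goods$: $f=0$ is feasible for every $x\ge \0$, and boundedness is part of the definition of a generalized network utility.

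By Lemma~\ref{lem:lp-subgradient}, $g\in\supg u(x)$ iff $g=\mB^\top\pi$ for some optimal dual solution $\pi\ge 0$. The second step is to unpack what "$g=\mB^\top\pi$ and $\pi$ dual-optimal" means concretely. Because $\mB$ is the projection-type matrix described above, $\mB^\top\pi=\pi|_\goods$, so $g$ is just the restriction of the dual multipliers of the net-flow constraints to the source nodes. The dual has one variable $\pi_v$ for each net-flow constraint (so one per $v\in V\setminus\{t\}$) and one variable per capacity constraint; I would observe that at a vertex solution the capacity multipliers can be eliminated, or alternatively keep the dual in the node-potential form and let the $\pi_v$'s and the reduced costs $\gamma_e\pi_w-\pi_v$ carry the information. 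It is cleanest to introduce a potential $\pi_v$ for every $v\in V$ including $t$, with the sink's net-flow constraint $\net{f}{t}$ being the objective itself; the usual LP-duality/flow-decomposition bookkeeping forces $\pi_t=1$ (the multiplier on the objective normalization), and $\pi\ge 0$ on all of $V$ since every net-flow constraint is a $\ge$ inequality.

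The third step is to translate complementary slackness into the four displayed conditions. Dual feasibility for the arc $e=(v,w)$ reads $\gamma_e\pi_w-\pi_v\ge$ or $\le 0$ depending on orientation of the bounds, and combining with the capacity-multiplier signs gives: the reduced cost must be $\le 0$ in general, $=0$ when $0<f_e<\cp_e$ (arc strictly interior, both bound multipliers zero), $\ge 0$ when $f_e=\cp_e$ (upper-bound multiplier active), which is exactly the first three lines of \eqref{eq:genflow-dual-opt}. Primal complementary slackness gives the fourth line: if a net-flow constraint is slack ($\net{f}{v}>-x_v$ for $v\in\goods$, or $\net{f}{v}>0$ for an internal node) then $\pi_v=0$. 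Finally $\pi_t=1$ is the normalization and $\pi\ge 0$ is dual feasibility of the net-flow multipliers. Conversely, any $\pi$ satisfying \eqref{eq:genflow-dual-opt} is a feasible dual solution satisfying complementary slackness with $f$, hence optimal, so $g=\pi|_\goods$ is a supergradient by Lemma~\ref{lem:lp-subgradient}.

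The main obstacle is purely bookkeeping: getting all the signs right in the dual, in particular making sure the $\ge$ direction of the net-flow constraints and the standard-form conversion used in Lemma~\ref{lem:lp-subgradient} (which is stated for $\mA z\le \mB x+b$) are matched consistently, and correctly eliminating the capacity-constraint multipliers so that the conditions come out in the clean reduced-cost form $\gamma_e\pi_w-\pi_v\ \{=,\le,\ge\}\ 0$ rather than with extra slack variables. There is no conceptual difficulty beyond Lemma~\ref{lem:lp-subgradient}; the statement is essentially "optimal dual certificates of a generalized max-flow are node potentials with nonnegative sink-normalized reduced costs," which is classical, and the only care needed is to phrase it so that it plugs directly into the subgradient characterization.
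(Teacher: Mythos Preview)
Your proposal is correct and follows exactly the approach the paper indicates: the paper simply states that the lemma follows from Lemma~\ref{lem:lp-subgradient} together with complementary slackness applied to \ref{eq:genflow-LP}, and your write-up unpacks precisely this derivation. The only content beyond the paper's one-line justification is the sign/bookkeeping you already flagged as the main thing to check.
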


In accordance with this lemma, we say that $\mf\in \R^E$ and $\bm{\pi}\in \R^V$ form a \emph{fitting pair with respect to $\mx$}, if $\mf$ is a feasible generalized flow  with supply $\mx$, and $\bm{\pi}$ satisfies \eqref{eq:genflow-dual-opt}.
Note that this implies that $\mf$ and $\bm{\pi}$ are primal and dual optimal solutions to $\Genflow(\mx)$.
 Further, $\mg\in\supg u(\mx)$ if and only if $\mg$ is the restriction of $\bm{\pi}$ to $\goods$ for a fitting pair $(\mf,\bm{\pi})$ with respect to $\mx$. 

For every arc $e=(v,w)\in E$, let us define the \emph{reverse arc} $\rev{e}=(v,w)$ with gain factor $\gamma_{\rev{e}}=1/\gamma_e$. By sending $\alpha$ units of flow on $\rev{e}$ we mean decreasing the flow value on $e$ by $\gamma_e\alpha$.
Given a fitting pair $(\mf,\bm{\pi})$ with respect to $\mx$, we define the \emph{auxiliary network} $\Gaux_{\mf,\bm{\pi}}=(V,\Eaux_{\mf,\bm{\pi}})$ by
\[
\Eaux_{\mf,\bm{\pi}}\defeq\left\{e=(v,w)\in E\, :\, \gamma_e \pi_w -\pi_v = 0\mbox{ and }f_e<\cp_e \right\}\cup \left\{\rev{e}\, :\, e\in E\, :\, \gamma_e \pi_w -\pi_v = 0\mbox{ and }0<f_e \right\}\, .
\]
In the proof of Theorem~\ref{thm:network-gale}, we use two flow subroutines in the generalized flow network. The first one starts with a fitting pair $(\mf,\bm{\pi})$ with respect to a demand vector $\mx$, and outputs a flow $\mf'$ for a decreased demand vector $\mx'$ such that the flow value  $\net{\mf'}{t}$ is decreased to the smallest possible amount while maintaining $(\mf',\bm{\pi})$ as a fitting pair.

\begin{algorithm}[H]
   \caption{\sc{Pullback-Flow}}\label{alg:pullback}
    \KwData{$\mx\in \Rp^\goods$, and a fitting pair $(\mf,\bm{\pi})$ with respect to $\mx$}
    \KwResult{A supply vector $\0\le \mx'\le \mx$ and a feasible  generalized flow $\mf'$  such that $(\mf',\bm{\pi})$ is a fitting pair with respect to $\mx'$, and subject to this, $\net{\mf'}{t}$ is minimal.}
\end{algorithm}

\begin{lemma}\label{lem:pullback-flow}
$\Pullback$ can be formulated as a linear program. For the output flow $\mf'$, the auxiliary graph $\Eaux_{\mf',\bm{\pi}}$ does not contain any directed paths from the sink $t$ to $\supp(\mx')$.
\end{lemma}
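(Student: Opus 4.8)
The plan is to first observe that the task performed by \Pullback\ is naturally a linear program, and then to analyze the structure of an optimal solution to deduce the claimed separation property in the auxiliary graph. For the LP formulation: the unknowns are the supply vector $x'$ and the flow $f'$, with constraints $0\le x'\le x$, $0\le f'\le \cp$, the net-flow conservation inequalities $\net{f'}{v}\ge -x'_v$ for $v\in\goods$ and $\net{f'}{v}\ge 0$ for $v\in V\setminus(\goods\cup\{t\})$, together with the linear equalities and inequalities from \eqref{eq:genflow-dual-opt} that encode "$(f',\pi)$ is a fitting pair with respect to $x'$" --- crucially, since $\pi$ is \emph{fixed}, every condition in \eqref{eq:genflow-dual-opt} is linear in $f'$ and $x'$ (the arcs split into those with $\gamma_e\pi_w-\pi_v=0$, on which $f'_e$ is free in $[0,\cp_e]$; those with $\gamma_e\pi_w-\pi_v<0$, on which $f'_e=0$ is forced; those with $\gamma_e\pi_w-\pi_v>0$, on which $f'_e=\cp_e$ is forced; and the node conditions $\pi_v>0 \Rightarrow \net{f'}{v}=-x'_v$ resp. $=0$). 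Minimizing the linear objective $\net{f'}{t}$ over this polyhedron is the desired LP; feasibility is witnessed by $(x,f)$ itself, and boundedness below is clear since $\net{f'}{t}\ge 0$ is not needed but the value is finite on a bounded region once $x'\le x$.

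Next I would establish the path property. Let $(x',f')$ be an optimal output and suppose for contradiction there is a directed path $P$ in $\Gaux_{f',\pi}$ from $t$ to some $s\in\supp(x')$. Along $P$, every forward arc $e=(v,w)\in E$ has $\gamma_e\pi_w-\pi_v=0$ and $f'_e<\cp_e$, and every reverse arc $\rev e$ corresponds to $e\in E$ with $\gamma_e\pi_w-\pi_v=0$ and $f'_e>0$; hence we may push a small amount $\varepsilon>0$ of generalized flow \emph{backward} from $s$ toward $t$ --- i.e., route flow along $P$ in the reverse direction --- decreasing the net flow reaching $t$, decreasing the supply used at $s$ (so we can lower $x'_s$ by the corresponding amount, keeping $0\le x'\le x$), and keeping all intermediate net flows valid. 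The key point is that this modification preserves the fitting-pair conditions: the modified arcs all lie on "tight" edges where $\gamma_e\pi_w-\pi_v=0$, so the complementary-slackness equalities in \eqref{eq:genflow-dual-opt} that apply to arcs with $0<f'_e<\cp_e$ are automatically satisfied; arcs pushed from $0$ or down from $\cp_e$ only relax their constraint; and since $\pi_s$ is allowed to be positive (node conditions at $s$) we reduce $x'_s$ in lockstep with $\net{f'}{s}$ so $\net{f'}{s}=-x'_s$ is maintained, while at internal nodes of $P$ the net flow is unchanged. This yields a feasible solution with strictly smaller $\net{f'}{t}$, contradicting optimality.

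The main obstacle I anticipate is the bookkeeping at the endpoints of the augmenting path, and in particular handling the "boundary" node conditions of \eqref{eq:genflow-dual-opt}: one must check that pushing flow backward along $P$ does not violate $\net{f'}{v}=-x'_v$ (resp. $=0$) at any node $v$ with $\pi_v>0$ on $P$ other than $s$, which is why we need $P$ to be a \emph{path} (so internal nodes see equal inflow and outflow change and their net flow is unchanged), and that at $s$ we have the freedom to decrease $x'_s$ because $x'_s>0$ by assumption $s\in\supp(x')$. A secondary subtlety is that gain factors mean the $\varepsilon$ amounts along $P$ must be rescaled arc by arc, so one fixes $\varepsilon$ small enough that every capacity bound $0\le f'_e\le\cp_e$ along $P$ remains satisfied; since $P$ is finite and each arc on it is strictly interior in the relevant direction, such an $\varepsilon$ exists. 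Once these checks go through, the contradiction is immediate and the lemma follows.
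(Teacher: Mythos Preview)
Your proposal is correct and follows essentially the same approach as the paper's proof: once $\pi$ is fixed, the fitting-pair conditions become linear (in)equalities on $(x',f')$, and the path property is established by the same augmenting-path contradiction, pushing flow along a $t$--to--$\supp(x')$ path in $\Gaux_{f',\pi}$ to strictly decrease $\net{f'}{t}$ while lowering $x'_s$. One minor wording slip: the augmentation should be routed along $P$ itself (from $t$ to $s$), not ``in the reverse direction''; your stated conclusions---that $\net{f'}{t}$ and $x'_s$ both decrease while intermediate net flows are preserved---are nonetheless correct and match the paper.
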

\begin{proof}
When fixing $\bm{\pi}$,
the requirement that $(\mf',\bm{\pi})$ be a fitting pair can be formulated by linear constraints on $\mf'$. For example, for every arc $e=(v,w)$, if $\gamma_e \pi_w-\pi_v<0$ then we require $f'_e=0$, and if $\gamma_e \pi_w-\pi_v>0$ then $f'_e=\cp_e$; if $\gamma_e \pi_w-\pi_v=0$ then any value $0\le f'_e\le \cp_e$ is allowed. For the second part, assume for a contradiction that
$\Eaux_{\mf',\bm{\pi}}$ contains a directed path $P$ from $t$ to a node $j\in\goods$ with $x'_j>0$. Then, we can send a small amount of flow along this path by maintaining the flow balance on every intermediate node, and decreasing $\net{\mf'}{t}$ and $x'_j$ by positive amounts. This contradicts the minimality of  $\net{\mf'}{t}$.
\end{proof}
\begin{remark} $\Pullback$ can be in fact  formulated as a (classical) maximum flow problem. To see this, one can use  the standard generalized flow relabeling technique, first introduced by Glover~\cite{glover1973}, see also e.g., \cite{Goldberg1991,olver2020simpler}). That is, we multiply the flow at node $v$ by $\pi_v$ and replacing $\gamma_e$ by $\gamma_e \pi_w/\pi_v$ for every $e=(v,w)\in E$. On every arc, we fix $f'_e$ to the upper or lower bound unless $\gamma_e \pi_w-\pi_v=0$, in which case the relabeled gain factor is 1.
\end{remark}

Our second subroutine aims to increase the supply at a designated node $k\in\goods$ while  the supply at all other goods is non-increasing, and the flow value $\net{\mf}{t}$ is non-decreasing. The goal is again to maintain a fitting pair $(\mf',\bm{\pi})$ for a demand vector $\mx'$.

\begin{algorithm}[H]
   \caption{\sc{Reroute-Flow}}\label{alg:reroute}
    \KwData{$\mx\in \Rp^\goods$, and a fitting pair $(\mf,\bm{\pi})$ with respect to $\mx$, and a good $k\in\goods$ such that $\pi_k>0$.}
    \KwResult{A supply vector $\mx'\in\Rp^\goods$ such that $x'_k\ge x_k$ and $x'_j\le x_j$ if $j\in \goods\setminus\{k\}$, and a feasible  generalized flow $\mf'$  such that $(\mf',\bm{\pi})$ is a fitting pair with respect to $\mx'$, $\net{\mf}{t'} \geq \net{\mf}{t}$, and subject to this,
    $x'_k$ is maximal.}
\end{algorithm}

The following lemma can be derived similarly to Lemma~\ref{lem:pullback-flow}.
\begin{lemma}\label{lem:reroute-flow}
$\Reroute$ can be formulated as a linear program. For the output flow $\mf'$, the auxiliary graph $\Eaux_{\mf',\bm{\pi}}$ does not contain any directed path from $k$ to the set $(\supp(\mx')\setminus\{k\})\cup\{t\}$.
\end{lemma}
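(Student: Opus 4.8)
The plan is to mirror the proof of Lemma~\ref{lem:pullback-flow}, adapting it to the new objective $\max x'_k$. For the linear programming formulation, we fix $\pi$ and treat the flow $f'$ and the supply $x'$ jointly as variables. Exactly as before, with $\pi$ fixed the requirement that $(f',\pi)$ be a fitting pair with respect to $x'$ becomes a system of linear constraints: on every arc $e=(v,w)$ we require $f'_e=0$ if $\gamma_e\pi_w-\pi_v<0$, $f'_e=\cp_e$ if $\gamma_e\pi_w-\pi_v>0$, and $0\le f'_e\le\cp_e$ if $\gamma_e\pi_w-\pi_v=0$; we require $\net{f'}{v}\ge-x'_v$ for $v\in\goods$, $\net{f'}{v}\ge 0$ for $v\in V\setminus(\goods\cup\{t\})$, and, at every node $v$ with $\pi_v>0$, the matching equality ($\net{f'}{v}=-x'_v$ for $v\in\goods$, $\net{f'}{v}=0$ otherwise) dictated by the complementarity in \eqref{eq:genflow-dual-opt}. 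To this system we append the linear constraints $\net{f'}{t}\ge\net{f}{t}$, $x'_k\ge x_k$, $x'_j\le x_j$ for all $j\in\goods\setminus\{k\}$, and $x'\ge 0$, with objective $\max x'_k$. This is the claimed LP, and the input pair $(f,x)$ is feasible for it, so an optimum is attained whenever the program is bounded; confirming boundedness is the one spot that needs slightly more care than in Lemma~\ref{lem:pullback-flow}, where $\min\net{f'}{t}$ was trivially bounded below by $0$.

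\textbf{The structural claim.} Let $(f',x')$ be an optimal solution and suppose, for contradiction, that $\Eaux_{f',\pi}$ contains a directed path $P$ from $k$ to some node $z\in(\supp(x')\setminus\{k\})\cup\{t\}$. First note that every node reachable from $k$ in $\Eaux_{f',\pi}$ has a strictly positive $\pi$-value: each auxiliary arc, forward or reversed, relates its endpoints by $\pi_v=\gamma_e\pi_w$ with $\gamma_e>0$, so positivity propagates from $\pi_k>0$; in particular $\pi_z>0$ and $\pi_v>0$ at every internal node $v$ of $P$. Now push a small amount $\varepsilon>0$ of flow out of $k$ along $P$: increase $f'_e$ on the forward arcs of $P$ and decrease the flow on arcs traversed in reverse, scaled along $P$ so that net flow is preserved at every internal node. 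Every arc of $P$ carries $\gamma_e\pi_w-\pi_v=0$ and has room in the relevant direction (positive residual capacity for forward arcs, positive flow for reversed ones), so for $\varepsilon$ small enough the modified flow $f''$ still satisfies all bound constraints and all sign-tied (in)equalities; together with the preserved internal net flows, $(f'',\pi)$ is a fitting pair with respect to a modified supply $x''$ described next.

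\textbf{Deriving the contradiction.} The net outflow at $k$ has grown by $\varepsilon$, and since $\pi_k>0$ forces $\net{f''}{k}=-x''_k$, we must set $x''_k=x'_k+\varepsilon$, which still satisfies $x''_k\ge x_k$. Internal nodes of $P$ keep their net flow, so $x''_v=x'_v$ there, respecting $x''_v\le x_v$. If $z=t$, the gain along $P$ telescopes to $\pi_k/\pi_t=\pi_k$, so $\net{f''}{t}=\net{f'}{t}+\pi_k\varepsilon\ge\net{f}{t}$ and all remaining coordinates of $x''$ equal those of $x'$. If $z\in\supp(x')$, then $\pi_z>0$ and the extra inflow $(\pi_k/\pi_z)\varepsilon$ at $z$ forces $x''_z=x'_z-(\pi_k/\pi_z)\varepsilon$, which lies in $[0,x_z]$ for $\varepsilon$ small (using $x'_z>0$ and $x'_z\le x_z$), while $\net{f''}{t}$ and all other coordinates are unchanged. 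In either case $(f'',x'')$ is feasible for the LP with $x''_k>x'_k$, contradicting optimality. Hence $\Eaux_{f',\pi}$ contains no such path.

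\textbf{Main obstacle.} The substance is in the last two paragraphs: checking that the path augmentation keeps $(f'',\pi)$ a fitting pair — this is precisely where the two-sided slackness built into $\Eaux_{f',\pi}$ and the fact that a path perturbation preserves internal net flows are used — and checking that the forced adjustments to $x'_k$ and to the endpoint coordinate $x'_z$ remain within the box constraints, where the strict positivity of $\pi$ on the reachable set rules out a ``dead-end'' endpoint that cannot absorb rerouted flow. The only genuinely new point relative to Lemma~\ref{lem:pullback-flow} is verifying that the LP is bounded, so that the ``maximal $x'_k$'' in the output specification is well defined.
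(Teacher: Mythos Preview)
Your proposal is correct and follows exactly the approach the paper has in mind: the paper's proof of this lemma is the single sentence ``can be derived similarly to Lemma~\ref{lem:pullback-flow}'', and you have carried out precisely that adaptation, with the same LP-plus-path-augmentation structure. Your treatment is in fact more careful than the paper's, since you explicitly handle the two endpoint cases $z=t$ and $z\in\supp(x')$, track the positivity of $\pi$ along the path, and flag the boundedness of the LP as the one point not covered by the analogy---a point the paper does not address either.
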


From Lemma~\ref{lem::gale-kkt}, we get the following.
\begin{lemma}\label{lem::gale-kkt-flow}
Consider a generalized network utility function $u\,:\, \Rp^\goods\to \Rp$,  $b>0$ and $\mq\in \FQ{u}$, and $\mx\in\Rp^\goods$.
Then,  $\mx\in\galedemand{}^u(\mathbf{q},b)$ if and only if there exists a fitting pair $(\mf,\bm{\pi})$ with respect to $\mx$ such that
$u(\mx)=\net{\mf}{t}$ and 
\begin{equation}\label{eq:gale-kkt-flow}
\pi_j \le \frac{q_j}{b}\cdot u(\mx)\, \quad\forall j\in \goods\, , \,  \text{ with equality for }j\in\supp(\mx)\, .  
\end{equation}

\end{lemma}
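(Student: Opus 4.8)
The plan is to obtain Lemma~\ref{lem::gale-kkt-flow} by specializing the general Gale--KKT characterization of Lemma~\ref{lem::gale-kkt} to generalized network utilities, using Lemma~\ref{lem:genflow-subgradient} to translate the abstract supergradient appearing there into a dual certificate of an optimal generalized flow. The bridge is the observation that ``$(f,\pi)$ is a fitting pair with respect to $x$ and $u(x)=\net{f}{t}$'' is, by construction, equivalent to ``$f$ is an optimal solution of \ref{eq:genflow-LP} for supply $x$ and $\pi|_\goods\in\supg u(x)$'': feasibility of $f$ together with \eqref{eq:genflow-dual-opt} already forces $f$ and $\pi$ to be primal- and dual-optimal, hence $\net{f}{t}=u(x)$, while Lemma~\ref{lem:genflow-subgradient} says precisely that the supergradients of $u$ at $x$ are the restrictions to $\goods$ of such vectors $\pi$.

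For the forward direction I would suppose $x\in\galedemand{}^u(\mathbf{q},b)$ and first record that $u(x)>0$: since $\mathbf{q}\in\FQ{u}$, the supremum $\sup_y b\log u(y)-\pr{\mathbf{q}}{y}$ is finite (Lemma~\ref{lem-galedbasic}), and as some allocation has positive utility this supremum is a genuine finite real number, so the maximizer $x$ cannot have $u(x)=0$. Applying Lemma~\ref{lem::gale-kkt} gives a supergradient $\mathbf{g}\in\supg u(x)$ with $g_j\le q_j u(x)/b$ for all $j\in\goods$, with equality whenever $x_j>0$. Now pick any optimal flow $f$ for \ref{eq:genflow-LP} with supply $x$ (it exists, since for a generalized network utility the optimum is attained and finite). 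By Lemma~\ref{lem:genflow-subgradient} there is $\pi\in\Rp^V$ satisfying \eqref{eq:genflow-dual-opt} with $\pi|_\goods=\mathbf{g}$, so $(f,\pi)$ is a fitting pair with respect to $x$, $\net{f}{t}=u(x)$, and substituting $\pi_j=g_j$ into the inequalities above yields exactly \eqref{eq:gale-kkt-flow}.

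For the reverse direction I would start from a fitting pair $(f,\pi)$ with respect to $x$ satisfying $u(x)=\net{f}{t}$ and \eqref{eq:gale-kkt-flow}. Feasibility of $f$ and \eqref{eq:genflow-dual-opt} make $f$ optimal for \ref{eq:genflow-LP}, so by Lemma~\ref{lem:genflow-subgradient} the vector $\mathbf{g}\defeq\pi|_\goods$ lies in $\supg u(x)$. Condition \eqref{eq:gale-kkt-flow} says $g_j\le q_j u(x)/b$ for all $j\in\goods$ with equality on $\supp(x)$, which is exactly the hypothesis of the converse part of Lemma~\ref{lem::gale-kkt}; hence $x\in\galedemand{}^u(\mathbf{q},b)$.

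I do not expect a genuine obstacle here, as the content is entirely carried by Lemmas~\ref{lem::gale-kkt} and \ref{lem:genflow-subgradient}; the only points requiring a little care are ensuring $u(x)>0$ so that the logarithm and Lemma~\ref{lem::gale-kkt} legitimately apply, keeping the index sets straight since $\pi$ is defined on all of $V$ while \eqref{eq:gale-kkt-flow} and $\supp(x)$ concern only $\goods$ (the components $\pi_v$ for $v\in V\setminus(\goods\cup\{t\})$ and the normalization $\pi_t=1$ are pinned down by \eqref{eq:genflow-dual-opt} but play no role in \eqref{eq:gale-kkt-flow}), and noting that the hypothesis $u(x)=\net{f}{t}$ is in fact automatic once $(f,\pi)$ is a fitting pair and is stated only for emphasis.
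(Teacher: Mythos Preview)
Your proposal is correct and matches the paper's approach: the paper states the lemma with only the remark ``From Lemma~\ref{lem::gale-kkt}, we get the following,'' leaving the translation via Lemma~\ref{lem:genflow-subgradient} implicit, and you have simply filled in those details. Your extra care about $u(x)>0$ and the fact that $u(x)=\net{f}{t}$ is automatic for a fitting pair are accurate observations that the paper does not bother to spell out.
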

 Theorem~\ref{thm:network-gale} follows by showing the two properties in Definition~\ref{def:gale-subs} in Lemmas~\ref{lem:budget-dec} and Lemma~\ref{lem:price-dec} below. We start by property \emph{(ii)}.

\begin{lemma}\label{lem:budget-dec}
Let $u\,:\, \Rp^\goods\to \Rp$ be a generalized network utility function,  $b>b'>0$ and $\mq\in \FQ{u}$, and let $\mx\in\galedemand{}^u(\mathbf{q}, b)$. Then, there exists $\mx'\in\galedemand{}^u(\mathbf{q}, b')$ such that $\mx'\le \mx$.
\end{lemma}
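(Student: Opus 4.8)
The plan is to use the \Pullback{} subroutine applied to the fitting pair certifying $x\in\galedemand{}^u(\mathbf{q},b)$ and then argue that its output, after possibly rescaling, certifies membership in $\galedemand{}^u(\mathbf{q},b')$. By Lemma~\ref{lem::gale-kkt-flow}, since $x\in\galedemand{}^u(\mathbf{q},b)$, there is a fitting pair $(f,\pi)$ with respect to $x$ with $u(x)=\net{f}{t}$ and $\pi_j\le \tfrac{q_j}{b}u(x)$ for all $j\in\goods$, with equality on $\supp(x)$. Apply \Pullback{} to $x$ and $(f,\pi)$, obtaining $0\le x'\le x$ and a feasible flow $f'$ with $(f',\pi)$ a fitting pair with respect to $x'$, and $\net{f'}{t}$ minimal. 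Set $v\defeq \net{f'}{t}=u(x')$; note $v\le u(x)$ by monotonicity and minimality, and in fact the main point will be that $v$ can be made to equal $\tfrac{b'}{b}u(x)$ or smaller, so that the Gale-KKT conditions hold at budget $b'$.

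The key steps, in order, are as follows. First, I would verify that $(f',\pi)$ together with $x'$ satisfies the complementary-slackness/fitting conditions, so $\pi$ restricted to $\goods$ is a supergradient of $u$ at $x'$ (Lemma~\ref{lem:genflow-subgradient}), and $u(x')=\net{f'}{t}=v$. Second, I would check the inequality part of \eqref{eq:gale-kkt-flow} at budget $b'$: for every $j\in\goods$, we need $\pi_j\le \tfrac{q_j}{b'}v$. Third — the crucial step — I would check the equality part: for $j\in\supp(x')$ we need $\pi_j=\tfrac{q_j}{b'}v$. Since $\supp(x')\subseteq\supp(x)$, for such $j$ we already know $\pi_j=\tfrac{q_j}{b}u(x)$, so the equality we want amounts to $v=\tfrac{b'}{b}u(x)$ whenever $q_j>0$ for some $j\in\supp(x')$ (note $\pi_j>0$ on $\supp(x')$ forces $q_j>0$). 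The conclusion from Lemma~\ref{lem:pullback-flow} is that $\Eaux_{f',\pi}$ has no directed path from $t$ to $\supp(x')$; I would use this, plus a flow-decomposition / scaling argument, to pin down $v$. Concretely, if $v>\tfrac{b'}{b}u(x)$, I expect to derive a contradiction with minimality of $\net{f'}{t}$ by scaling down $x'$ and $f'$ by the factor $\tfrac{b'v}{b}\big/v<1$ — wait, more carefully: scaling the whole instance $(x',f')$ down by $\lambda\defeq \tfrac{b' u(x)}{b v}$ (if $\lambda<1$) keeps $(f',\pi)$ fitting with respect to $\lambda x'\le x' \le x$ and reduces the flow value, contradicting minimality unless $\lambda\ge 1$, i.e. $v\le \tfrac{b'}{b}u(x)$. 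Conversely if $v<\tfrac{b'}{b}u(x)$ we can scale up toward $x$, again contradicting minimality unless the scaled supply exceeds $x$ somewhere or an auxiliary-path obstruction blocks it; handling this case is where the no-$t$-to-$\supp(x')$-path property of \Pullback{} does the work, guaranteeing that flow value is exactly $\tfrac{b'}{b}u(x)$ in the relevant components while equality $\pi_j = \tfrac{q_j}{b'}v$ holds. Finally, I would invoke Lemma~\ref{lem::gale-kkt-flow} in the reverse direction to conclude $x'\in\galedemand{}^u(\mathbf{q},b')$, and we already have $x'\le x$.

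The main obstacle I anticipate is the bookkeeping around the scaling argument when $u$ is \emph{satiable} and $q_j=0$ for some goods, or when the optimal flow does not naturally scale — the generalized flow value need not be homogeneous in the supply in an arbitrary instance, but it \emph{is} along the ray $\{\lambda x'\}$ because scaling a feasible flow scales capacities which are fixed; so actually one must be careful that $\lambda f'\le h$ still holds, which it does for $\lambda\le 1$, and for $\lambda>1$ one instead scales $x$ toward $x'$ rather than $f'$. The cleanest route is probably: let $v^\star\defeq$ the minimal flow value output by \Pullback{}; show $v^\star\le \tfrac{b'}{b}u(x)$ directly by exhibiting the scaled-down feasible fitting pair $(\tfrac{b'}{b}f,\pi)$ with respect to $\tfrac{b'}{b}x\le x$, which has flow value exactly $\tfrac{b'}{b}u(x)$ and satisfies all fitting conditions (scaling a fitting pair by a positive constant preserves all the sign constraints in \eqref{eq:genflow-dual-opt}); hence $v^\star\le \tfrac{b'}{b}u(x)$. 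Then the inequality $\pi_j\le\tfrac{q_j}{b}u(x) = \tfrac{q_j}{b'}\cdot\tfrac{b'}{b}u(x)$; if $v^\star=\tfrac{b'}{b}u(x)$ this immediately gives $\pi_j\le\tfrac{q_j}{b'}v^\star$ with equality on $\supp(x)\supseteq\supp(x')$, done. If $v^\star<\tfrac{b'}{b}u(x)$, then for $j\in\supp(x')$ we have $\pi_j=\tfrac{q_j}{b}u(x)>\tfrac{q_j}{b'}v^\star$ when $q_j>0$ — a potential problem — so I must argue this strict case cannot occur for $j\in\supp(x')$, which is exactly what the absence of a $t\to\supp(x')$ path in $\Eaux_{f',\pi}$ should rule out via a careful flow-decomposition of the discrepancy between $f$ and $f'$. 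Pinning down this last implication rigorously is the technical heart of the proof.
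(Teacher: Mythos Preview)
Your proposal has a genuine gap at the step you call ``the cleanest route.'' You claim that $(\tfrac{b'}{b}f,\pi)$ is a fitting pair with respect to $\tfrac{b'}{b}x$, asserting that ``scaling a fitting pair by a positive constant preserves all the sign constraints in \eqref{eq:genflow-dual-opt}.'' This is false. If an arc $e$ is saturated ($f_e=\cp_e$) with \emph{strict} dual slack $\gamma_e\pi_w-\pi_v>0$, then after scaling we have $0<\tfrac{b'}{b}f_e<\cp_e$, and the fitting condition now demands $\gamma_e\pi_w-\pi_v=0$, which fails. A small concrete instance: goods $\{s\}$, sink $t$, intermediate node $v$, arcs $(s,v)$ with $\gamma=1,\cp=\infty$, and two arcs $(v,t)$ with $(\gamma,\cp)=(3,1)$ and $(1,\infty)$. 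For $q=1$, $b=5$ one gets $x=3$, $u(x)=5$, $\pi\equiv 1$; the capacity-$1$ arc is saturated with slack $3\cdot 1-1=2>0$. Here \Pullback{} returns $v^\star=3$, whereas $(b'/b)u(x)=2$ for $b'=2$: so $v^\star>(b'/b)u(x)$, directly contradicting your step~1. A single \Pullback{} call with the original $\pi$ simply cannot push the flow below the value forced by such saturated arcs.

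You also have the case analysis inverted: $v^\star<(b'/b)u(x)$ is the \emph{easy} case (handled by a convex combination with the original $(x,f)$), while $v^\star>(b'/b)u(x)$ is the hard case that your proposal never addresses. The paper's proof resolves this by an \emph{iterative} procedure alternating between \Pullback{} and a multiplicative dual update: after \Pullback{}, one identifies the set $T$ of nodes reaching $\supp(x')$ in $\Eaux_{f',\pi}$ (which excludes $t$ by Lemma~\ref{lem:pullback-flow}) and scales $\pi$ up on $T$ by the largest admissible factor $\delta>1$. This either finishes (when $\delta$ reaches the target ratio) or creates new tight arcs crossing $\partial T$, unlocking further flow reduction in the next \Pullback{} call. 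Finiteness follows because the LP in \Pullback{} depends only on the sign pattern of $\gamma_e\pi_w-\pi_v$ and on $\supp(x)$. This dual-scaling loop is the missing idea; without it, the argument cannot be completed.
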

\begin{proof}
The proof is given by a finite algorithm, shown as Algorithm~\ref{alg:budget-dec-alg}.
 We start from fitting pair $(\mf,\bm{\pi})$ with respect to $\mx$ as in Lemma~\ref{lem::gale-kkt-flow}. 
Thus, the inequality \eqref{eq:gale-kkt-flow} holds for $\bm{\pi}$, $\mx$, $\mq$ and $b$. If $u(\mx)=0$, then we must have $\pi_j=0$ for all $j\in\goods$ and thus \eqref{eq:gale-kkt-flow} holds also when replacing $b$ by $b'$, and therefore we can return $\mx'=\mx$. For the rest of the proof, let us assume $u(\mx)>0$. Similarly, we can assume that $\mq\neq \0$. Note that $\pi_j>0$ for every $j\in\supp(\mx)\cap \supp(\mq)$.
When replacing $b$ by $b'$, all inequalities with in \eqref{eq:gale-kkt-flow} with $q_j>0$ become strict; the ratio of the right and left hand sides will be $\alpha\defeq b/b'>1$ for all $j\in\supp(\mx)\cap \supp(\mq)$. Also, recall $u(\mx)=\net{\mf}{t}$.

Algorithm~\ref{alg:budget-dec-alg} alternates between two operations: changing $\mx$ and $\mf$ while keeping $\bm{\pi}$ unchanged; and  changing the dual $\bm{\pi}$ while keeping $\mx$ and $\mf$ unchanged. Throughout, it maintains a fitting pair with respect to a decreasing vector $\mx'$, such that  the required equalities in \eqref{eq:gale-kkt-flow} are gradually restored.

In the first step, we call $\Pullback(\mx,\mf,\bm{\pi})$ to get demands $\mx'\le \mx$ and a corresponding fitting pair $(\mf',\bm{\pi})$. We have $u(\mx')=\net{\mf'}{t}\le\net{\mf}{t}=u(\mx)$. Note that $\frac{q_j}{b'}u(\mx')/\pi_j$ is the same value $\alpha'\defeq\alpha u(\mx')/u(\mx)$ for every $j\in\supp(\mx')\cap \supp(\mq)$. 

 \paragraph{Case I:} If $\alpha'\le 1$, then we choose 
 $\lambda\defeq(\alpha-1)/(\alpha-\alpha')\in [0,1]$ such that $\lambda \alpha+(1-\lambda)\alpha'=1$. Then,  $\mx''\defeq\lambda \mx'+(1-\lambda) \mx$ and the flow $\mf''\defeq\lambda \mf'+(1-\lambda) \mf$ satisfy that $(\mf'',\bm{\pi})$ is a fitting pair with respect to $\mx''$, and \eqref{eq:gale-kkt-flow} holds  for $\bm{\pi}$, $\mx''$ and $b'$. Thus, we can terminate by outputting $\mx''\in\galedemand{}^u(\mq, b')$.

\paragraph{Case II:} If $\alpha'>1$, then we update the dual $\bm{\pi}$ to $\bm{\pi}'$ as follows. Let $T\supseteq \supp(\mx')$ be the set of nodes that can reach  $\supp(\mx')$ in the auxiliary graph $(G,\Eaux_{\mf',\bm{\pi}})$. By Lemma~\ref{lem:pullback-flow}, $t\notin T$. 
Let us define $\bm{\pi}'$ such that $\pi'_v\defeq\pi_v$ for $v\notin T$ and $\pi'_v\defeq\delta' \pi_v$ for $v\in T$ for the largest value  $1\le \delta' \le \alpha'$ such that $(\mf',\bm{\pi}')$ is a fitting pair.

We claim that $\delta'>1$. Recall that fitting pairs are required to satisfy \eqref{eq:genflow-dual-opt}.
The sign of $\gamma_e \pi'_w-\pi'_v$ is the same as the sign of $\gamma_e \pi_w-\pi_v$ for all arcs $e=(v,w)$ with both endpoints inside $T$ or outside $T$.
There is no arc $e=(v,w)$ entering or leaving $T$ with $0<f'_e<u_e$, since $e,\rev{e}\in \Eaux_{\mf',\bm{\pi}}$ for all such arcs.
For every arc $e=(v,w)$ entering $T$ with $f'_e=0$, we must have $\gamma_e \pi_w-\pi_v<0$ so that $e\notin \Eaux_{\mf',\bm{\pi}}$. Similarly, for every arc $e=(v,w)$ leaving $T$ with $f'_e=u_e$, we must have $\gamma_e \pi_w-\pi_v>0$. Hence, we can always choose $\delta'>1$.
If $\delta'=\alpha'$, then $\bm{\pi}'$, $\mx'$, $\mq$ and $b'$ satisfy \eqref{eq:gale-kkt-flow}, since the right hand side for all $j\in\supp(\mx')$ has increased by a factor $\alpha'$, and therefore equality holds for all these indices. We can therefore conclude $\mx'\in\galedemand{}^u(\mq, b)$ and terminate by outputting $\mx'$.

In case $\delta'<\alpha'$, then \eqref{eq:gale-kkt-flow} holds with strict inequality for all $j$ for $\bm{\pi}'$, $\mx'$, $\mq$, and $b'$. Moreover, the ratio of the right hand side and the left hand side is the same for all $j\in\supp(\mx')\cap \supp(\mq)$, namely, $\alpha'/\delta'<\alpha$. We now restart the above procedure with $(\mf',\bm{\pi}')$ and $\mx'$ in place of $(\mf,\bm{\pi})$ and $\mx$.

We claim that the algorithm terminates in a finite number of iterations. This holds because the linear program $\Pullback$ only depends on the sign pattern of $\gamma_e \pi_w-\pi_v$ for the arcs $e=(w,v)$ and on $\supp(\mx)$. There are $3^{|E|}$ possible sign patterns; the set $\supp(\mx)$ is monotone decreasing and therefore there are at most $|\goods|$ different possibilities.  The same configuration of arc sign patterns and $\supp(\mx)$ cannot occur twice, because 
$u(\mx')=\net{\mf'}{t}$ is strictly decreasing in every iteration of the algorithm.
 Hence, after a finite number of iterations we arrive at $\bm{\pi}'$ and $\mx'$ satisfying \eqref{eq:gale-kkt-flow} for $b'$. Since $\mx'$ is decreasing throughout, we have $\mx'\le \mx$ for the input supply vector $\mx$, as required.
\end{proof}
\begin{algorithm}[htb]
   \caption{\sc{Budget-Decrease}}\label{alg:budget-dec-alg}
    \KwData{$\mq\in \FQ{u}$, $b,b'>0$, $\mx\in \galedemand{}^u(\mq, b)$.}
    \KwResult{$\0\le \mx'\le \mx$ such that $\mx'\in \galedemand{}^u(\mq, b')$.}
    Find a fitting pair $(\mf,\bm{\pi})$ with respect to $\mx$ such that $u(\mx)=\net{\mf}{t}$ and \eqref{eq:gale-kkt-flow} holds\;
    \lIf{$u(\mx)=0$ or $\mq=\0$}{\Return{$\mx$}}
    $\bar \mx\gets \mx$ ; $\alpha\gets b/b'$ \;
    \While{$\alpha>1$}{
    $(\mx',\mf')\gets \Pullback(\bar \mx,\mf,\bm{\pi})$ \;
    $\alpha'\gets \alpha\frac{u(\mx')}{u(\mx)}$ \;
    \If{$\alpha'\le 1$}{
    $\lambda\gets\frac{\alpha-1}{\alpha-\alpha'}$ \;
    $\mx''\gets\frac{\alpha-1}{\alpha-\alpha'} \mx'+\frac{1-\alpha'}{\alpha-\alpha'} \bar \mx$ \; 
    \Return{$\mx''$}
    }\Else{
    $T\gets\left\{v\in V\, :\, v\text{ can reach }\supp(\mx')\text{ in }(G,\Eaux_{\mf',\bm{\pi}})\right\}$\, \;
    $\delta'\gets\min\left\{\alpha,\min\left\{\frac{\gamma_e\pi_w}{\pi_v}\, :\, e=(v,w)\in \din(T)\right\},\min\left\{\frac{\pi_v}{\gamma_e\pi_w}\, :\, e=(v,w)\in \dout(T)\right\}\right\}$\;
   \lIf{$\delta'=\alpha'$}{\Return{$\mx'$}}
    \lFor{$v\in T$}{$\pi_v\gets\delta'\pi_v$}
    $\alpha\gets \alpha'/\delta'$ ; $\bar \mx\gets \mx'$ \;
    }}
\end{algorithm}
We next prove property \emph{(i)} in Definition~\ref{def:gale-subs}. 

\begin{restatable}{lemma}{networkpricedec}\label{lem:price-dec}
Let $u\,:\, \Rp^\goods\to \Rp$ be a generalized network utility function and $\mq,\mq'\in \FQ{u}$ such that $\0\le\mq'\le \mq$, and let $\mx\in\galedemand{}^u(\mathbf{q}, b)$. Then, there exists $\mx'\in\galedemand{}^u(\mathbf{q}', b)$ such that $x'_j\le x_j$ whenever $q'_j=q_j$. 
\end{restatable}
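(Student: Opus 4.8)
The plan is to prove property \emph{(i)} of Definition~\ref{def:gale-subs} by a finite flow-rerouting algorithm, in the same spirit as the proof of Lemma~\ref{lem:budget-dec}, but organized around the subroutine $\Reroute$ rather than $\Pullback$. First I would reduce to the case where $\mathbf{q}$ and $\mathbf{q}'$ differ in a single coordinate. List the coordinates $k_1,\dots,k_m$ with $q'_{k_\ell}<q_{k_\ell}$ and form the chain $\mathbf{q}=\mathbf{r}^{(0)}\ge\mathbf{r}^{(1)}\ge\dots\ge\mathbf{r}^{(m)}=\mathbf{q}'$, where $\mathbf{r}^{(\ell)}$ agrees with $\mathbf{q}'$ on $k_1,\dots,k_\ell$ and with $\mathbf{q}$ elsewhere. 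Each $\mathbf{r}^{(\ell)}$ lies in $\FQ{u}$, since $\mathbf{q}'\le\mathbf{r}^{(\ell)}$ gives $\pr{\mathbf{r}^{(\ell)}}{y}\ge\pr{\mathbf{q}'}{y}$ for $y\ge\0$, so the Gale value at $\mathbf{r}^{(\ell)}$ is at most that at $\mathbf{q}'$, which is finite by Lemma~\ref{lem::gale-domain-property}\eqref{lem::arb-budget}. Applying the single-coordinate case to each consecutive pair, and noting that each such step leaves the supplies of every good other than $k_\ell$ non-increasing, the resulting $x'$ satisfies $x'_j\le x_j$ for every $j$ with $q'_j=q_j$.

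For a single-coordinate step we decrease $q_k$ to $q'_k<q_k$. Start from a fitting pair $(f,\pi)$ with respect to $x$ with $u(x)=\net{f}{t}$, as supplied by Lemma~\ref{lem::gale-kkt-flow}, so that \eqref{eq:gale-kkt-flow} holds for $\mathbf{q}$ and $b$. Lowering $q_k$ can only violate \eqref{eq:gale-kkt-flow} at the coordinate $k$, where the requirement $\pi_k\le\tfrac{q'_k}{b}u(x)$ (with equality if $x_k>0$) may now fail because $\pi_k>\tfrac{q'_k}{b}u(x)$; if it does not fail, output $x$. Otherwise the algorithm alternates, always keeping a fitting pair: \emph{(a)} a call to $\Reroute$ at the good $k$, which by Lemma~\ref{lem:reroute-flow} leaves no directed path from $k$ to $(\supp(x)\setminus\{k\})\cup\{t\}$ in the auxiliary graph --- this either merely shifts supply between $k$ and another already-supported good, leaving $\net{f}{t}$ and hence $u(x)$ unchanged, or additionally pushes flow from $k$ to $t$, raising $u(x)$ and thus raising the threshold $\tfrac{q'_k}{b}u(x)$ toward $\pi_k$; and \emph{(b)} a rescaling of the potentials $\pi$ on the set $R$ of nodes reachable from $k$ in the auxiliary graph. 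Because after \emph{(a)} the set $R$ excludes $t$ and every supported good other than $k$, the rescaling in \emph{(b)} can be carried out so as to preserve the fitting-pair conditions and the equalities of \eqref{eq:gale-kkt-flow} at all goods $j\ne k$ while decreasing $\pi_k$; it proceeds until either equality is restored at $k$ --- in which case a convex-combination step, exactly as in Case~I of Lemma~\ref{lem:budget-dec}, produces the element of $\galedemand{}^u(b,\mathbf{q}')$ below $x$ --- or an arc crossing the boundary of $R$ becomes tight, which changes the configuration and the set $R$, after which the loop repeats. Termination follows as in Lemma~\ref{lem:budget-dec}: the configuration, namely the sign pattern of $\gamma_e\pi_w-\pi_v$ over the arcs together with $\supp(x)$, cannot recur, with $u(x)=\net{f}{t}$ serving as the monotone progress measure. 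The boundary case $q'_k=0$ is obtained by a limiting argument through Lemma~\ref{lem::gale-domain-property}\eqref{part:limit-point}.

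The step I expect to be the main obstacle is move \emph{(b)}: verifying that, after each $\Reroute$ call, the rescaling of $\pi$ can indeed be performed so as to maintain a fitting pair with respect to the (still non-increasing) supply vector while never disturbing the Gale-KKT equalities at goods $j\ne k$, and that every pass through the loop strictly changes the configuration so that the finiteness argument applies. The remaining ingredients --- the reduction to a single coordinate, the LP-duality/supergradient characterisations from Lemma~\ref{lem:genflow-subgradient} and Lemma~\ref{lem::gale-kkt-flow}, and the final interpolation --- are routine adaptations of arguments already used for Lemma~\ref{lem:budget-dec}.
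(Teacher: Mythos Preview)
Your overall architecture --- reduce to a single coordinate $k$, then alternate $\Reroute$ calls with multiplicative rescaling of $\pi$ on the set reachable from $k$ --- is exactly what the paper does. The gap is in how the loop terminates. You claim that step \emph{(b)} can be carried out so as to preserve the equalities of \eqref{eq:gale-kkt-flow} at all goods $j\ne k$, but this fails precisely when $\Reroute$ has raised $u(x)$: for $j\in\supp(x')\setminus\{k\}$ the potential $\pi_j$ stays fixed at $\tfrac{q'_j}{b}\,u(x^{(0)})$ (since such $j\notin R$), while the right-hand side $\tfrac{q'_j}{b}\,u(x')$ has strictly increased, so the equality becomes the strict inequality $\pi_j<\tfrac{q'_j}{b}\,u(x')$. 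A convex combination in the style of Case~I of Lemma~\ref{lem:budget-dec} cannot repair this: any interpolation between the pre- and post-$\Reroute$ flows that restores equality at some $j\ne k$ forces $u$ back to $u(x^{(0)})$ and hence makes no progress at $k$.

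The paper resolves this by not aiming for \eqref{eq:gale-kkt-flow} at budget $b$ during the loop. It fixes $\beta\defeq u(x^{(0)})/b$ once and rescales only until $\pi'_j=\beta q'_j$ for \emph{all} $j\in\supp(x')$, including $k$; the values $\pi'_j$ at goods $j\ne k$ are never touched, so this is consistent. Since $u(x')\ge u(x^{(0)})$, the Gale-KKT conditions are then satisfied for the larger budget $\hat b\defeq u(x')/\beta\ge b$, i.e., $x'\in\galedemand{}^u(\hat b,\mathbf{q}')$ with $x'_j\le x_j$ for all $j\ne k$. The proof then concludes by invoking the already-proved Lemma~\ref{lem:budget-dec} to pass from $\hat b$ down to $b$, obtaining $x''\le x'$ with $x''\in\galedemand{}^u(b,\mathbf{q}')$. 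This final application of Lemma~\ref{lem:budget-dec} is the missing ingredient in your plan. (The paper also treats $q'_k=0$ directly within the rescaling --- allowing $\pi'_v\to 0$ on the reachable set --- rather than via a limiting argument.)
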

\begin{proof} 
Clearly, it suffices to show that the statement holds when $q'_k<q_k$ for a single good $k\in \goods$.  Again, we start from fitting pair $(\mf,\bm{\pi})$ with respect to $\mx$ as in Lemma~\ref{lem::gale-kkt-flow}. Similarly to Lemma~\ref{lem:budget-dec}, we may assume that $u(\mx)>0$ and $\mq\neq\0$. Let us now replace $\mq$ by $\mq'$. 
The condition \eqref{eq:gale-kkt-flow} can only be violated for $j=k$.
If it is not violated, then we may return $\mx'=\mx$; this is always the case when $\pi_k=0$. For the rest of the proof, assume that the  \eqref{eq:gale-kkt-flow} is violated for $k$ with the cost $q'_k$.
Note that $\pi_j=\beta q'_j$ for every $j\neq k$, where $\beta\defeq u(\mx)/b$; we now have $\pi_k>\beta q'_k\ge0$. This value of $\beta$ will be fixed throughout the argument.

We first consider the case $q'_k>0$.
The proof uses the subroutine $\Reroute(\mx,\mf,\bm{\pi})$ with the special node $k$. This subroutine finds $\mx'$ and $\mf'$ such that $(\mf',\bm{\pi})$
 is a fitting pair with respect to $\mx'$, and further $x'_k\ge x_k$,  $x'_j\le x_j$ for $j\neq k$, and $u(\mx')=\net{\mf}{t'}\ge\net{\mf}{t}=u(\mx)$.  Consider the set of nodes $T$ reachable from $k$ in $\Eaux_{\mf',\bm{\pi}}$. According to Lemma~\ref{lem:reroute-flow}, $t\notin T$ and $T\cap \supp(\mx')=\{k\}$. Similarly as Case II in the proof of Lemma~\ref{lem:budget-dec}, we can decrease $\pi_v$ multiplicatively on $T$ to obtain $\bm{\pi}'$ such that $(\mf',\bm{\pi}')$ is a fitting pair with respect to $\mx'$. We choose the largest multiplicative decrease such that $(\mf',\bm{\pi}')$ remains a fitting pair, and that $\pi_k'\ge \beta q_k'$ is maintained.  We then repeat this procedure for $\mx',\mf'$, and $\bm{\pi}'$ in the place of $\mx,\mf$, and $\bm{\pi}$.

 We can again argue that this algorithm terminates in a finite number of iterations. Note that $\supp(\mx')$ is monotone decreasing throughout, and $\pi'_j=\beta q'_j$ is maintained for all $j\in\supp(\mx')\setminus\{k\}$. For goods with $x'_j=0$, $\pi'_j$ may decrease, and therefore $\pi'_j\le \beta q_j$ for these goods.
At termination, we have $\pi'_j=\beta q'_j$ for all $j\in \supp(\mx')$ --- including $j=k$ --- and $\pi'_j\le \beta q'_j$ whenever $x'_j=0$. Moreover, note that $x'_j\le x_j$ for all $j\neq k$.

Throughout the algorithm, $u(\mx')$ is non-decreasing. Hence, $u(\mx')/b\ge \beta$ holds. By setting $\hat b>b$ such that $u(\mx')/\hat b=\beta$, \eqref{eq:gale-kkt-flow} holds for $\mx'$, $\bm{\pi}'$, $\mq'$ and $\hat b$. Hence, $\mx'\in \galedemand{u}(\mq', \hat b)$. Applying now Lemma~\ref{lem:budget-dec} for $\mx'$, the prices $\mq'$ and budgets $\hat b\ge b$, we can find $\mx''\in \galedemand{u}(\mq',b)$ such that $\mx''\le \mx'$. Note that $x''_j\le x'_j\le x_j$ for all $j\neq k$, and therefore $\mx''$ satisfies the requirements of the Lemma.

\medskip

Consider now the case $q'_k=0$. We proceed as above, but our goal is to eventually set $\pi'_k=0$. As in the previous case, we repeatedly call $\Reroute(\mx,\mf,\bm{\pi})$ and then decrease $\pi'_v$ multiplicatively on $T$ by the largest possible factor such that $(\mf',\bm{\pi}')$ remains a fitting pair. If this largest factor is unbounded, then we can set $\pi'_v=0$ for all $v\in T$. Again, the algorithm should terminate in a finite number of iterations, using also the assumption that $\mq'\in\FQ{u}$. Once $\pi'_k=0$ is reached, we again have
$\pi'_j=\beta q'_j$ for all $j\in \supp(\mx')$, including $j=k$, and $\pi'_j\le \beta q'_j$ whenever $x'_j=0$. The rest of the proof is the same as in the case $q'_k>0$.
 \end{proof}

 This concludes the proof of Theorem~\ref{thm:network-gale}, namely, that all generalized network utilities are Gale-substitutes. Nevertheless, they are not $\Sigma$-Gale-substitutes, and so Theorem~\ref{thm::main-result} is only applicable in the case when the generalized network utilities are non-satiable. We conclude by showing  Theorem~\ref{thm::satiate-gale}. This is a modification of the argument in Section~\ref{sec:main-proof}; we omit some simple details.
 
\begin{proof}[Proof  of Theorem~\ref{thm::satiate-gale}]
In the proof of Theorem~\ref{thm::main-result}  in Section~\ref{sec:main-proof}, property \eqref{prop:satiate} is only used in the proof of Lemma~\ref{lem:gale-demands}. We show that for generalized network utilities, if $\mx\in\demand{}^u(\mp,b)$ for $\mp\in\Rp^\goods$ and further $\mx$ minimizes $\pr{\mp}{\mx}$ in $\demand{}^u(\mp,b)$, then there exists $\my\in \galedemand{}^u (\mathbf{p}, b)$ such that $\my \leq \mx$. From this property, we can derive Lemma~\ref{lem::bounds-aggregate} for thrifty competitive equilibria; the rest of the proof remains identical to the proof of Theorem~\ref{thm::main-result}.

Consider now \eqref{eq:kkt-gale} in the proof of Lemma~\ref{lem:gale-demands}. If $\lambda>0$, the same argument is applicable. Consider now the case $\lambda=0$. This means that the constraint $\pr{\mp}{\mx}\le b$ is not active in the program defining $\demand{}^u(\mp,b)$; that is, $\max_{\mz\in\Rp^\goods}\{u(\mz)\, :\, \pr{\mp}{\mz}\le b\}=\max_{\mz\in\Rp^\goods} u(\mz)=v^\star$. In particular, $u(\mx)=v^\star$. The budget minimization property means that $\mx$ is an optimal solution to the program
\begin{equation}\label{eq:genflow-LP-satiate}
\begin{aligned}
\min~& \mp^\top \mx \\
\net{\mf}{t}&\ge v^\star\\
\net{\mf}{v}&\ge -x_v\quad\forall v\in S\\
\net{\mf}{v}&\ge 0\quad\forall v\in  V\setminus (S\cup \{t\})\\
\0&\le \mf\le \mh\, .
\end{aligned}
\end{equation}
From here, using LP duality and Lemma~\ref{lem:lp-subgradient}, we can see that there exists a supergradient $\mg\in \supg u(\mx)$ and $\mu\in\Rp$ such that 
  \[ \mu g_{j} \leq p_j\, , \quad\forall j\in \goods\,,\,\mbox{with equality  if }x_{j} > 0\, .\]
If $\mu>0$, then we can use the same argument for $\lambda=1/\mu$. If $\mu=0$, given that $\mu$ acts as the dual variable of $\net{\mf}{t}\ge v^\star$, which is no longer active, the optimal value of \eqref{eq:genflow-LP-satiate} becomes $0$ since $\mathbf{0}$ is a feasible allocation of $u(\cdot)$ according to our definition. 
This implies the existence of a zero-cost optimal allocation $\mx$: $u(\mx) = v^*$ and $\mp^\top \mx = 0$, which indicates that $\mx$ belongs to both $\demand{}^u(\mp,b)$ and $\galedemand{}^u(\mp,b)$. Thus, the result follows.
\end{proof}

\section{Examples}\label{sec:egs}
In this section, we present multiple examples mentioned in the paper. 

\subsection{Connectivity of equilibria with bounded linear utilities}\label{sec:conn}

We first investigate the (dis)connectivity of competitive equilibria with bounded linear utilities. 
Our initial example illustrates that the set of thrifty competitive equilibria may be disconnected, even when agents have bounded linear utility functions, where ``thrifty" denotes that, given the prices, each agent selects the least costly bundle from the set of utility-maximizing bundles.

\begin{example}[Disconnectivity of thrifty competitive equilibrium]\label{eg:1} 
There are two agents and two goods. Each agent has a unit budget. The  agents' utility functions are
\[
\begin{aligned}
u_1(\mx_1) &= 1.3 \min \{x_{11}, 0.8\} + 0.45 x_{12}\,  ,\\
 u_2(\mx_2) &= 0.01 \min\{x_{21}, 0.3\} + 2 \min\{x_{22}, 0.8\}\, .
 \end{aligned}
 \]
First, we show that there are at least two equilibria. The first equilibrium is 
\[
p_1=1\,,\quad p_2=1\, , \quad x_{11} = 0.8\, ,\quad x_{12} = 0.2\, ,\quad x_{21} = 0.2\, ,\quad x_{22} = 0.8\, .\] Another competitive equilibrium is 
\[
p_1=1.3\,,\quad p_2=0.45\, , \quad x_{11} = 0.7\, ,\quad x_{12} = 0.2\, ,\quad x_{21} = 0.3\, ,\quad x_{22} = 0.8\, .\]
We show the second competitive equilibrium is isolated under the thrifty assumption.

First, we show the price is isolated. Suppose there is a nearby equilibrium price $\mathbf{p}' = (1.3 + \varepsilon_1, 0.45 + \varepsilon_2)$. For this price, the thrifty demand of the second agent remains the same: $x_{21} = 0.3$, $x_{22} = 0.8$. To maintain equilibrium, the first agent should receive the same allocation. As $1.3 \times 0.7 + 0.45 \times 0.2 = 1$, $\varepsilon_1$ and $\varepsilon_2$ cannot be both positive or negative. However, if  $\varepsilon_j<0$, the first agent will prefer good $j$ more at $\mathbf{p}'$. This implies the allocation of the other good of the first agent is $0$. This leads to a contradiction.

Second, we show the allocation is isolated. 
Note that the second agent is currently satiated.
Suppose there is a nearby equilibrium allocation. Agent 2 must have a different allocation, which can only be smaller. Thus,  $\mx'_2 = (x_{21} - \varepsilon_1, x_{22} - \varepsilon_2)$ with  $\varepsilon_1, \varepsilon_2\ge 0$, and at least one of them is strictly positive. In the allocation $\mx'$, both agents are non-satiated, and therefore they are spending their full budgets.  Therefore, the equilibrium prices satisfy $p'_1 + p'_2 = 2$. For the first agent, the allocation is $x'_{11} = 0.7+\varepsilon_1$, $x'_{12} = 0.2 + \varepsilon_2$. Note that both $\varepsilon_1$ and $\varepsilon_2$ are very tiny. This implies $p'_1 : p'_2 \approx 1.3 : 0.45$. Therefore, $p'_1 \approx \frac{2}{1.3 + 0.45} \times 1.3$ and $p'_2 \approx \frac{2}{1.3 + 0.45} \times 0.45$. However, the first agent cannot afford these allocations under these prices.
\end{example}

Nevertheless, it remains possible that thrifty competitive equilibria are connected through intermediate equilibria that are not thrifty. However, with a slight modification of the previous example, the next example shows that competitive equilibria are not connected without the thrifty assumption.  
\begin{example}[Disconnectivity of competitive equilibria] \label{eg:non-thrifty}
There are three agents and three goods. Each agent has a unit budget.  The utility functions are
\[
    \begin{aligned}
    u_1(\mx_1) &= 1.3 \min \{x_{11}, 0.8\} + 0.45 x_{12}\, ,\\
    u_2(\mx_2) &= 0.01 \min\{x_{21}, 0.3\} + 2 \min\{x_{22}, 0.8\} + 10^{-10} x_{23}\, ,\\
    u_3(\mx_3) &= x_{33}\, .
\end{aligned}  
\] 
First, we show there are at least two equilibria.
The first equilibrium is with $\mathbf{p} = (1, 1, 1)$, and the allocation is $\mx_1 = (0.8, 0.2, 0)$, $\mx_2 = (0.2, 0.8, 0)$, and $\mx_3 = (0, 0, 1)$.
The second equilibrium is with $\mathbf{p} = (1.3, 0.45, 1.25)$ and the allocation is $\mx_1 = (0.7, 0.2, 0)$, $\mx_2 = (0.3, 0.8, 0.2)$, and $\mx_3  = (0, 0, 0.8)$.

In order to show our results, we first state the following properties: for any equilibrium, 
\begin{enumerate}
    \item $p_3 > 0$ as the third agent is only interested in the third good;
    \item  $x_{13} = 0$ as the first agent shows no interest in the third good and its price is strictly positive;
    \item $p_1 > 0$ and $p_2 > 0$ as, otherwise, these goods will be over-demanded;
    \item $x_{31} = x_{32} = 0$ as the third agent shows no interest in the first two goods and their prices are strictly positive;
    \item $p_1 + p_2 + p_3 = 3$ as all agents are non-satiated.
\end{enumerate}

Now, we show the second equilibrium is isolated.

First, we demonstrate that the price is isolated. Suppose there is a nearby equilibrium price $\mathbf{p}' = (1.3 + \varepsilon_1, 0.45 + \varepsilon_2, 1.25 + \varepsilon_3)$. For this price, the allocation of the second agent on the first two goods is the same. Therefore, the first agent should receive the same allocation. As $1.3 \times 0.7 + 0.45 \times 0.2 = 1$, $\varepsilon_1$ and $\varepsilon_2$ cannot be both positive or negative. Similar to the previous example, if one of $\varepsilon_1$ and $\varepsilon_2$ is negative, then the first agent will prefer this good more at this nearby price. This implies the allocation of the other good of the first agent is $0$. This leads to the conclusion $\varepsilon_1 = \varepsilon_2 = \varepsilon_3 = 0$.

Second, we show the allocation is isolated. Suppose there is a nearby equilibrium allocation $\mathbf{x}'$ with the equilibrium $\mathbf{p}'$. As $x_{11} < 0.8$, $p'_1 : p'_2 = 1.3 : 0.45$.\footnote{Otherwise, the allocation of the first agent will be changed significantly.} Additionally, $p'_1 \times x'_{11} + p'_2 \times x'_{12} = 1$. This implies $p_1'$ is very close to $p_1$ and $p_2'$ is very close to $p_2$: if $\| \mx' - \mx\|_{\infty} \leq \varepsilon$, then $|p_1' - p_1| \leq O(\varepsilon)$ and $|p_2' - p_2| \leq O(\varepsilon)$.\footnote{Note that $p'_1 : p'_2 = 1.3 : 0.45 = p_1 : p_2$. By budget constraint, $\frac{1.3}{0.45} p'_2 x'_{11} + p'_2 x'_{12} = \frac{1.3}{0.45}p_2 x_{11} + p_2 x_{12}$. Therefore, $p_2' - p_2 = \frac{p'_2 (\frac{1.3}{0.45} (x_{11} - x'_{11}) + (x_{12} - x'_{12}))}{\frac{1.3}{0.45} x_{11} + x_{12}}$. We have a similar result for $p_1' - p_1$.} This also implies $|p_3' - p_3| \leq O(\varepsilon)$ as $p'_1 + p'_2 + p'_3 = 3$. Since price is isolated, $p'_1 = p_1$,  $p'_2 = p_2$, and $p'_3 = p_3$. Under this price, $\mathbf{x}$ is the only allocation.
\end{example}

\subsection{Non-monotonicity of Gale Demands}\label{sec:ex}
The next example shows the somewhat counterintuitive property that increasing the prices may lead to an increase in the Gale utility. 
\begin{example}[Non-monotonicity of Gale demands]\label{eg:mono}
Consider a market with two goods and a single agent with a unit budget, $b = 1$. The agent's utility function is 
$$u(\mx) = \min\{ x_{1} + x_{2}, 2 + 0.1 \cdot x_{1} + 0.2 \cdot x_{2}\}.$$ 

Consider the price vector: 
$$\mathbf{q} = \left(\frac{0.1}{2 / 0.9 + 0.2 \cdot 0.01}, \frac{0.2}{2 / 0.9 + 0.2 \cdot 0.01}\right).$$
At this price, the agent's Gale demand and corresponding utility are
$$\galedemand{}(\mathbf{q},b) = \{(2 / 0.9, 0.01)\}, \quad u(\galedemand{}(\mathbf{q},b)) = 2 / 0.9 + 0.2 \cdot 0.01.$$ 

Now, consider a higher price vector: $$\mathbf{q}' = \left(\frac{1}{0.01  + 2 / 0.85}, \frac{1}{0.01 + 2 / 0.85}\right).$$ 
Under price $\mathbf{q}'$, the agent's Gale demand and utility become
$$\galedemand{}(\mathbf{q}',b) = \{(0.01, 2 / 0.85)\} \quad u(\galedemand{}(\mathbf{q}',b)) = 2 / 0.85 + 0.01.$$ 

Clearly, $\mathbf{q} < \mathbf{q}'$ component-wise, yet we observe that $u(\galedemand{}(\mathbf{q},b)) <  u(\galedemand{}(\mathbf{q}', b))$.
\end{example}
It is also worth noting that such a case cannot arise under separable utilities (see the proof of Lemma~\ref{lem::sep-def-1} for further details).

\subsection{(In)-approximate Nash Welfare utility by Equilibrium utility}
Theorem~\ref{thm::main-result}  shows that  for $\Sigma$-Gale-substitutes utilities, $u_i(\my_i)\ge \frac{1}{2}u_i(\mx_i)$ for the Nash welfare maximizing allocation $(\my_i)_{i\in \agents}$ and CE allocation $(\mx_i)_{i\in \agents}$. One may ask the converse: namely,
for any agent $i$, is there a CE $(\mx_i)_{i\in \agents}$ such that $u_i(\mx_i)\ge  C \cdot u_i(\my_i)$ for some constant $C>0$\footnote{We note that the utility achieved by an agent in a Nash welfare maximizing allocation is unique, since the logarithm of the Nash welfare is a strictly concave function.}?  We show a constant approximation is not achievable by providing an example in which the CE is unique and there exists an agent whose utility at this CE is much worse than his/her utility at Nash welfare maximizing allocation. 

\begin{example}\label{example::E4}
Consider a market with $n + 2$ agents and $n + 1$ goods. Let $\varepsilon < 1$ be a small constant. For each agent $1 \leq i \leq n$, the utility function is given by:
\[
u_i(\mx_i) = \min\{x_{ii}, 0.5\} + \varepsilon x_{i(n+1)}.
\]
For the remaining two agents, define their utilities as:
\[
u_{n+1}(\mx_{n+1}) = x_{(n+1)(n+1)}, \quad \text{and} \quad u_{n+2}(\mx_{n+2}) = \sum_{j = 1}^{n} x_{(n+2)j}\, .
\]

We claim that the unique competitive equilibrium occurs with prices:
\[
p_j = \frac{2}{n} \quad \text{for } 1 \leq j \leq n, \qquad p_{n+1} = n.
\]
At this equilibrium, the allocation is as follows: 
\begin{itemize}
    \item for agent $i$ such that $1 \leq i \leq n$:
$x_{ii} = 0.5$ and $x_{i(n+1)} = \frac{n - 1}{n^2}$;
    \item for agent $n+1$: $x_{(n+1)(n+1)} = \frac{1}{n}$;
    \item for agent $n+2$: $x_{(n+2)j} = 0.5$ for  $1 \leq j \leq n$.
\end{itemize}

We now show that this is the only competitive equilibrium.

First, for all $1 \leq j \leq n$, $p_j > 0$, since otherwise agent $n+2$ could obtain an unbounded amount of good $j$, violating feasibility. Given $p_j > 0$, and since $x_{(n+2)j} \geq 0.5$ for all $1 \leq j \leq n$, the first $n$ goods share the same prices, and, additionally,  the total spending of agent $n+2$ satisfies: $ 1 = \sum_{j = 1}^n x_{(n+2)j} p_j \geq  0.5 \cdot p_j \cdot n$, which implies $p_j \leq \frac{2}{n}$. We now show $p_j = \frac{2}{n}$. If $p_j < \frac{2}{n}$, then, given $\sum_j p_j = \sum_i b_i$,  $p_{n+1} = \sum_i b_i - \sum_{j=1}^n p_j > n$. In this case, agents $1 \leq i \leq n$ receive $x_{ii} = 0.5$, and therefore $x_{(n+2)i} = 1 - 0.5 = 0.5$ for all $1 \leq i \leq n$. This implies $0.5 \cdot p_j \cdot n = 1$, which leads back to $p_j = \frac{2}{n}$.

Therefore, this pricing and allocation define the unique competitive equilibrium.

Finally, observe that agent $n+1$ receives utility $\frac{1}{n}$ in this equilibrium. In contrast, under the allocation that maximizes Nash welfare, for sufficiently small $\varepsilon$, agent $n+1$ would receive utility $1$. This illustrates that the utility an individual agent receives in a Nash welfare maximizing allocation can be substantially higher than in the corresponding market equilibrium.
\end{example}

\subsection{(In)-approximation for general utilities}
Our next example demonstrates if the utility function is not Gale substitutes, then agents' competitive equilibrium utilities can be much higher than the Nash welfare maximizing solution.
\begin{example}\label{example:non-gale}
     We consider an economy with three types of agents (A, B, C): 
     \[ \text{$1$ type-A agent,\quad $n^2$ type-B agents,\quad $n$ type-C agents;}\] 
     and four types of goods (O, P, R, S): 
     \[ \text{$1$ type-O good,\quad $n$ type-P goods,\quad $n^2$ type-R goods,\quad $n$ type-S goods.}\] In total, there are $N = n^2 + n + 1$ agents and $M = n^2 + 2n + 1$ goods.

The \textbf{type-A agent} is interested in the type-O good and type-P goods with equal preferences: $v_{ij} = 1$ for $i = A$ and $j \in O \cup P$. However, the utility from each good is capped at $1$, so the utility is given by: $$u_{A}(\mx_A) = \sum_{j \in O \cup P} \min \{x_{Aj}, 1\}.$$

The \textbf{type-B agents} are indexed by pairs $(s, t)$ where $s, t \in [n]$, for a total of $n^2$ agents. Similarly, the type-R goods are indexed by $(s, t)$.  Each type-B agent $(s, t)$ is interested in 
\begin{itemize}
    \item the $s$-th type-P good, and
    \item the type-R good with index $(s,t)$.
\end{itemize}
The utility function of type-B agent with index $i = (s, t)$ is $$u_i(\mx_i) = \min \{ x_{i P_s} + \frac{n}{n+1}, x_{i R_{s, t}}\},$$ 
where $x_{i P_s}$ is agent $i$'s allocation of the $s$-th type-P good and $x_{i R_{s, t}}$ is agent $i$'s allocation of the type-R good with index $(s,t)$.

The \textbf{type-C agents} are indexed by $s \in [n]$. Each type-C agent with index $s$ is only interested in the $s$-th type-S good and all type-R goods with index $(s, \cdot)$, and the utility from each good is capped at $1$. The utility function type-C agent with index $i = s$ is 
$$u_i(\mx_i) = \sum_{j \in {S_s} \cup {R_{(s, \cdot)}}} \min\{x_{ij}, 1\},$$
where $S_s$ represents the $s$-th type-S good and $R_{(s, \cdot)}$ represents the set of type-R goods with index $(s, \cdot)$.

\paragraph{The maximal Nash Welfare solution}
Let $k$ be a parameter, which will be specified later.

The \textbf{type-A agent} receives the type-O good and $k + \frac{1}{n+1}$ portion of type-P goods. This results in a total utility of $1 + nk + \frac{n}{n+1}$. Each \textbf{type-B agent} obtains $ 1 - \frac{ k}{n}$ portion of corresponding type-R good and $\frac{1}{n+1} - \frac{k}{n}$ portion of corresponding type-P good, yielding a utility of $1 - \frac{k}{n}$. The \textbf{type-C agents} receive the remaining goods, resulting in a utility of $1 + k$.

The goal is to maximize $(1 + nk + \frac{n}{n+1}) (1 - \frac{k}{n})^{n^2} (1 + k)^n$. By calculation, the optimal $k$ is at most $n^{-\frac{1}{2}}$.\footnote{First, we take the log and calculate the gradient, which is $\frac{n}{1 + nk + \frac{n}{n+1}} + \frac{n}{1 + k} - \frac{n}{1 - \frac{k}{n}}$. Note that this is a decreasing function with $k$. So, in order to show the optimal $k$ is no larger than $n^{-\frac{1}{2}}$, we only need to show when $k =n^{-\frac{1}{2}}$, $\frac{n}{1 + nk + \frac{n}{n+1}} + \frac{n}{1 + k} \leq \frac{n}{1 - \frac{k}{n}}$. This is true as $\frac{n}{1 - \frac{k}{n}} - \frac{n}{1 + k} =  \frac{(n+1) k}{1 + k - \frac{k}{n} - \frac{k^2}{n}} \geq nk = \frac{1}{k} \geq \frac{n}{1 + nk + \frac{n}{n+1}}$.} Therefore, the type-A agent receives at most $1 + \sqrt{n} + \frac{n}{n+1}$ utility in the maximal Nash welfare solution.

\paragraph{Thrifty competitive equilibrium}
At the thrifty competitive equilibrium, the prices of type-O, P, and S goods are $0$, while the prices of type-R goods are $ \frac{n+1}{n}$. In this case, the \textbf{type-A agent} will get the type-O good and type-P goods, resulting in a utility of $n+1$.
Each \textbf{type-B agent} receives $\frac{n}{n+1}$ portion of the corresponding type-R good, leading to a utility of $\frac{n}{n+1}$.
The \textbf{type-C agent} acquires $1$ portion of the corresponding type-S good and the rest of type-R goods, resulting in  a utility of $1 + \frac{n}{n+1}$.

\vspace{0.2in}
Comparing the type-A agent's utility in the two scenarios, the type-A agent achieves at most $1 + \sqrt{n} + \frac{n}{n+1}$ utility in the maximal Nash welfare solution, whereas the agent attains $n+1$ utility in the (thrifty) competitive equilibrium.
\end{example}

\subsection{Example with Leontief-free utility functions}\label{exp:leontief-free}
Not all Leontief-free utility functions satisfy property \eqref{prop:satiate}. The subsequent example illustrates a scenario where an agent can attain significantly more utility in competitive equilibrium compared to the Nash welfare maximizing allocation when agents have Leontief-free utilities.
\begin{example} \label{exp:leontief-free-satiated}
    Consider the scenario with $n$ goods and $n+1$ agents. Agent $i = 1, 2, \cdots, n$ only likes good $i$, has budget $b_i = 1$ and a linear utility function $u_i(\mx_i) = x_{ii}$. Agent $n+1$ has a substantially larger budget $b_{n+1} = B$, and a utility function $\min\{\sum_{j=1}^n x_{(n+1) j}, (n-1)(1 - \varepsilon)\}$ with a small $\varepsilon > 0$.

    There is a competitive equilibrium where the prices are 
    $$p_1 = 1 \text{ and } p_2 = p_3 = \cdot = p_n = \frac{1}{\varepsilon}\, .$$ 
    Agent $1$ obtains $x_{11} = 1$, the other agents get $x_{ii} = \varepsilon$ for $i = 2, \cdots, n$, and agent $n+1$ gets $x_{(n+1)1} = 0$ and $x_{(n+1)j} = 1 - \varepsilon$ for $j = 2, \cdots, n$. The competitive equilibrium holds if $B > \frac{(n-1)(1 - \varepsilon)}{\varepsilon}$.

    On the other hand, in the Nash welfare maximizing allocation, when $B$ is big enough, agent $n+1$ receives (approximately) $\frac{(1 - \varepsilon)(n-1)}{n}$ of each good, with the remaining goods shared among the remaining agents. Agent $1$ receives a utility of $\varepsilon + \frac{1 - \varepsilon}{n}$, which is significantly less than the utility received in the competitive equilibrium.

    In this case, agent $n+1$'s utility doesn't satisfy property \eqref{prop:satiate}. When $\mp = \mathbf{0}$, $$\mx_{n+1} = (0, 1 - \varepsilon, 1 - \varepsilon, \cdots, 1 - \varepsilon) \in \galedemand{n+1}(\mp, B)\, .$$  
    However, when $\mp = (1, \frac{1}{\varepsilon}, \frac{1}{\varepsilon}, \cdots, \frac{1}{\varepsilon})$, $$\text{for any } \mx_{n+1} \in \galedemand{n+1}(\mp, B), \ \ x_{(n+1)1} > 0,$$ which violates property \eqref{prop:satiate}.
    
Observe that we can make the budgets of all agents the same by creating $B$ copies of agent $n+1$, each with the utility function $\min\{\sum_{j=1}^n x_{(n+1) j}, (n-1)(1 - \varepsilon)/B\}$.
\end{example}

\subsection{Example with matching utility functions}
The following example shows that matching utility functions (i.e., those with a unit-demand constraint) are not Gale-substitutes in general. This also implies that utility functions with more general constraints (as in~\cite{garg2022approximating,jalota2023fisher}) need not satisfy the Gale-substitutes property either. 
\begin{example} \label{exp:matching}
  Consider a market with $3$ goods and an agent with unit budget. The agent has a \emph{matching utility function} (i.e., a unit-demand constraint), with valuations of $1$, $2.5$, and $0.7$ for goods 1, 2, and 3, respectively. The initial prices vector is $(0.3, 1.2, 0.1)$. In this case, the Gale demand is 
   $${\arg\max}_{\mx : x_1 + x_2 + x_3 \leq 1} \{\log (x_1 + 2.5 x_2 + 0.7 x_3) - (0.3 x_1 + 1.2 x_2 + 0.1 x_3)\}, $$
   which yields the solution $(0, 103/198, 95/198)$. Now, keeping the prices of the first two goods unchanged, we increase the price of the third good from $0.1$ to $0.6$. Under these new prices, the Gale demand becomes $(5/9, 4/9, 0)$. Since the demand for good $2$ decreases as the price of good $3$ increases, this shows that the utility function does not satisfy the Gale-substitute property.
\end{example}

\bibliography{sample}

\end{document}